\newtheorem{theorem}{Theorem}[section]
\newtheorem{remark}[theorem]{Remark}
\newtheorem{lemma}[theorem]{Lemma}
\newtheorem{definition}{Definition}[section]
\newtheorem{corollary}[theorem]{Corollary}
\newtheorem{claim}[theorem]{Claim}
\newtheorem{observation}[theorem]{Observation}
\newcommand{\ball}{\mathsf{B}}
\newcommand{\balls}{\mathcal{B}}
\newcommand{\clusters}{\mathcal{B}}
\newcommand{\tree}{\mathcal{T}}
\newcommand{\paths}{\Pi}
\newcommand{\pathpart}{\textsf{PathPartition}\xspace}
\renewcommand{\epsilon}{\eps}
\newcommand{\brac}[1]{\left(#1\right)}
\begin{document}

\newcommand{\algline}{
	\rule{0.5\linewidth}{.1pt}\hspace{\fill}%
	\par\nointerlineskip \vspace{.1pt}
}
\newenvironment{tbox}{\begin{tcolorbox}[
		enlarge top by=5pt,
		enlarge bottom by=5pt,
		breakable,
		boxsep=0pt,
		left=4pt,
		right=4pt,
		top=10pt,
		boxrule=1pt,toprule=1pt,
		colback=white,
		arc=-1pt,
		]
	}
	{\end{tcolorbox}}


\newenvironment{proofof}[1]{\noindent{\bf Proof of #1.}}
{\hspace*{\fill}\stopproof}

\newenvironment{properties}[2][0]
{\renewcommand{\theenumi}{#2\arabic{enumi}}
	\begin{enumerate} \setcounter{enumi}{#1}}{\end{enumerate}\renewcommand{\theenumi}{\arabic{enumi}}}

\newif\ifnocomments


\ifnocomments

\newcommand{\znote}[1]{}

\else
\newcommand{\znote}[1]{\textcolor{red}{\sc{[ZT: #1]}}}

\fi


\newcommand{\tG}{\textbf{G}}
\newcommand{\tH}{\textbf{H}}
\newcommand{\tE}{\textbf{E}'}
\newcommand{\tC}{\textbf{C}}
\newcommand{\tphi}{\bm{\phi}}
\newcommand{\tpsi}{\bm{\psi}}
\newcommand{\tSigma}{\bm{\Sigma}}
\newcommand{\tB}{\tilde B}
\newcommand{\dout}{D_{\mbox{\tiny{out}}}}
\newcommand{\notF}{\overline{F}}

\renewcommand{\P}{\mbox{\sf P}}
\newcommand{\NP}{\mbox{\sf NP}}
\newcommand{\PCP}{\mbox{\sf PCP}}
\newcommand{\ZPP}{\mbox{\sf ZPP}}
\newcommand{\DTIME}{\mbox{\sf DTIME}}
\newcommand{\opt}{\mathsf{OPT}}
\newcommand{\optcro}{\mathsf{OPT}_{\mathsf{cr}}}
\newcommand{\optcrors}{\mathsf{OPT}_{\mathsf{cnwrs}}}
\newcommand{\set}[1]{\left\{ #1 \right\}}
\newcommand{\sse}{\subseteq}
\newcommand{\B}{{\mathcal{B}}}
\newcommand{\tset}{{\mathcal T}}
\newcommand{\vset}{{\mathcal V}}
\newcommand{\uset}{{\mathcal U}}
\newcommand{\iset}{{\mathcal{I}}}
\newcommand{\pset}{{\mathcal{P}}}
\newcommand{\nset}{{\mathcal{N}}}
\newcommand{\dset}{{\mathcal{D}}}
\newcommand{\tpset}{\tilde{\mathcal{P}}}
\newcommand{\qset}{{\mathcal{Q}}}
\newcommand{\tqset}{\tilde{\mathcal{Q}}}
\newcommand{\lset}{{\mathcal{L}}}
\newcommand{\bset}{{\mathcal{B}}}
\newcommand{\tbset}{\tilde{\mathcal{B}}}
\newcommand{\aset}{{\mathcal{A}}}
\newcommand{\cset}{{\mathcal{C}}}
\newcommand{\fset}{{\mathcal{F}}}
\newcommand{\mset}{{\mathcal M}}
\newcommand{\jset}{{\mathcal{J}}}
\newcommand{\xset}{{\mathcal{X}}}
\newcommand{\wset}{{\mathcal{W}}}
\newcommand{\gset}{{\mathcal{G}}}
\newcommand{\oset}{{\mathcal{O}}}
\newcommand{\yset}{{\mathcal{Y}}}
\newcommand{\rset}{{\mathcal{R}}}
\newcommand{\I}{{\mathcal I}}
\newcommand{\hset}{{\mathcal{H}}}
\newcommand{\sset}{{\mathcal{S}}}
\newcommand{\zset}{{\mathcal{Z}}}
\newcommand{\notu}{\overline U}
\newcommand{\vol}{\operatorname{vol}}
\newcommand{\nots}{\overline S}
\newcommand{\eint}{E^{\tiny\mbox{int}}}
\newcommand{\event}{{\cal{E}}}
\newcommand{\floor}[1]{\ensuremath{\left\lfloor#1\right\rfloor}}
\newcommand{\ceil}[1]{\ensuremath{\left\lceil#1\right\rceil}}

\newcommand{\marcon}{{\mathsf{MC}}}
\newcommand{\cov}{{\mathsf{cov}}}
\newcommand{\mst}{{\mathsf{MST}}}
\newcommand{\card}[1]{|#1|}
\newcommand{\coi}{{\mathsf{COI}}}

\newcommand{\med}{\mathcal{C}}
\newcommand{\avg}{\overline{\lambda}}

\newcommand{\cover}{\textsf{cover}}
\newcommand{\eps}{\varepsilon}
\newcommand{\bfs}{\textnormal{\textsf{BFS}}}
\newcommand{\pbfs}{\textnormal{\textsf{BFS}}}
\newcommand{\lv}{\textsf{lv}}
\newcommand{\tsp}{\mathsf{TSP}}
\newcommand{\gtsp}{\textsf{GTSP}}
\newcommand{\ebt}{\tset}
\newcommand{\eb}{\textsf{EB}}
\newcommand{\optmst}{\textsf{MST}}
\newcommand{\defi}{\textsf{def}}
\newcommand{\ord}{\textsf{ord}}
\newcommand{\rc}{\textnormal{\textsf{rc}}}
\newcommand{\dist}{\textnormal{\textsf{dist}}}
\newcommand{\cost}{\textnormal{\textsf{cost}}}
\newcommand{\bw}{\textsf{bw}}
\newcommand{\local}{\textsf{Local}}
\newcommand{\pseudo}{\textsf{Pseudo-IP}}
\newcommand{\vin}{v^{\textnormal{\textsf{in}}}}
\newcommand{\vout}{v^{\textnormal{\textsf{out}}}}
\newcommand{\diam}{\textsf{diam}}
\newcommand{\expect}{\mathbb{E}}
\newcommand{\proover}{\pi_{\textsf{Overwrite}}}
\newcommand{\promst}{\pi_{\textsf{MST}}}
\newcommand{\protsp}{\pi_{\textsf{TSP}}}
\newcommand{\mstest}{\textsf{MST}_{\textsf{apx}}}
\newcommand{\tspest}{\textsf{TSP}_{\textsf{apx}}}
\newcommand{\proind}{\pi_{\textsf{Index}}}
\newcommand{\ind}{\textsf{Index}}
\newcommand{\distIND}{\mathcal{D}_{\textsf{Index}}}
\newcommand{\distMST}{\mathcal{D}_{\textsf{MST}}}
\newcommand{\ic}{\textnormal{\textsf{IC}}}
\newcommand{\cc}{\textnormal{\textsf{CC}}}
\newcommand{\tvd}[2]{\ensuremath{\Delta_{\textnormal{\texttt{TV}}}(#1,#2)}}
\newcommand{\dkl}[2]{\ensuremath{D_{\textnormal{\textsf{KL}}}(#1 \| #2)}}

\newcommand{\hel}{h}
\newcommand{\II}{I}
\newcommand{\HH}{H}

\newcommand{\RV}[1]{\mathbf{#1}}
\newcommand{\prot}{\ensuremath{\Pi}}
\newcommand{\Prot}{\ensuremath{\Pi}}
\newcommand{\findmiss}{\sf{FindBit}}
\newcommand{\overwrite}{\sf{Overwrite}}
\newcommand{\distfind}{\mathcal{D}_{\textsf{FindBit}}}
\newcommand{\distover}{\mathcal{D}_{\textsf{Overwrite}}}
\newcommand{\temp}{\textsf{temp}}
\newcommand{\IA}{\textsf{IA}}
\newcommand{\IB}{\textsf{IB}}

\newcommand{\row}{\textsf{Row}}
\newcommand{\col}{\textsf{Col}}
\newcommand{\alg}{\ensuremath{\mathsf{Alg}}\xspace}
\newcommand{\algrc}{\ensuremath{\mathsf{AlgURC}}\xspace}
\newcommand{\minkcut}{minimum weight $k$-cut\xspace}
\newcommand{\wts}{\omega}
\newcommand{\Eout}{E^{\textnormal{out}}}

\newcommand{\sunf}{\mathsf{sf}}

\begin{titlepage}
	
\title{Almost-Optimal Sublinear Additive Spanners\footnote{Extended abstract appeared in STOC 2023.}}
\author{Zihan Tan \thanks{DIMACS, Rutgers University, \href{}{zihantan1993@gmail.com. This author is supported by a grant to DIMACS from the Simons Foundation (820931).}} \and Tianyi Zhang \thanks{Tel Aviv University, \href{}{tianyiz21@tauex.tau.ac.il. This author is supported by the European Research Council (ERC) under the European Union’s Horizon 2020 research and innovation programme (grant agreement No 803118 UncertainENV).}}}

	\maketitle
	
	\thispagestyle{empty}
	
\begin{abstract}
Given an undirected unweighted graph $G = (V, E)$ on $n$ vertices and $m$ edges, a subgraph $H\subseteq G$ is a \emph{spanner} of $G$ with stretch function $f: \mathbb{R}_+ \rightarrow \mathbb{R}_+$, if for every pair $s, t$ of vertices in $V$,  $\dist_{H}(s, t)\le f(\dist_{G}(s, t))$. When $f(d) = d + o(d)$, $H$ is called a \emph{sublinear additive spanner}; when $f(d) = d + o(n)$, $H$ is called an \emph{additive spanner}, and $f(d) - d$ is usually called the \emph{additive stretch} of $H$.

As our primary result, we show that for any constant $\delta>0$ and constant integer $k\geq 2$, every graph on $n$ vertices has a sublinear additive spanner with stretch function $f(d)=d+O(d^{1-1/k})$ and $O\big(n^{1+\frac{1+\delta}{2^{k+1}-1}}\big)$ edges. When $k = 2$, this improves upon the previous spanner construction with stretch function $f(d) = d + O(d^{1/2})$ and $\tilde{O}(n^{1+3/17})$ edges [Chechik, 2013]; for any constant integer $k\geq 3$, this improves upon the previous spanner construction with stretch function $f(d) = d + O(d^{1-1/k})$ and $O\bigg(n^{1+\frac{(3/4)^{k-2}}{7 - 2\cdot (3/4)^{k-2}}}\bigg)$ edges [Pettie, 2009]. Most importantly, the size of our spanners almost matches the 
lower bound of $\Omega\big(n^{1+\frac{1}{2^{k+1}-1}}\big)$ [Abboud, Bodwin, Pettie, 2017], which holds for all compression schemes achieving the same stretch function.

As our second result, we show a new construction of additive spanners with stretch $O(n^{0.403})$ and $\tilde{O}(n)$ edges, which slightly improves upon the previous stretch bound of $O(n^{3/7+\eps})$ achieved by linear-size spanners [Bodwin and Vassilevska Williams, 2016]. An additional advantage of our spanner is that it admits a subquadratic construction runtime of $\tilde{O}(m + n^{13/7})$, while the previous construction in [Bodwin and Vassilevska Williams, 2016] requires all-pairs shortest paths computation which takes $O(\min\{mn, n^{2.373}\})$ time.
	
\end{abstract}

\end{titlepage}

\begin{spacing}{1.3}
\tableofcontents
\end{spacing}

\thispagestyle{empty}
\clearpage
\setcounter{page}{1}

\newpage

\section{Introduction}
Graph spanners are sparse subgraphs that approximately preserve pairwise shortest-path distances. Let $G = (V, E)$ be an undirected unweighted graph on $n$ vertices and let $f:\mathbb{R}_+\rightarrow\mathbb{R}_+$ be a function. We say that a subgraph $H\subseteq G$ is a spanner with \emph{stretch function} $f$ if for every pair $s, t\in V$, $\dist_H(s, t)\leq f(\dist_G(s, t))$. The research on spanners focuses on the optimal trade-offs between the stretch function $f$ and the sparsity (the number of edges) of the spanner $H$.

One typical case is that we allow $f(d)$ to be significantly greater than $d$, and such spanners are known as \emph{multiplicative} spanners. It was shown that for every integer $k\ge 1$, there always exists a subgraph $H$ with $O(n^{1+1/k})$ edges and stretch function $f(d) = (2k-1)d$ \cite{althofer1993sparse}. Furthermore, this sparsity bound is tight under the Girth Conjecture of Erd{\"o}s \cite{erdos1963extremal}.

Another typical case is that we restrict $f$ to be very close to $d$. In particular, $f(d) = d + O(1)$. There has been a line of previous work studying the sparsity of spanners with such stretch functions. When $f(d) = d+2$, it was shown that graph $G$ always has a spanner with $O(n^{3/2})$ edges \cite{aingworth1999fast}; when $f(d) = d+4$, a construction of spanner with $\tilde O(n^{7/5})$ edges was proposed in \cite{chechik2013new}; when $f(d) = d+6$, spanners with $O(n^{4/3})$ edges were known to exist by \cite{baswana2010additive}. For the lower bound side, in a recent breakthrough \cite{abboud20174}, it was proved that, for any constant $\epsilon>0$, there are graphs such that any spanner with $O(n^{4/3-\eps})$ edges has stretch $n^{\Omega(1)}$. Hence, we already have an almost complete understanding of the spanner sparsity when $f(d) = d + O(1)$.

Besides the two typical cases mentioned above, much less is known when $f$ lies in intermediate regimes. Two notable regimes studied in previous works are the \emph{sublinear additive} regime where $f(d) = d + o(d)$, and the \emph{additive} regime where $f(d) = d + o(n)$.

As for sublinear additive spanners, Thorup and Zwick  \cite{thorup2006spanners,huang2019thorup} were the first to design a nontrivial construction of sublinear additive spanners when $f(d) = d + O(d^{1-1/k})$ where $k\geq 2$ is a constant integer, and the number of edges in the spanner is bounded by $O(n^{1 + 1/k})$. This sparsity bound was later improved to $O\bigg(n^{1+\frac{(3/4)^{k-2}}{7-2\cdot(3/4)^{k-2}}}\bigg)$ in \cite{pettie2009low}. The sparsity bound for the special case where $k=2$ was subsequently improved to $\tilde O(n^{20/17})$ by \cite{chechik2013new}. These algorithms also work for a non-constant $k$, but here we only focus on the case where $k$ is a constant, as we are mainly interested in the stretch/size dependency on $d$ and $n$. On the lower bound side, Abboud, Bodwin, and Pettie \cite{abboud2018hierarchy} proved that any distance-reporting data structure (including spanners, emulators) achieving stretch $f(d) = d + O(d^{1-1/k})$ must have size at least $\Omega\bigg(n^{1+\frac{1}{2^{k+1}-1}-o(1)}\bigg)$ for each constant $k\ge 2$, which is also the best known lower bound for spanners. Thus, there still exists a large gap between sparsity upper and lower bounds for sublinear additive spanners.

For the additive regime where $f(d) = d + o(n)$, the tail term $f(d) - d$ is usually called the \emph{additive stretch}. A natural question is to study the best additive stretch that can be achieved by spanners with $\tilde{O}(n)$ edges. The first nontrivial construction was given by \cite{pettie2009low} with an additive stretch of $O(n^{9/16})$, which was improved subsequently by \cite{bodwin2015very, bodwin2021better} to $O(n^{3/7+\eps})$ for any constant $\epsilon>0$. On the negative side, the first stretch lower bound of $\Omega(n^{1/22})$ was proved in \cite{abboud20174}, and later on raised to $\Omega(n^{1/7})$ by a sequence of works \cite{huang2021lower,lu2022better,bodwin2022new}.

\subsection{Our results}
\paragraph{Sublinear additive spanners.} Our primary result is the following almost optimal bound on sublinear additive spanners.

\begin{theorem}\label{sublinear}
For any undirected unweighted graph $G = (V, E)$ on $n$ vertices, given any constant $\delta>0$ and any constant integer $k\ge 2$, there exists a sufficiently large constant $C = C(k, \delta)$, such that
every undirected unweighted graph $G$ on $n$ vertices admits a spanner $H\subseteq G$ with stretch function $f(d)=d+C\cdot d^{1-1/k}$ and  $O\bigg(n^{1+\frac{1+\delta}{2^{k+1}-1}}\bigg)$ edges.
\end{theorem}


This sparsity upper bound almost matches the previous lower bound of $\Omega\brac{n^{1+\frac{1}{2^{k+1}-1}-o(1)}}$ proved in \cite{abboud2018hierarchy}. Formally, the authors of \cite{abboud2018hierarchy} showed that there exists a constant $c = \Theta\brac{\frac{1}{k+1}}$, such that in some hard examples of input graph $G$, any spanner with stretch $f(d)=d+c\cdot d^{1-1/(k+1)}$ must have $\Omega\bigg(n^{1+\frac{1}{2^{k+1}-1}-o(1)}\bigg)$ edges. When the distance value $d$ is larger than $\brac{C/c}^{k(k+1)}$ which is a constant only dependent on $k$, the stretch function $f(d) = d + C\cdot d^{1-1/k}$ of our spanner construction is always better than the stretch $d + c\cdot d^{1 - 1/(k+1)}$ in the lower bound construction of \cite{abboud2018hierarchy}. Since our sparsity upper bound almost matches the lower bound of $\Omega\brac{n^{1+\frac{1}{2^{k+1}-1}-o(1)}}$, our construction is almost optimal.

The same lower bound in \cite{abboud2018hierarchy} also extends to the more relaxed notion of emulators which are graphs defined on the vertex set $V$ but not required to be a subgraph of $G$. For the sublinear additive stretch function $f(d) = d + O(d^{1-1/k})$, there are known emulator constructions of size $O\brac{kn^{1+\frac{1}{2^{k+1}-1}}}$ by \cite{thorup2006spanners,huang2019thorup} which matches the lower bound in \cite{abboud2018hierarchy} and ties our new bound for spanners. In fact, the emulator lower bound in \cite{abboud2018hierarchy} holds for any form of graph metric compression schemes, in the sense that any data structure that answers approximate distance queries (regardless of query time) with stretch $f(d) = d + c\cdot d^{1-1/(k+1)}$ for some constant $c < 2k^{1-1/(k+1)}$ must occupy at least $n^{1 + \frac{1}{2^{k+1}-1}-o(1)}$ bits of space. Therefore, since our spanner construction matches the lower bound against any graph metric compression scheme, as our take-home message, it means that although spanners are the most restrictive form of compression compared to emulators and distance oracles, they are already the most powerful ones.

Our construction also directly translates to $(1+\epsilon, \beta)$-mixed spanners with stretch function $f(d) = (1+\epsilon)d + \beta$ when $\beta =\frac{(k-1)^{k-1}C^k}{k^k\epsilon^{k-1}}$ by AM-GM inequality. Previous works on mixed spanners include \cite{elkin20041eps} where the authors showed that for every integer $\kappa$ and $\epsilon>0$, every $n$-vertex graph admits a $(1+\epsilon,O(\epsilon^{-1}\log \kappa)^{\log \kappa})$-spanner of size $O(n^{1+1/\kappa}\cdot O(\epsilon^{-1}\log \kappa)^{\log \kappa})$, and \cite{thorup2006spanners} where the authors constructed $\brac{1+\epsilon,O(k/\eps)^{k-1}}$-spanner of size $O\brac{k\cdot n^{1+\frac{1}{2^{k+1}-1}}}$ for any integer $k$, and \cite{pettie2009low} where the author showed the construction of $(1+\epsilon,O(\log\log n/\eps)^{\log\log n})$-spanner of size $O(n\log\log (\eps^{-1}\log\log n))$. Among these previous results, the result by \cite{thorup2006spanners} and ours provide the same form of stretch $\brac{1+\epsilon, O(1 / \epsilon^{k-1})}$ and sparsity $n^{1 + \frac{1+\delta}{2^{k+1}-1}}$ when $k$ is a constant integer, which is proved to be almost optimal by \cite{abboud2018hierarchy}.

\paragraph{Additive spanners.} Our second result is a slightly improved bound on near-linear size additive spanners upon the bound $O(n^{3/7+\eps})$ of \cite{bodwin2021better}. In addition, we show that such a spanner can be computed in subquadratic time, while previous constructions in \cite{bodwin2021better,bodwin2015very} need the computation of all-pairs shortest paths in $G$ which takes time $O(\min\{mn, n^{2.373} \})$.

\begin{theorem}\label{subquad}
For any undirected unweighted graph $G = (V, E)$ on $n$ vertices and $m$ egdges, there exists a spanner with $O(n^{0.403})$ additive stretch and $\tilde{O}(n)$ edges. Moreover, such a spanner can be computed in sub-quadratic time $\tilde{O}\big(m + n^{13/7}\big)$.
\end{theorem}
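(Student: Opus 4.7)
The plan is to combine Theorem~\ref{sublinear} with the linear-size additive spanner framework of Bodwin and Vassilevska Williams~\cite{bodwin2016better}, using our sparser sublinear spanner as the internal backbone to drive down the stretch exponent, and performing every distance computation on a sparse multiplicative spanner (rather than on $G$) to achieve the subquadratic construction time.

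As a preprocessing step, I would compute a multiplicative $(2k-1)$-spanner $G'$ of $G$ on $O(n^{1+1/k})$ edges in $\tilde{O}(m)$ time for a sufficiently large constant $k$; since the target stretch is polynomial in $n$, the multiplicative error introduced here is easily absorbed. All further computations operate on $G'$. Next, I would apply Theorem~\ref{sublinear} on $G'$ with a carefully tuned constant $k_0$ to obtain a sublinear additive spanner $H_0$ with stretch $d + O(d^{1-1/k_0})$ and $O\bigl(n^{1+(1+\delta)/(2^{k_0+1}-1)}\bigr)$ edges. Then I would run the Bodwin--Vassilevska Williams machinery on top of $H_0$: sample pivot sets at appropriate density scales and, for each pivot, add a carefully pruned BFS/preserver structure whose total edge count is bounded by $O(n)$ via a branching-vertex counting argument on shortest paths in the sparse backbone. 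The union of $H_0$ with these pivot structures is the final spanner $H$.

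For the stretch analysis, pairs at $G$-distance at most a threshold $D$ are handled directly by $H_0$ with additive error $O(D^{1-1/k_0})$, whereas longer pairs are routed through a pivot that lies on (or near) their shortest path, incurring an additive cost dominated by the pivot-to-path gap. Balancing the three parameters $k_0$, $\delta$, and the pivot density against the linear edge budget yields the exponent $0.403$; the improvement over the $3/7$ bound of~\cite{bodwin2016better} is driven by the sharper size exponent $1/(2^{k_0+1}-1)$ of Theorem~\ref{sublinear} relative to the sublinear spanner of~\cite{pettie2009low} that was plugged into~\cite{bodwin2016better}. For the runtime, once the backbone has been sparsified, the BFS computations from $\tilde{O}(n^{6/7})$ pivots on a linear-size graph take $\tilde{O}(n^{6/7}\cdot n) = \tilde{O}(n^{13/7})$ time in total, which together with the $\tilde{O}(m)$ preprocessing gives the claimed $\tilde{O}(m+n^{13/7})$ bound.

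The main obstacle will be realizing the pivot step in subquadratic time without sacrificing either the linear size or the improved additive stretch: previous constructions~\cite{bodwin2016better,bodwin2015very} identify the shortest paths to preserve by running APSP in $G$, and replacing this with computations on the sparse backbone $H_0\cup G'$ forces one to argue that the backbone already approximates $G$-shortest paths faithfully enough for both the branching-vertex counting (to keep the edge count linear) and the stretch bound to go through. A secondary difficulty is the numerical parameter optimization: obtaining \emph{some} improvement over $3/7$ is immediate from Theorem~\ref{sublinear}, but pinning down the value $0.403$ together with an $O(n)$ edge budget and an $\tilde{O}(m+n^{13/7})$ runtime requires a careful three-way balance between $k_0$, the threshold $D$, and the pivot sampling density.
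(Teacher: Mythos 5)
There is a genuine gap here: the two load-bearing ideas of the actual proof are absent, and two of your proposed steps would fail. First, the preprocessing via a multiplicative $(2k-1)$-spanner does not work: a multiplicative spanner incurs error $\Theta(k\cdot d)$ on \emph{every} pair, so for pairs at distance $\Theta(n)$ the error is $\Theta(n)$, which cannot be absorbed into an additive $O(n^{0.403})$ budget. The paper's sparsification (Lemma~\ref{preproc}) is different in kind: it keeps \emph{all} edges incident to vertices of degree at most $d$ and adds an $O(\log n)$-multiplicative spanner only to bypass the at most $O(n/d)$ high-degree vertices on any shortest path, which converts the loss into an \emph{additive} $\tilde{O}(n/d)$ term. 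Second, the claimed engine of the improvement --- plugging the sharper sublinear spanner of Theorem~\ref{sublinear} into the Bodwin--Vassilevska Williams framework --- does not produce an improvement. The ingredient BVW plant inside large balls is already a \emph{linear-size} $+O(N^{1/2})$ additive spanner; Theorem~\ref{sublinear} with $k=2$ gives the same $O(N^{1/2})$ error with $O(N^{8/7})$ edges (strictly worse as a drop-in replacement), and pushing $k$ up to reach linear size forces $k=\omega(1)$ and hence additive error $N^{1-o(1)}$. Theorem~\ref{sublinear} is in fact not used anywhere in the paper's proof of Theorem~\ref{subquad}.

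What the paper actually does is: (i) build, via the clustering framework, a \emph{subset} spanner (Lemma~\ref{subset}) that preserves distances within a terminal set $U$ up to additive $O(|U|^{3/2}n^{\eps})$ using only $2^{O(1/\eps)}n$ edges and $O(m|U|)$ time; (ii) plant such subset spanners on the \emph{boundaries} of small balls (the boundary of a ball of size $N$ and radius $R$ has only $O(N/R)$ vertices, so the error is $O((N/R)^{3/2})$, beating $N^{1/2}$ in the relevant size range); and (iii) prove a bootstrapping reduction (Lemma~\ref{lem: reduction}) in which any $+O(n^{\rho})$ linear-size spanner algorithm, run recursively inside the large balls, yields a $+O(n^{f(\rho)+\eps})$ one with $f(\rho)=\frac{2/3-\rho}{4-(19/6)\rho}$. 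The exponent $0.403$ is the fixed point $\rho^*=\frac{15-\sqrt{54}}{19}=0.4027\ldots$ of $f$, reached by iterating the reduction a constant number of times starting from $\rho=3/7+0.1$. No single-shot ``three-way balance'' of parameters lands on this value; the fixed-point iteration is essential, and your proposal has no analogue of it or of the subset-spanner subroutine. (Your runtime arithmetic happens to hit $n^{13/7}$, but in the paper this exponent arises as $|E(G')|\cdot R^{2/3}=n^{11/7}\cdot n^{2/7}$ from computing boundary subset spanners on the sparsified graph, not from $n^{6/7}$ BFS runs.)
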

\begin{remark}
	In the above theorem, if we could tolerate an $O(mn)$ construction time of the additive spanner, then the size can be made purely linear $O(n)$ with the same additive stretch $O(n^{0.403})$.
\end{remark}

\subsection{Technical overview}
The basic tool of our algorithms is a clustering algorithm in \cite{bodwin2021better}. Let $G = (V, E)$ be the input undirected unweighted graph. Roughly speaking, for any radius parameter $R$, we can decompose the graph into a set $\bset$ of balls that may share vertices, with the following properties.
\begin{itemize}
	\item (Radius) The radius of each ball in $\bset$ is roughly $R$.
	\item (Coverage) The union of all balls in $\bset$ covers the whole graph $G$.
	\item (Disjointness) The total size of the balls in $\bset$ is linear in $n$.
\end{itemize}

Next, we will describe how to utilize the above clustering to construct new sublinear additive spanners and additive spanners, respectively.

\subsubsection{Sublinear additive spanners}

Our starting point is a spanner with stretch function $f(d) = d + O(d^{1/2})$ and $\tilde{O}(n^{8/7})$ edges, which improves upon the previous sparsity bound of $\tilde{O}(n^{20/17})$ in \cite{chechik2013new}. Consider any pair of vertices $s, t\in V$, and let $\pi$ be a shortest path between them of length $D$. We apply the clustering algorithm from \cite{bodwin2021better} with radius parameter $R = D^{1/2}$ to $G$, and obtain a set $\bset$ of balls. Intuitively, by the coverage property, and recalling that every ball in $\balls$ has radius roughly $R=D^{1/2}$, we can choose a subset of $O(D / R) = O(D^{1/2})$ balls from $\bset$ whose union contains the entire shortest path $\pi$.
See Figure \ref{overview-path-partition} for an illustration.

\begin{figure}[h]
	\begin{center}
		\begin{tikzpicture}[thick,scale=0.7]
	\draw (0, 0) node[circle, draw, fill=black!50, inner sep=0pt, minimum width=6pt, label = $s$] {};
	\draw (19, 0) node[circle, draw, fill=black!50, inner sep=0pt, minimum width=6pt,label = $t$] {};
	
	\draw (3.5, 0) node[circle, draw, fill=black!50, inner sep=0pt, minimum width=6pt,label = $w_1$] {};
	\draw [line width = 0.5mm] (0.2, 0) -- (3.2, 0);
	\draw [red] (1.7, 0.5) ellipse (2.1 and 2);
	\draw (1.5, 2.4) node[red, label={$\ball(c_0, r_{0})$}]{};
	
	\draw (7.7, 0) node[circle, draw, fill=black!50, inner sep=0pt, minimum width=6pt,label = $w_2$] {};
	\draw [line width = 0.5mm] (3.8, 0) -- (7.5, 0);
	\draw [red] (5.6, 0.5) ellipse (2.4 and 2);
	\draw (5.5, 2.4) node[red, label={$\ball(c_1, r_{1})$}]{};
	
	\draw (16, 0) node[circle, draw, fill=black!50, inner sep=0pt, minimum width=6pt,label = $w_l$] {};
	\draw [line width = 0.5mm] (18.8, 0) -- (16.3, 0);
	\draw [red] (18, 0.5) ellipse (2.3 and 2);
	\draw (18, 2.4) node[red, label={$\ball(c_{l}, r_{l})$}]{};
	
	\draw (12, 0) node[circle, draw, fill=black!50, inner sep=0pt, minimum width=6pt,label = $w_{l-1}$] {};
	\draw [line width = 0.5mm] (15.7, 0) -- (12.2, 0);
	\draw [red] (14, 0.5) ellipse (2.3 and 2);
	\draw (13.7, 2.4) node[red, label={$\ball(c_{l-1}, r_{l-1})$}]{};
	
	\draw [dashed] (8, 0) -- (11.7, 0); 
\end{tikzpicture}
	\end{center}
	\caption{An illustration of a covering of a shortest path $\pi$ between $s, t$ with at most $l = O(D^{1/2})$ balls from $\bset$. For simplicity, we assume that, for each index $0\le i\le l-1$, the balls $\ball(c_i,r_i)$ and $\ball(c_{i+1},r_{i+1})$ share exactly one vertex of $\pi$, denoted by $w_{i+1}$. We denote $s = w_0$ and $t = w_{l+1}$.}\label{overview-path-partition}
\end{figure}
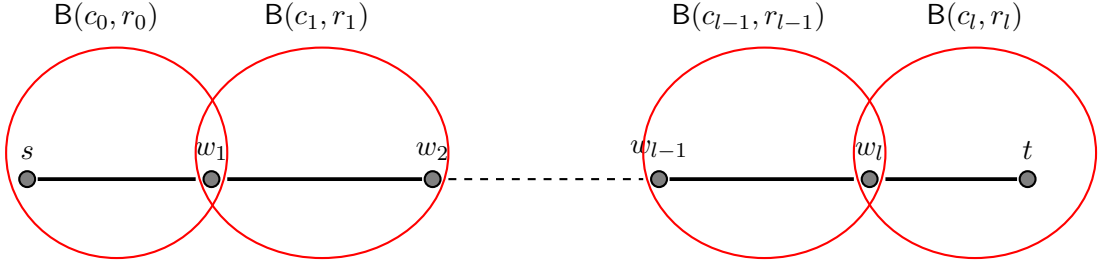

Following the notations in Figure \ref{overview-path-partition}, assume the sequence of balls divides the path into subpaths $\set{\pi[w_i, w_{i+1}]}_{0\le i\le l}$. In order to preserve the distance between $s, t$ in $G$, a simple approach is to plant, within each subgraph $G[\ball(c_i, r_i)]$, a $6$-additive spanner from \cite{baswana2010additive} with  $O(|\ball(c_i, r_i)|^{4/3})$ edges. Then for each $i$, $\dist_{H}(w_i, w_{i+1})\leq \dist_G(w_i, w_{i+1}) + 6$, so $\dist_H(s,t)-\dist_G(s,t)\le 6(l+1)=O(D^{1/2})$. If we further assume that each ball in $\balls$ contains at most $n^{3/7}$ vertices, then by the disjointness property, the union of all these $6$-additive spanners contains at most $\sum_{\ball(c, r)\in\bset}|\ball(c, r)|^{4/3} = O(n^{8/7})$ edges. So we only need to deal with large balls that contain more than $n^{3/7}$ vertices.

A natural approach to handling the large balls is taking a random subset $S\subseteq V$ of $10n^{4/7}\log n$ vertices that hits all large balls with high probability, and try preserving pairwise distances between vertices in $S$, as preserving distances among a subset is conceivably easier than the whole graph.
In fact, if this can be done, then the distance between all pairs $s, t\in V$ is also well-preserved. 
To see why this is true, assume for simplicity that $\ball(c_1, r_1)$ and $\ball(c_{l-1}, r_{l-1})$ are the first and the last large ball, respectively. Then, by construction of $S$, there exist $u\in \ball(c_1, r_1)\cap S$ and $v\in \ball(c_{l-1}, r_{l-1})\cap S$, and it is easy to verify using triangle inequality that the following path connecting $s$ to $t$ in $H$ has length at most $\dist_G(s,t)+O(R)$; the path is the concatenation of (see \Cref{overview-hitset} for an illustration):
	\begin{itemize}[leftmargin=*]
		\item a path from $s$ to $w_1$ in the $6$-additive spanner within subgraph $G[\ball(c_0, r_0)]$; and similarly a path from $t$ to $w_l$ in the $6$-additive spanner within subgraph $G[\ball(c_l, r_l)]$;
		\item a path from $w_1$ to $u$ of length at most $2R$, as we can afford to add to $H$ a breadth-first search tree of $G[\ball(c_1, r_1)]$ beforehand; similarly, a path from $w_l$ to $v$ of length at most $2R$; and
		\item the shortest path connecting $u, v$ in $H$.
	\end{itemize}

\begin{figure}[h]
	\begin{center}
		\begin{tikzpicture}[thick,scale=0.7]
	\draw (0, 0) node[circle, draw, fill=black!50, inner sep=0pt, minimum width=6pt, label = $s$] {};
	\draw (19, 0) node[circle, draw, fill=black!50, inner sep=0pt, minimum width=6pt,label = $t$] {};
	
	\draw (3.5, 0) node[circle, draw, fill=black!50, inner sep=0pt, minimum width=6pt,label = $w_1$] {};
	\draw [color=orange, style={decorate, decoration=snake}] (0.2, 0) -- (3.2, 0);
	\draw [red] (1.7, 0.5) ellipse (2.1 and 2);
	\draw (1.5, 2.4) node[red, label={$\ball(c_0, r_{0})$}]{};
	\draw (5.5, 1.5) node[circle, draw, fill=black!50, inner sep=0pt, minimum width=6pt,label = $u$] {};
	\draw [color=orange, style={decorate, decoration=snake}] (3.7, 0.2) -- (5.3, 1.3);
	
	\draw (7.7, 0) node[circle, draw, fill=black!50, inner sep=0pt, minimum width=6pt,label = $w_2$] {};
	\draw [line width = 0.5mm] (3.8, 0) -- (7.5, 0);
	\draw [red] (5.6, 0.5) ellipse (2.4 and 2);
	\draw (5.5, 2.4) node[red, label={$\ball(c_1, r_{1})$}]{};
	
	\draw (16, 0) node[circle, draw, fill=black!50, inner sep=0pt, minimum width=6pt,label = $w_l$] {};
	\draw [color=orange, style={decorate, decoration=snake}] (18.8, 0) -- (16.3, 0);
	\draw [red] (18, 0.5) ellipse (2.3 and 2);
	\draw (18, 2.4) node[red, label={$\ball(c_{l}, r_{l})$}]{};
	\draw (14, 1.5) node[circle, draw, fill=black!50, inner sep=0pt, minimum width=6pt,label = $v$] {};
	\draw [color=orange, style={decorate, decoration=snake}] (15.8, 0.2) -- (14.2, 1.3);
	
	\draw (12, 0) node[circle, draw, fill=black!50, inner sep=0pt, minimum width=6pt,label = $w_{l-1}$] {};
	\draw [line width = 0.5mm] (15.7, 0) -- (12.2, 0);
	\draw [red] (14, 0.5) ellipse (2.3 and 2);
	\draw (13.7, 2.4) node[red, label={$\ball(c_{l-1}, r_{l-1})$}]{};
	
	\draw [dashed] (8, 0) -- (11.7, 0);
	\draw [color=orange, style={decorate, decoration=snake}] (5.7, 1.5) -- (13.8, 1.5);
\end{tikzpicture}
	\end{center}
	\caption{If the balls $\ball(c_1, r_1)$ and $\ball(c_{l-1}, r_{l-1})$ are large, then we can find a short path from $s$ to $t$ drawn as the orange wavy lines.}\label{overview-hitset}
\end{figure}
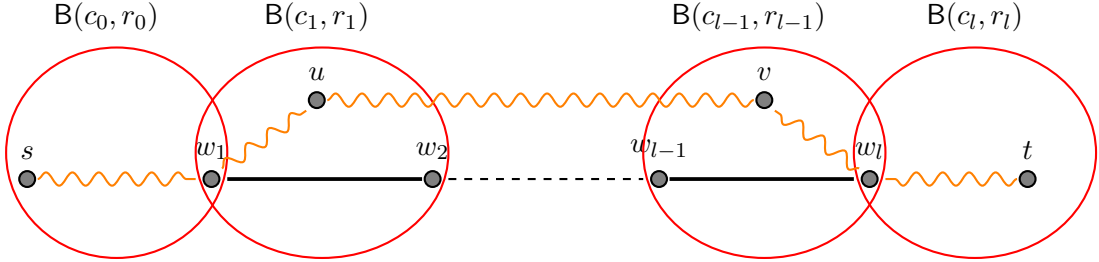

Therefore, it suffices to construct a spanner that faithfully preserves pairwise distances between vertices in $S$.
Consider now a pair $s, t$ of vertices in $S$. We proceed similarly by first finding a sequence of $O(D^{1/2})$ balls in $\bset$ that covers the shortest path between $s$ and $t$, and
then computing a sparse subgraph $H_{c_i}\subseteq G[\ball(c_i, r_i)]$ in order to preserve the distance between $w_i$ and $w_{i+1}$ to within a constant additive error.
Note that we would like that the graph $H_{c_i}$ also approximately preserves, for other pairs $s',t'\in S$, the distance between their corresponding vertices $w'_j,w'_{j+1}$.
The reason that this is easier to achieve is because there are fewer pairs $(w_i, w_{i+1})$ assigned to the ball $\ball(c_i, r_i)$ while processing all pairs in $S$, as $|S|$ itself is small. More formally, each ball $\ball(c_i, r_i)$ is associated with a set $\pset_{c_i}\subseteq V\times V$ of \emph{demand pairs}, and we want to find a spanner $H_{c_i}\subseteq G[\ball(c_i, r_i)]$ that only preserves distances between vertex pairs in $\pset_{c_i}$. The way we construct $\pset_{c_i}$ is to enumerate all pairs $s, t\in S$, find an $s$-$t$ shortest path and a set of balls that cover it, and if we have used the ball $\ball(c_i,r_i)$ in the covering, then add the corresponding pair $(w_i, w_{i+1})$ to $\pset_{c_i}$. In the end, we can show that the size of each set $\pset_{c_i}$ is at most $|S| = \tilde O(n^{4/7})$.

This special type of spanners restricted to demand pairs are called \emph{pairwise spanners}, and has been studied in \cite{cygan2013pairwise,kavitha2017new}. Applying their result in a black-box way, we can construct, for each ball $G[\ball(c_i, r_i)]$, a subgraph $H_{c_i}$ of size $O(|\ball(c_i, r_i)|\cdot |\pset_{c_i}|^{1/4})$ preserving distances between pairs in $\pset_{c_i}$ up to an additive error of $6$. Then, the total size of all $H_{c_i}$ is $\sum_{\ball(c, r)\in\bset}|\ball(c, r)|\cdot |\pset_c|^{1/4} = \tilde{O}(n^{8/7})$.

\subsubsection{Pairwise sublinear additive spanners}

The only missing component towards a sublinear additive spanner with stretch $f(d) = d + O(d^{1-1/k})$ for general $k\geq 2$ turns out to be a pairwise spanner with stretch $d + O(d^{1-1/(k-1)})$. As the previous work on pairwise spanners \cite{cygan2013pairwise,kavitha2017new} only considered stretch function $f(d) = d + \tilde O(1)$, we need to generalize their result for general $k$. Assume we are given an undirected unweighted graph $G = (V, E)$ on $n$ vertices and a set $\pset\subseteq V\times V$ of pairs, and the goal is to find a spanner $H\subseteq G$ with at most $\tilde{O}(n|\pset|^{1/2^{k+1}})$ edges, such that $\dist_H(s, t)\leq f(\dist_G(s, t))$ for all $(s, t)\in \pset$. For the purpose of this overview, we assume for simplicity that $\dist_G(s, t) = D$ for all $(s, t)\in \pset$.

The construction of $H$ is inductive on $k\geq 1$. Assume we know how to construct such spanners for $k-1$. First, apply the clustering algorithm from \cite{bodwin2021better} with radius $R = D^{1-1/k}$ to graph $G$, and obtain a set $\bset$ of balls.

\paragraph{Uniform size.} We start by considering a special case, where all balls in $\bset$ have the same size $|\ball(c, r)| = L$. 
By the disjointness property, the number of balls in $\balls$ is approximately $n/L$.
For each pair $(s, t)\in \pset$, we compute a sequence of $D^{1/k}$ balls and partition the shortest path connecting $s, t$, in a similar way as illustrated in \Cref{overview-path-partition}. 
We wish to inductively build inside each ball $G[\ball(c_i, r_i)]$ a pairwise spanner with a better stretch $d + O(d^{1-1/(k-1)})$; if this is done, then the cumulative error along the shortest path connecting $s$ and $t$ is bounded by $D^{1/k}\cdot R^{1-1/(k-1)} = D^{1-1/k}$.

In order to apply the inductive hypothesis, we will construct, for each ball $\ball(c, r)\in\bset$, a set $\pset_c$ of demand pairs. But if we add, for all $i$, the pair $(w_i, w_{i+1})$ to $\pset_{c_i}$, then the sets $|\pset_{c_i}|$ may become too large (as large as $|\pset|$) to construct a spanner for. To circumvent this, we use the approach in \cite{kavitha2017new}. Observe that, if we add each pair $(w_i, w_{i+1})$ to $\pset_{c_i}$, then not only the distance between $s, t$ is approximately preserved (in this case we call the pair $(s,t)$ \emph{settled}), but the distances between all pairs of ball centers $c_i, c_j$ are also approximately preserved. Now there are two cases.
Let $\beta$ be a parameter to be determined later.

\begin{itemize}[leftmargin=*]
	\item Case 1. There are at least $\beta\cdot (l+1)$ center pairs that were unsettled in $H$; recall that $l+1$ refers to the number of clusters as in \Cref{overview-path-partition}.
	
Then after adding each $(w_i, w_{i+1})$ as a demand pair to $\pset_{c_i}$, the total number of such center pairs would drop by $\beta\cdot (l+1)$. In other words, each new demand pair $(w_i, w_{i+1})$ contributes $\beta$ to the decrease of the number of unsettled center pairs on average.
	
	\item Case 2. There are fewer than $\beta\cdot (l+1)$ center pairs that were unsettled in $H$.
	
	In this case, we show that there is always a ``bridge'' that helps settle the pair $(s, t)$. Specifically, we can prove there are indices $x\in [1, \beta], z\in [\beta+1, l+1 - \beta], y\in [l-\beta+2, l+1]$ such that center pairs $(c_x, c_z)$ and $(c_z, c_y)$ were already settled. In this case, adding demand pairs $(w_i, w_{i+1})$ to $\pset_{c_i}$ only for $i\in [1, \beta]\cup [l-\beta+2, l+1]$ would be enough. See Figure \ref{overview-bridge} for an illustration.
\end{itemize}

To summarize, for each pair $(s, t)\in\pset$, we either add at most $2\beta$ demand pairs in total, or add a number of demand pairs such that the number of unsettled center pairs decreases by $\beta$ per pair. Therefore, the total number of demand pairs is at most $\beta |\pset| + \frac{n^2}{L^2\beta}$, which is $O\big(\frac{n|\pset|^{1/2}}{L}\big)$ if we set $\beta = \frac{n}{|\pset|^{1/2}}$, and thus the size of sets $|\pset_c|$ is bounded by $|\pset|^{1/2}$ on average. Now apply the inductive hypothesis within each ball, and the total size of all these pairwise sublinear spanners is bounded by $\tilde{O}(n|\pset|^{1/2^{k+1}})$.

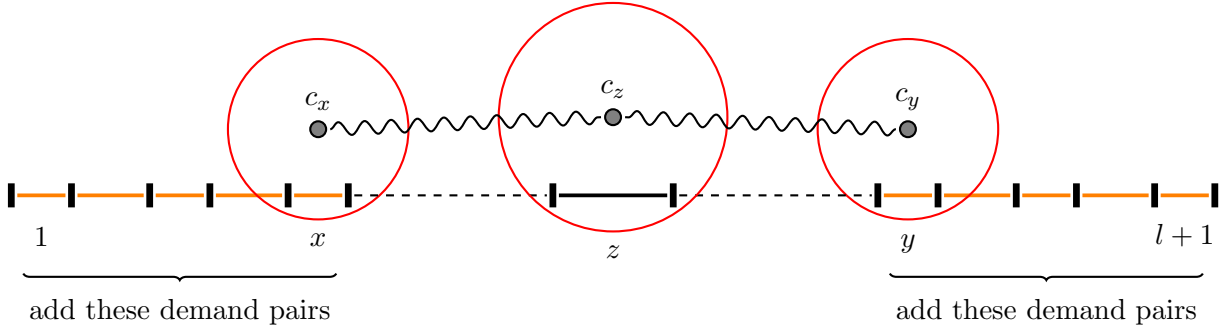
\begin{figure}[h]
	\begin{center}
		\begin{tikzpicture}[thick,scale=0.8]
	\draw [line width = .9mm] (0, 0.2) -- (0, -0.2);
	\draw [line width = .9mm] (1, 0.2) -- (1, -0.2);
	\draw [line width = 0.5mm, color=orange] (0.1, 0) -- (0.9, 0);
	\draw (0.5, -1.2) node[red, label={$1$}]{};
	
	\draw [line width = 0.5mm, color=orange] (1.1, 0) -- (2.2, 0);
	
	\draw [line width = .9mm] (2.3, 0.2) -- (2.3, -0.2);
	\draw [line width = .9mm] (3.3, 0.2) -- (3.3, -0.2);
	\draw [line width = 0.5mm, color=orange] (2.4, 0) -- (3.2, 0);
	
	\draw [line width = 0.5mm, color=orange] (3.4, 0) -- (4.5, 0);
	
	\draw [line width = .9mm] (4.6, 0.2) -- (4.6, -0.2);
	\draw [line width = .9mm] (5.6, 0.2) -- (5.6, -0.2);
	\draw [line width = 0.5mm, color=orange] (4.7, 0) -- (5.5, 0);
	\draw (5.1, -1.2) node[red, label={$x$}]{};
	\draw [red] (5.1, 1.1) ellipse (1.5 and 1.5);
	\draw (5.1, 1.1) node[circle, draw, fill=black!50, inner sep=0pt, minimum width=6pt, label = $c_{x}$] {};
	
	\draw [line width = .9mm] (20, 0.2) -- (20, -0.2);
	\draw [line width = .9mm] (19, 0.2) -- (19, -0.2);
	\draw [line width = 0.5mm, color=orange] (19.1, 0) -- (19.9, 0);
	\draw (19.5, -1.2) node[red, label={$l+1$}]{};
	
	\draw [line width = 0.5mm, color=orange] (17.8, 0) -- (18.9, 0);
	
	\draw [line width = .9mm] (17.7, 0.2) -- (17.7, -0.2);
	\draw [line width = .9mm] (16.7, 0.2) -- (16.7, -0.2);
	\draw [line width = 0.5mm, color=orange] (16.8, 0) -- (17.6, 0);
	
	\draw [line width = 0.5mm, color=orange] (15.5, 0) -- (16.6, 0);
	
	\draw [line width = .9mm] (15.4, 0.2) -- (15.4, -0.2);
	\draw [line width = .9mm] (14.4, 0.2) -- (14.4, -0.2);
	\draw [line width = 0.5mm, color=orange] (14.5, 0) -- (15.3, 0);
	\draw (14.9, -1.3) node[red, label={$y$}]{};
	\draw [red] (14.9, 1.1) ellipse (1.5 and 1.5);
	\draw (14.9, 1.1) node[circle, draw, fill=black!50, inner sep=0pt, minimum width=6pt, label = $c_{y}$] {};
	
	\draw [dashed] (5.7, 0) -- (8.9, 0);
	\draw [line width = .9mm] (9, 0.2) -- (9, -0.2);
	\draw [line width = .9mm] (11, 0.2) -- (11, -0.2);
	\draw [line width = 0.5mm] (9.1, 0) -- (10.9, 0);
	\draw (10, -1.4) node[red, label={$z$}]{};
	\draw [dashed] (11.1, 0) -- (14.3, 0);
	
	\draw [red] (10, 1.3) ellipse (1.9 and 1.9);
	\draw (10, 1.3) node[circle, draw, fill=black!50, inner sep=0pt, minimum width=6pt, label = $c_{z}$] {};
	
	\draw [style={decorate, decoration=snake}] (5.3, 1.1) -- (9.8, 1.3);
	\draw [style={decorate, decoration=snake}] (14.7, 1.1) -- (10.2, 1.3);
	
	\draw [decorate,
	decoration = {brace}] (19.8,-1.3) -- (14.6,-1.3);
	\draw (17.2, -2.5) node[label={add these demand pairs}]{};
	
	\draw [decorate,
	decoration = {brace}] (5.4,-1.3) -- (0.2,-1.3);
	\draw (2.8, -2.5) node[label={add these demand pairs}]{};
\end{tikzpicture}
	\end{center}
	\caption{In this case, we can find a ``bridge'' that allows us to travel from $c_x$ to $c_z$ and then to $c_y$ via a near-shortest path. The orange segments represent new demand pairs assigned to the balls.}\label{overview-bridge}
\end{figure}

\paragraph{Non-uniform size.} In order to provide some intuition for the general case, we consider a slightly more general case (compared with the uniform size case) where balls in $\bset$ have size either $L_1$ or $L_2$, which naturally classify balls in $\bset$ into \emph{level-$1$} balls and \emph{level-$2$} balls. Similar to the uniform size case, we process all pairs $(s, t)\in \pset$ and find the balls that partition the shortest $s$-$t$ path. Assume there are $l_1$ level-1 balls and $l_2$ level-2 balls. To extend the technique from \cite{kavitha2017new}, this time we only count the number of center pairs \emph{separately} for each level. For each $b\in \{1, 2\}$, let $\Phi_b$ be the number of unsettled level-$b$ ball center pairs, and let $\Delta\Phi_b$ be the number of unsettled level-$b$ center pairs along the $s$-$t$ path.
Similarly, we consider the following two cases.
(Below $\beta_1,\beta_2$ are to-be-determined parameters, similar to $\beta$ in the uniform case.)

\begin{itemize}[leftmargin=*]
	\item Case 1. $\Delta\Phi_b \geq \beta_b\cdot l_b$ for both $b\in \{1, 2\}$.
	
	In this case, by adding, for all $i$, the pair $(w_i, w_{i+1})$ to $\pset_{c_i}$, on average, $\Phi_1$ is decreased by $\beta_1$ per pair, and $\Phi_2$ is decreased by $\beta_2$ per pair.
	
	\item Case 2. (Without loss of generality) $\Delta\Phi_1 < \beta_1\cdot l_1$.
	
	Similar to the uniform size case, we are able to find a bridge of level-1 ball centers. Specifically, we can prove there are indices $x, y, z$ such that both pairs $(c_x, c_z)$ and $(c_z, c_y)$ are settled level-$1$ ball centers. In this case, we add demand pairs $(w_i, w_{i+1})$ to $\pset_{c_i}$ where $\ball(c_i, r_i)$ is among the first and the last $\beta_1$ level-1 balls; see Figure \ref{overview-bridge2} for an illustration.

	Now that we can travel from $c_x$ to $c_y$, we need to analyze the stretch for $s, c_x$ and $c_y, t$. The idea is to zoom into the intervals $[1, x]$ and $[y, l+1]$, and recurse. Now, since all the level-$1$ balls have already been assigned their demand pairs within these two intervals, we have now reduced to the uniform size case within those two intervals $[1, x], [y, l+1]$.
\end{itemize}

Intuitively, in the general case where all balls in $\bset$ have different sizes, we will first partition them into $O(1/\eps)$ groups according to their sizes, and then generalize the above approach for two ball sizes to multiple ball sizes. In this general setting, every time we go down by one level on the recursion tree and zoom into a smaller interval $I$, we will decrease the number of groups by one; that is, there will be one more group whose pairs in this interval $I$ are already purchased by our algorithm. The recursion tree will have size $2^{O(1/\epsilon)}$ and will eventually incur an additional factor $2^{O(k/\eps)}n^{O(k\eps)}$ in the sparsity.

\begin{figure}[h]
	\begin{center}
		\begin{tikzpicture}[thick,scale=0.8]
	\draw [line width = .9mm] (0, 0.2) -- (0, -0.2);
	\draw [line width = .9mm] (1, 0.2) -- (1, -0.2);
	\draw [line width = 0.5mm, color=orange] (0.1, 0) -- (0.9, 0);
	\draw (0.5, -1.2) node[red, label={$1$}]{};
	
	\draw [line width = 0.5mm, color=cyan] (1.1, 0) -- (2.2, 0);
	
	\draw [line width = .9mm] (2.3, 0.2) -- (2.3, -0.2);
	\draw [line width = .9mm] (3.3, 0.2) -- (3.3, -0.2);
	\draw [line width = 0.5mm, color=orange] (2.4, 0) -- (3.2, 0);
	
	\draw [line width = 0.5mm, color=cyan] (3.4, 0) -- (4.5, 0);
	
	\draw [line width = .9mm] (4.6, 0.2) -- (4.6, -0.2);
	\draw [line width = .9mm] (5.6, 0.2) -- (5.6, -0.2);
	\draw [line width = 0.5mm, color=orange] (4.7, 0) -- (5.5, 0);
	\draw (5.1, -1.2) node[red, label={$x$}]{};
	\draw [red] (5.1, 1.1) ellipse (1.5 and 1.5);
	\draw (5.1, 1.1) node[circle, draw, fill=black!50, inner sep=0pt, minimum width=6pt, label = $c_{x}$] {};
	
	\draw [line width = .9mm] (20, 0.2) -- (20, -0.2);
	\draw [line width = .9mm] (19, 0.2) -- (19, -0.2);
	\draw [line width = 0.5mm, color=orange] (19.1, 0) -- (19.9, 0);
	\draw (19.5, -1.2) node[red, label={$l+1$}]{};
	
	\draw [line width = 0.5mm, color=cyan] (17.8, 0) -- (18.9, 0);
	
	\draw [line width = .9mm] (17.7, 0.2) -- (17.7, -0.2);
	\draw [line width = .9mm] (16.7, 0.2) -- (16.7, -0.2);
	\draw [line width = 0.5mm, color=orange] (16.8, 0) -- (17.6, 0);
	
	\draw [line width = 0.5mm, color=cyan] (15.5, 0) -- (16.6, 0);
	
	\draw [line width = .9mm] (15.4, 0.2) -- (15.4, -0.2);
	\draw [line width = .9mm] (14.4, 0.2) -- (14.4, -0.2);
	\draw [line width = 0.5mm, color=orange] (14.5, 0) -- (15.3, 0);
	\draw (14.9, -1.3) node[red, label={$y$}]{};
	\draw [red] (14.9, 1.1) ellipse (1.5 and 1.5);
	\draw (14.9, 1.1) node[circle, draw, fill=black!50, inner sep=0pt, minimum width=6pt, label = $c_{y}$] {};
	
	\draw [dashed] (5.7, 0) -- (8.9, 0);
	\draw [line width = .9mm] (9, 0.2) -- (9, -0.2);
	\draw [line width = .9mm] (11, 0.2) -- (11, -0.2);
	\draw [line width = 0.5mm, color=orange] (9.1, 0) -- (10.9, 0);
	\draw (10, -1.4) node[red, label={$z$}]{};
	\draw [dashed] (11.1, 0) -- (14.3, 0);
	
	\draw [red] (10, 1.3) ellipse (1.9 and 1.9);
	\draw (10, 1.3) node[circle, draw, fill=black!50, inner sep=0pt, minimum width=6pt, label = $c_{z}$] {};
	
	\draw [style={decorate, decoration=snake}] (5.3, 1.1) -- (9.8, 1.3);
	\draw [style={decorate, decoration=snake}] (14.7, 1.1) -- (10.2, 1.3);
	
	\draw [decorate,
	decoration = {brace}] (19.8,-1.3) -- (14.6,-1.3);
	\draw (17.2, -2.5) node[label={add level-1 demand pairs}]{};
	
	\draw [decorate,
	decoration = {brace}] (5.4,-1.3) -- (0.2,-1.3);
	\draw (2.8, -2.5) node[label={add level-1 demand pairs}]{};
\end{tikzpicture}
	\end{center}
	\caption{Orange segments are covered by level-1 balls, and cyan segments are covered by level-2 balls; we only add the first and the last $\beta_1$ level-$1$ segments as new demand pairs.}\label{overview-bridge2}
\end{figure}
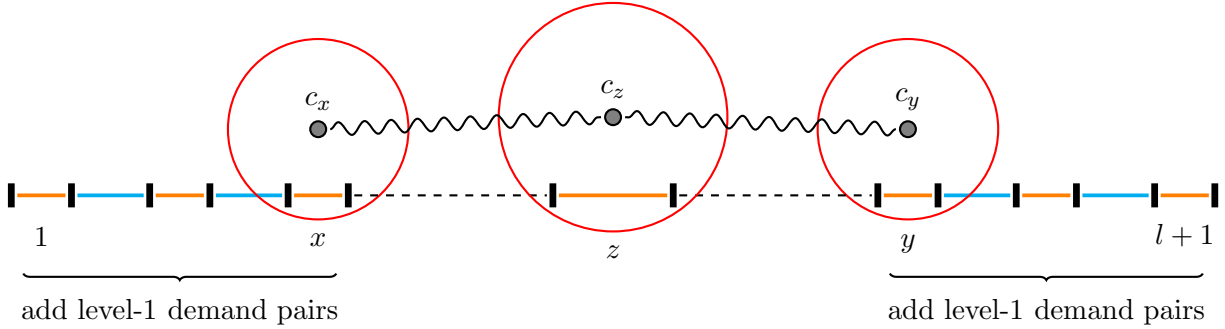

\subsubsection{Additive spanners}
We now provide an overview for our additive spanner construction. In what follows we will mainly highlight the difference between the construction by \cite{bodwin2021better} and ours.

First, we apply the clustering algorithm from \cite{bodwin2021better} with radius $R$ to $G$ and obtain a set $\bset$ of balls. 
Similar to the construction of sublinear and pairwise additive spanners, while processing unsettled pairs of vertices, we will add demand pairs to the corresponding balls that partition their shortest paths.
It has been shown that $O(\sqrt{n})$ demand pairs can be preserved exactly with $O(n)$ edges \cite{coppersmith2006sparse}, and the bound $O(\sqrt{n})$ cannot be improved, so we need to ensure that each ball $\ball$ collects at most $O(|\ball|^{1/2})$ demand pairs. On the one hand, the small balls $\ball$ with $|\ball|\le R^{4/3}$ can be handled using previous work on subset distance preservers \cite{coppersmith2006sparse}. 
On the other hand, for the large balls $\ball$ with $|\ball|\ge R^{4/3}$, 
the algorithm in \cite{bodwin2021better} placed an $O(|\ball|^{1/2})$ additive spanner (which follows from \cite{bodwin2015very}) inside them. This implies that, whenever a new path $\pi$ passes through a ball $\ball$, along with the demand pair added to $\ball$, at least $R^{4/3}\cdot (R / R^{2/3})=R^{5/3}$ 
new vertices\footnote{Say $\pi$ connects vertex $s$ to vertex $t$. Note that the $+O(\sqrt{n})$ spanners planted inside the balls settled a total of at least $R\cdot (R^{4/3}/R^{2/3})$ vertices with $s$ in the balls passed through by $\pi$, which we call the prefix of $\pi$; and similarly they also settled at least $R\cdot (R^{4/3}/R^{2/3})$ vertices with $s$ in the balls passed through by $\pi$, which we call the prefix of $\pi$. Also note that either the prefix or the suffix must be previously not settled with $\ball$ and now settled with $\ball$, as otherwise the pair $(s,t)$ was already settled and we should not take $\pi$.}
are settled with $\ball$, and so a ball $\ball$ may collect $O(n/R^{5/3})$ pairs. 
When $R=n^{3/7}$, 
a ball $\ball$ can collect at most
$O(n/R^{5/3})=O(n^{2/7})\le O\brac{(R^{4/3})^{1/2}}\le O(|\ball|^{1/2})$ demand pairs so that we can still apply the pairwise distance preserver construction from \cite{coppersmith2006sparse}.

We proceed differently. We start by showing via the above framework that any graph $G$ and any subset $S\subseteq V(G)$ has an $O(|S|^{3/2})$ additive subset spanner of linear size with respect to $S$, which is a subgraph $H\subseteq G$, such that for every pair $s,s'\in S$, $\dist_{H}(s,s')\le \dist_{G}(s,s')+O(|S|^{3/2})$. The algorithm is simple. Set $R=|S|^{3/2}$ and compute a clustering (we can ignore balls $\ball$ with $|\ball|\le (|S|^{3/2})^{4/3}=|S|^2$); iteratively go through all shortest paths connecting pairs in $S$ and buy them whenever their endpoints are unsettled. Since whenever a cluster collects a new demand pair, it is settled with a new vertex in $S$, a ball $\ball$ gets at most $|S|=O(|\ball|^{1/2})$ demand pairs. Now instead of planting a $+O(|\ball|^{1/2})$ spanner inside each ball, we plant a $+O(|S|^{3/2})$ subset spanner inside each ball with respect to its boundary vertices. The stretch between them is now $O((|\ball|/R)^{3/2})$, which is less than $|\ball|^{1/2}$ when $R^{4/3}<|\ball|<R^{3/2}$.

To see how much this improves the additive error, consider a canonical scenario where all balls have the same size $\Theta(R^{4/3})$ which is the threshold for distinguishing small/large balls. Then, a subset spanner within a ball $\ball$ incurs additive error $O(R^{1/2})$. Following the same calculation as before, whenever a new path $\pi$ passes through a ball $\ball$, along with the demand pair added to $\ball$, at least $R^{4/3}\cdot (R / R^{1/2})=R^{11/6}$ new vertices will be settled with $\ball$, and so $\ball$ may collect at most $O(n / R^{11/6})$ pairs. If we take $R = n^{0.4}$, a ball $\ball$ can collect at most $O(n/R^{11/6})=O(n^{4/15})\le O\brac{(R^{4/3})^{1/2}}\le O(|\ball|^{1/2})$ demand pairs so that we can still apply the pairwise distance preserver construction from \cite{coppersmith2006sparse}. In general, we are not lucky enough that all balls have size $\Theta(R^{4/3})$, so with some further utilization of the subset spanner, we end up with a slightly worse bound of $O(n^{0.403})$.

\subsection{Organization}
We start with preliminaries in \Cref{sec: prelim}. We provide a construction for pairwise additive spanners in \Cref{sec: pairwise}, which will be a building block in the proof of \Cref{sublinear} which appears in \Cref{sec: sublinear spanner}.
Next, we show the construction of subset spanners in \Cref{sec: subset}, which will be a crucial subroutine used in the algorithm of \Cref{subquad}. Lastly, we provide the proof of \Cref{subquad} in \Cref{sec: subq}.

\section{Preliminaries}
\label{sec: prelim}

By default, all logarithms are to the base of $2$, and all graphs are undirected and unweighted.

Let $G = (V, E)$ be a graph. 
For a subset $S\subseteq V$ of vertices, we define $\vol_G(S)=\sum_{v\in S}\deg_G(v)$, where $\deg_G(v)$ is the degree of $v$ in $G$.
For any pair $u, v\in V(G)$, let $\dist_G(u, v)$ be the shortest-path distance between $u$ and $v$ in $G$. For any $c\in V$ and $r>0$, we define the ball $\ball_G(c, r)$ as the set of vertices in $G$ that are at distance at most $r$ from $c$, namely $\ball_G(c, r)=\set{v\in V\mid \dist_G(c,v)\le r}$; $r$ is called the \emph{radius} of the ball, and we define the \emph{boundary} of the ball $\ball_G(c, r)$ as $\ball_G^=(c, r)=\set{v\in V\mid \dist_G(c,v)= r}$. We will sometimes omit the subscript $G$ in the above notations when it is clear from context. 
Let $\balls$ be a collection of balls. We say that a vertex $v$ is \emph{covered} by $\balls$ if $v$ belongs to some ball in $\balls$.
We will use the following lemma, whose proof is a simple implementation of the breadth-first search (BFS) algorithm, and is omitted.

\begin{lemma}\label{bottleneck}
Given a graph $G$, a vertex $c\in V(G)$, and two integers $0<r_1 < r_2<r$, we can find an integer $r_1< d\le r_2$, such that $|\ball^=(c, d)\cup\ball^=(c, d+1)|\leq \frac{2\cdot |\ball(c, r)|}{r_2-r_1}$, in time $O(\vol(\ball(c, r)))$.
\end{lemma}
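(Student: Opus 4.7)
The plan is to prove this via a simple averaging (pigeonhole) argument on the BFS layers rooted at $c$, combined with an explicit BFS-based algorithm for the constructive part.

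First I would observe that for distinct values of $d$, the boundary sets $\ball^=(c,d)$ are pairwise disjoint, since a vertex has a unique distance from $c$. Hence for each $d$ in the range $r_1 < d \le r_2$, we have the exact equality
\[
|\ball^=(c,d) \cup \ball^=(c,d+1)| = |\ball^=(c,d)| + |\ball^=(c,d+1)|.
\]
Next I would sum this quantity over all $r_1 < d \le r_2$. Each layer $\ball^=(c,d')$ for $d' \in \{r_1+1, \ldots, r_2+1\}$ is counted at most twice (once as the ``$d$''-term and once as the ``$d+1$''-term), so the total is at most $2 \sum_{d'=r_1+1}^{r_2+1} |\ball^=(c,d')|$. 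Since $r_2 < r$ (so $r_2+1 \le r$), all these layers are contained in $\ball(c,r)$, giving
\[
\sum_{d=r_1+1}^{r_2}\bigl(|\ball^=(c,d)| + |\ball^=(c,d+1)|\bigr) \le 2\cdot|\ball(c,r)|.
\]
Since this sum has $r_2-r_1$ terms, by averaging at least one index $d$ in the range satisfies the desired bound $|\ball^=(c,d) \cup \ball^=(c,d+1)| \le \frac{2|\ball(c,r)|}{r_2-r_1}$.

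For the algorithmic part I would simply run a truncated BFS from $c$ up to depth $r$, which takes $O(\vol(\ball(c,r)))$ time, and record the size of each layer $\ball^=(c,d')$ for $d' \le r$. Then I would scan through $d = r_1+1, \ldots, r_2$ in $O(r_2 - r_1) = O(\vol(\ball(c,r)))$ time and return the index $d$ minimizing $|\ball^=(c,d)| + |\ball^=(c,d+1)|$; by the averaging argument this minimum is at most the average, which is the claimed bound. There is no real obstacle here — the only minor point to be careful about is the boundary condition $r_2+1 \le r$, which follows from the integrality assumption $r_2 < r$, ensuring the containment of all invoked layers inside $\ball(c,r)$.
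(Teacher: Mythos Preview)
Your proof is correct and matches the approach the paper indicates (the paper omits the proof, stating only that it is ``a simple implementation of the breadth-first search (BFS) algorithm''); the pigeonhole argument on consecutive BFS layers followed by a truncated BFS to depth $r$ is exactly the intended route. One tiny remark: the scan over $d=r_1+1,\ldots,r_2$ is not automatically $O(\vol(\ball(c,r)))$ when the component of $c$ is much smaller than $r$, but this is trivially handled by stopping at the first empty layer (which yields a valid $d$ with value $0$).
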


\paragraph{Clustering in almost-linear time.}
One of the building blocks in the construction of $+O(n^{3/7+\eps})$ spanner in \cite{bodwin2021better} is a clustering procedure, which takes the input graph and computes a collections of balls with certain coverage and disjointness properties.
We will also use this clustering procedure. However, the running time of algorithm for computing such a clustering in \cite{bodwin2021better} is a large polynomial in $n$. In order to design a sub-quadratic algorithm for additive spanners, we provide an almost-linear time algorithm for computing a clustering with slightly different parameters.

\begin{lemma}[Almost-linear time algorithm for Lemma 13 in \cite{bodwin2021better}]
\label{clustering}
There is an algorithm, that, given any undirected unweighted graph $G = (V,  E)$ on $n$ vertices and $m$ edges, any parameter $\eps > 0$, and any integer $R>0$, computes in time $O(mn^{\eps}/\eps)$ a collection $\balls$ of balls in $G$, such that
\begin{itemize}
\item the radius of each ball $\ball(c,r)\in \balls$ satisfies that $R\leq r \leq 2^{10/\eps}\cdot R$; 
\item all vertices in $V$ are covered by $\balls$;
\item the following coverage properties hold:
\begin{itemize}
\item $\sum_{\ball(c, r)\in\balls}|\ball(c, r/2)| = O(n/\eps)$;
\item for each $\ball(c, r)\in\balls$, $|\ball(c, 4r)|\le n^{\eps}\cdot |\ball(c, r/2)|$, so $\sum_{\ball(c, r)\in\balls}|\ball(c, 4r)| = O(n^{1+\eps}/\eps)$;
\item $\sum_{\ball(c, r)\in\balls}\vol(\ball(c, 4r)) = O(m\cdot n^{\eps}/\eps)$.
\end{itemize}
\end{itemize}
\end{lemma}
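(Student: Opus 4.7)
The plan is a simple greedy ball-growing procedure. Initialize $V_{\text{uncov}} = V$. While $V_{\text{uncov}} \ne \emptyset$, pick any $c \in V_{\text{uncov}}$, and consider the geometric sequence of radii $r_i = R \cdot 8^i$ for $i = 0, 1, 2, \ldots$ I would choose the smallest $i$ for which \emph{both} $|\ball(c, 4r_i)| \le n^{\eps} \cdot |\ball(c, r_i/2)|$ and $\vol(\ball(c, 4r_i)) \le n^{\eps} \cdot \vol(\ball(c, r_i/2))$ hold, set $r = r_i$, add $\ball(c, r)$ to $\balls$, and delete all vertices of $\ball(c, r)$ from $V_{\text{uncov}}$. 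This immediately yields coverage, and each vertex becomes a center at most once.

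For the radius bound, whenever the stopping condition fails at level $i$ at least one of the two ratios $|\ball(c, r_{i+1}/2)|/|\ball(c, r_i/2)|$ or $\vol(\ball(c, r_{i+1}/2))/\vol(\ball(c, r_i/2))$ exceeds $n^{\eps}$, since $r_{i+1}/2 = 4r_i$. Because $|\ball(\cdot)| \le n$ and $\vol(\ball(\cdot)) \le 2m \le n^2$, the number of failures of the size-condition is strictly less than $1/\eps$ and the number of failures of the volume-condition is strictly less than $2/\eps$, so termination occurs at some $i = O(1/\eps)$ and $r \le R \cdot 8^{O(1/\eps)} \le 2^{10/\eps} R$.

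The heart of the argument is the overlap bound: every vertex $v$ lies in $\ball(c, r_c/2)$ for at most $O(1/\eps)$ centers $c$. Suppose $v \in \ball(c, r_c/2) \cap \ball(c', r_{c'}/2)$ with $c$ picked before $c'$. Since $c'$ was uncovered when picked, $c' \notin \ball(c, r_c)$, so $\dist(c, c') > r_c$; but the triangle inequality gives $\dist(c, c') \le r_c/2 + r_{c'}/2$, forcing $r_{c'} > r_c$. Hence the radii of centers whose inner ball contains $v$ are strictly increasing in the order they are picked, and since radii take at most $O(1/\eps)$ distinct values in $\{R \cdot 8^i\}$, only $O(1/\eps)$ centers can contain $v$ in their inner ball.

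Combining these ingredients would then yield the claimed bounds: $\sum_{\ball(c,r) \in \balls} |\ball(c, r/2)| = O(n/\eps)$ by summing vertex incidences, and similarly $\sum \vol(\ball(c, r/2)) = O(m/\eps)$ by summing over edges (each endpoint lies in $O(1/\eps)$ inner balls), after which the stopping criterion promotes both into $\sum |\ball(c,4r)| = O(n^{1+\eps}/\eps)$ and $\sum \vol(\ball(c,4r)) = O(mn^{\eps}/\eps)$ at a multiplicative cost of $n^{\eps}$. For the runtime, performing the BFS incrementally as $i$ grows, the work per center is dominated by its largest BFS, i.e.\ $O(\vol(\ball(c, 4r)))$, and summing over centers gives the target $O(mn^{\eps}/\eps)$; maintaining $V_{\text{uncov}}$ as a doubly-linked list keeps the center-selection overhead to $O(1)$ per center. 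I expect the main obstacle to be verifying the overlap claim carefully, together with the bookkeeping needed to check both the size and volume conditions on the fly without redoing BFS work, while the rest of the argument is a straightforward charging.
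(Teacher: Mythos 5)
Your proposal is correct and follows essentially the same route as the paper: a greedy ball-growing procedure around uncovered centers, a geometric radius search whose stopping rule directly enforces the per-ball expansion bound, and the same overlap argument (an uncovered later center forces its radius to exceed that of any earlier ball whose half-radius ball it meets, so each vertex lies in $O(1/\eps)$ inner balls). Your choice of ratio $8$ so that $r_{i+1}/2 = 4r_i$ even makes the telescoping for the radius bound slightly cleaner than the paper's factor-$4$ version; no gaps.
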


\begin{remark}
	Compared to the original statement in \cite{bodwin2021better}, their runtime is $O(mn)$, but they have a better bound on sizes. More specifically, they have bounds $\sum_{\ball(c, r)\in\balls}|\ball(c, r/2)| = O(n)$ and $\sum_{\ball(c, r)\in\balls}|\ball(c, 4r)| = O(n^{1+\eps})$ which are smaller than our bounds by a factor of $\epsilon$.
\end{remark}

The proof of \Cref{clustering} is deferred to \Cref{apd: Proof of clustering}.

\paragraph{Consistent paths and distance preservers.} Let $\pi$ be a path and let $x, y$ be two vertices of $\pi$. We denote by $\pi[x, y]$ the subpath of $\pi$ between $x$ and $y$, and we denote by $|\pi|$ the number of edges in $\pi$; we can also define notations $\pi(x, y), \pi(x, y], \pi[x, y)$ in the natural way. Let $\Pi$ be a collection of paths. We say that $\Pi$ is \emph{consistent}, if for any pair $\pi_1, \pi_2\in \Pi$, the intersection between $\pi_1$ and $\pi_2$ is a (possibly empty) subpath of both $\pi_1$ and $\pi_2$.


Let $\pset\subseteq V\times V$ be a set of pairs of vertices in $G$. A subgraph $H\subseteq G$ is a \emph{distance preserver of $G$ with respect to $\pset$}, if $\dist_H(s, t) = \dist_{G}(s, t)$ holds for every $(s, t)\in \pset$. 
We will use the following previous results on distance preservers and $+6$ pairwise additive spanners.

\begin{lemma}[\cite{coppersmith2006sparse}]\label{consist}
	Let $G = (V, E)$ be a graph on $n$ vertices, let $\pset\subseteq V\times V$ be a set of pairs of its vertices, and let $\Pi$ be any consistent collection of paths in $G$ that contains, for each pair $(s,t)\in \pset$, a path $\pi_{s, t}$ connecting $s$ to $t$. Then $|E(\Pi)|=|E(\bigcup_{(s, t)\in \pset}\pi_{s, t})|=O(n+\sqrt{n}\cdot |\pset|)$.
\end{lemma}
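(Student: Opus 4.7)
The plan is to follow the classical branching argument of Coppersmith and Elkin, adapted to the setting of an arbitrary consistent path system (rather than only shortest-path systems). Write $H=\bigcup_{(s,t)\in\pset}\pi_{s,t}$ and $p=|\pset|$; the goal is $|E(H)|\le O(n+\sqrt{n}\cdot p)$.

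First, I would extract the combinatorial consequence of consistency. For each vertex $v$ and each path $\pi\in\Pi$ through $v$, define the \emph{signature of $\pi$ at $v$} as the ordered pair of $\pi$-incident edges at $v$, using a null marker on one side if $v$ is an endpoint of $\pi$. Consistency then forces the following dichotomy: if two paths $\pi,\pi'\in\Pi$ both pass through $v$ but have distinct signatures at $v$, then $v$ must be an endpoint of the subpath $\pi\cap\pi'$. Equivalently, along each individual path $\pi_{s,t}$, the only vertices where another path can branch away from $\pi_{s,t}$ are intersection endpoints with that other path.

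Second, I would set up a charging scheme by processing the paths $\pi_1,\ldots,\pi_p$ incrementally. When $\pi_i$ is added to $H_{i-1}=\bigcup_{j<i}\pi_j$, its new edges decompose into maximal subpaths of $\pi_i$ whose edges lie entirely outside $H_{i-1}$. By the dichotomy above, every endpoint of such a new subpath is either an endpoint of $\pi_i$, or a vertex where $\pi_i$ has just diverged from some earlier $\pi_j$ with $j<i$, in which case it is one of the two endpoints of the intersection $\pi_i\cap\pi_j$.

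Third, I would split each new subpath into a \emph{short head} of length at most $\sqrt{n}$ at each end and a \emph{long middle}. The two short heads contribute at most $2\sqrt{n}$ edges per new subpath, and charging their endpoints to intersection endpoints of pairs of paths gives a global contribution of $O(n+\sqrt{n}\cdot p)$ summed over all $i$. For the long middle parts, I would argue via a second use of consistency that they form a tree-like subgraph: any two long middles on distinct paths can coincide only on a contiguous subpath, and contracting shared middles yields a graph whose internal branching is bounded by the pair-endpoint count, so the union of long middles contains only $O(n)$ edges.

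The main obstacle is making this short-head/long-middle decomposition quantitatively tight so that the final count balances exactly at the $\sqrt{n}$ threshold without picking up extra $\log$ factors; this is precisely the balancing step at the heart of Coppersmith and Elkin's original argument, and the rest of the proof is routine bookkeeping.
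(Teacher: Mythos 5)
The paper does not actually prove this lemma; it imports it as a black box from Coppersmith and Elkin, so the only meaningful comparison is with their argument, and your reconstruction neither matches it nor closes on its own. Your first step (the consistency dichotomy: if two paths both pass through $v$ with distinct incident-edge signatures there, then $v$ is an endpoint of their common subpath) is correct, and it is indeed the only place consistency is needed. But the quantitative core fails in two places. First, the short-head charge: when $\pi_i$ is added, the number of maximal new subpaths is governed by how many earlier paths share an edge with $\pi_i$, so over all $i$ the total number of new subpaths --- equivalently, of divergence endpoints, which you charge to intersection endpoints of \emph{pairs} of paths --- is $\Theta(|\pset|^2)$ in the worst case, not $O(|\pset|)$. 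Charging $2\sqrt n$ edges to each therefore yields $O(\sqrt n\,|\pset|^2)$, far from $O(\sqrt n\,|\pset|)$. Second, the claim that the union of long middles has $O(n)$ edges cannot be right: long middles are edge-disjoint across paths (every edge of $E(\Pi)$ is ``new'' exactly once), and in tight instances $|E(\Pi)|=\Theta(\sqrt n\,|\pset|)\gg n$, so most edges must live in the long segments; nothing forces a forest structure on them. The closing sentence deferring to ``the balancing step at the heart of Coppersmith and Elkin's original argument'' hides the entire proof, and their argument in fact has a different shape.

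The actual proof is a global double count rather than an incremental charge. Call $(\set{\pi_1,\pi_2},v)$ a \emph{branching event} if both paths contain $v$ with distinct incident-edge sets at $v$; your dichotomy shows each unordered pair of paths admits at most two branching events (the two endpoints of their common subpath), so there are $O(|\pset|^2)$ in total. Conversely, if a vertex $v$ has $d(v)\ge 3$ incident edges in $E(\Pi)$, then a minimal subfamily of paths covering these edges has at least $d(v)/2$ members with pairwise distinct edge-sets at $v$, producing $\Omega(d(v)^2)$ branching events at $v$. Hence $\sum_{v:\,d(v)\ge 3}d(v)^2=O(|\pset|^2)$, and Cauchy--Schwarz over at most $n$ vertices gives $\sum_{v:\,d(v)\ge 3}d(v)=O(\sqrt n\,|\pset|)$, while vertices of degree at most $2$ contribute $O(n)$; summing degrees yields $|E(\Pi)|=O(n+\sqrt n\,|\pset|)$. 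If you want to retain an incremental flavor you would need a potential function that reproduces this Cauchy--Schwarz step; the plain $\sqrt n$ head/middle split does not.
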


It is also implicit in \cite{coppersmith2006sparse} that $|E(\Pi)| = O(n\sqrt{|\pset|})$ which will not be used in in our algorithm.

We use the following corollary of \Cref{consist}, whose proof is a straightforward implementation of the BFS algorithm and the standard edge-weight perturbation technique, and is omitted.

\begin{corollary}
\label{cor: BFS consistent}
There is an algorithm, that given a graph $G$ on $n$ vertices and $m$ edges, and a subset $S$ of its vertices, in time $O(m\cdot |S|)$, computes a consistent collection $\Pi$ of paths that contains, for each pair $s, t\in S$, a shortest path $\pi_{s,t}$ connecting $s$ to $t$ in $G$, such that $|E(\Pi)|=|E(\bigcup_{s, t\in S}\pi_{s, t})|=O(n+\sqrt{n}\cdot|S|^2)$.
\end{corollary}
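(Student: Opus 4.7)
My plan is to apply \Cref{consist} after producing a consistent collection of shortest paths via a standard edge-weight perturbation, which makes shortest paths in the perturbed graph unique and computable source-by-source via BFS.

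First, I would perturb the edge weights of $G$ so that, in the resulting weighted graph $\tilde G$, (i) every minimum-weight $s$-$t$ path in $\tilde G$ is also a shortest $s$-$t$ path in the unweighted graph $G$, and (ii) the minimum-weight $s$-$t$ path in $\tilde G$ is unique for every pair of vertices. A standard way is to assign each edge $e$ a distinct perturbation $\eps_e$ chosen so that the cumulative perturbation on any simple path is $o(1)$ while different simple paths receive different cumulative perturbations (for instance, by taking $\eps_e$ to be distinct powers of two divided by a sufficiently large normalizer, so that the perturbation sum recovers the edge-subset as a binary string).

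Second, for each source $s\in S$ I would compute the shortest-path tree of $\tilde G$ rooted at $s$ via a tie-broken BFS on $G$. Specifically, run BFS from $s$ on $G$ level by level, and maintain at each discovered vertex $v$ a cumulative perturbation value $\sigma(v)$; once level $d$ is fully identified, for each newly discovered vertex $v$ examine all of its level-$(d-1)$ neighbors $u$ and set $v$'s parent to the one minimizing $\sigma(u)+\eps_{(u,v)}$, storing this minimum as $\sigma(v)$. A simple induction on $d$ shows that the resulting tree is exactly the shortest-path tree of $\tilde G$ rooted at $s$. Each edge is examined a constant number of times and each comparison is $O(1)$, so a single tree is built in $O(m)$ time, giving total running time $O(m\cdot|S|)$ over all $|S|$ sources.

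Third, for each pair $s,t\in S$ let $\pi_{s,t}$ be the unique $s$-to-$t$ path in the tree rooted at $s$, and set $\Pi=\{\pi_{s,t}\}_{s,t\in S}$. Consistency follows from uniqueness: if two paths $\pi_{s_1,t_1}$ and $\pi_{s_2,t_2}$ share two distinct vertices $u,v$, then each subpath $\pi_{s_i,t_i}[u,v]$ must itself be the unique minimum-weight $u$-$v$ path in $\tilde G$ (otherwise replacing it with a cheaper $u$-$v$ segment would contradict the uniqueness of $\pi_{s_i,t_i}$ as the minimum-weight $s_i$-$t_i$ path). Hence the two subpaths coincide, so $\pi_{s_1,t_1}\cap\pi_{s_2,t_2}$ is a contiguous subpath of both, and $\Pi$ is consistent. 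Applying \Cref{consist} with $\pset=S\times S$ then yields $|E(\Pi)|=O(n+\sqrt{n}\cdot|S|^2)$, completing the proof. The main subtlety to be careful about is that the perturbation must resolve ties simultaneously for every pair of vertices, not merely within a single BFS tree; once this global uniqueness is established, the cross-source exchange argument in the last paragraph is immediate.
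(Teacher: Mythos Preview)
Your proposal is correct and matches the paper's intended approach exactly: the paper omits the proof but states that it ``is a straightforward implementation of the BFS algorithm and the standard edge-weight perturbation technique,'' which is precisely what you do (perturb to make shortest paths unique, run one BFS per source in $S$, then invoke \Cref{consist} with $|\pset|=|S|^2$). One small caveat worth noting: with your deterministic powers-of-two perturbation the cumulative values can be $\Theta(m)$-bit numbers, so the ``$O(1)$ per comparison'' claim is not literally true in the word-RAM model; this is routinely ignored in this literature (and can be fixed by random $O(\log n)$-bit perturbations or lexicographic tie-breaking), so it does not affect the correctness of the argument.
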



\begin{lemma}[\cite{kavitha2017new}]\label{pairwise-spanner}
	There is an efficient algorithm, that, given any graph $G = (V, E)$ and any set $\pset\subseteq V\times V$ of pairs, computes a subgraph $H \subseteq G$ with $|E(H)|=O(n|\pset|^{1/4})$, such that for every pair $(s,t)\in \pset$, $\dist_H(s,t)\le \dist_G(s,t)+6$.
\end{lemma}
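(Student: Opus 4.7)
The plan is to follow the clustering-plus-distance-preserver framework that underlies the classical $+6$ all-pairs additive spanner of $O(n^{4/3})$ edges, tuned for the pairwise setting so that the sparsity scales with $|\pset|^{1/4}$ rather than $n^{1/3}$. The idea is that a $+6$ error budget permits each $(s,t)\in\pset$ to take a few short detours through cluster centers; as long as we exactly preserve the center-to-center distances that actually arise from $\pset$, the additive stretch falls out almost automatically.

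First I would fix a degree threshold $\tau$, to be balanced later, and buy every edge incident to a vertex of degree at most $\tau$, at total cost $O(n\tau)$. For high-degree vertices, a standard greedy hitting-set argument supplies a center set $R\subseteq V$ with $|R|=\tilde{O}(n/\tau)$ such that every high-degree vertex has a neighbor in $R$; assigning each high-degree vertex to an arbitrary neighboring center yields a clustering $\{C_r\}_{r\in R}$ of radius~$1$, and a BFS tree of each $C_r$ rooted at $r$ is added to $H$ at an additional cost of $O(n)$. Using a globally consistent tie-breaking rule on edges, fix for every $(s,t)\in\pset$ a shortest path $\pi_{s,t}$ in $G$. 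Walking along $\pi_{s,t}$, each maximal subpath whose edges were not bought in the first step lies between two cluster centers $r,r'$ and can be replaced by a shortest $r$-to-$r'$ path in $G$ at an additive cost of $+2$ per jump (the two cluster-radius hops). Absorbing analogous detours at the endpoints $s$ and $t$, the resulting $s$-to-$t$ walk has length at most $\dist_G(s,t)+6$, provided that for every demanded center pair $(r,r')$ the true distance is preserved exactly in $H$.

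Let $\qset\subseteq R\times R$ collect all such center pairs. To preserve them I would invoke \Cref{consist} on a consistent family of $r$-to-$r'$ shortest paths, at cost $O(|R|+\sqrt{|R|}\cdot|\qset|)$. The main obstacle, and the technical heart of the argument, is to control $|\qset|$ so that, after balancing $\tau$, the total edge count is $O(n|\pset|^{1/4})$. A naive count puts one new demand per maximal missing subpath per $(s,t)\in\pset$, which is much too wasteful. The right control comes from a potential/settling analysis (in the spirit of the Case~1/Case~2 dichotomy described around \Cref{overview-bridge} in the sublinear overview): showing that center pairs induced by different $\pi_{s,t}$'s can be reused, so that $|\qset|$ grows only mildly with $|\pset|$, and that the associated collection of shortest paths can indeed be chosen to form a consistent family. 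Making this sharing argument fully rigorous is the step I would import carefully from \cite{kavitha2017new} rather than redo from scratch.
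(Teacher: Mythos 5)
This lemma is not proved in the paper at all: it is imported verbatim as a black-box result from \cite{kavitha2017new}, so the ``intended'' treatment here is simply the citation. Your attempt to reconstruct the argument is therefore held to the standard of an actual proof, and as written it has two problems.

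First, there is a concrete gap in the stretch accounting. You replace \emph{each} maximal missing subpath of $\pi_{s,t}$ by a detour through the two incident cluster centers at a cost of $+2$ per jump. A shortest path can contain many maximal missing subpaths, so this accumulates to $+2\ell$ for $\ell$ jumps, not $+6$. The correct argument makes only a constant number of detours: in the all-pairs $+6$ construction one routes from $s$ along bought edges to the first relevant cluster, takes a single exactly-preserved center-to-center path to the last relevant cluster, and then continues to $t$; in the pairwise setting of \cite{kavitha2017new} the path-buying dichotomy (buy the path if its cost is at most a constant times its value, otherwise exhibit a cheap detour through already-settled cluster pairs) is precisely what guarantees that a constant number of $+2$ detours suffices. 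Without one of these mechanisms the claimed $+6$ bound does not follow. Second, you explicitly defer the control of $|\qset|$ --- the counting argument that makes the final bound $O(n|\pset|^{1/4})$ rather than the all-pairs $O(n^{4/3})$ or the weaker $O(n^{2/3}|\pset|^{2/3})$ one gets from a naive one-demand-per-pair count --- to the cited paper. Since that counting is the entire content of the lemma beyond the classical $+6$ spanner, the proposal is a plan rather than a proof. Given that the paper itself only cites the result, the cleanest fix is to do the same; if you do want a self-contained argument, you need to repair the detour accounting and actually carry out the settling/potential analysis.
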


\section{Pairwise Sublinear Additive Spanners}
\label{sec: pairwise}

In this section, we prove the following \Cref{pairwise-sublinear}, which will serve as a building block for \Cref{sublinear}. 
Throughout this section, we use the following stretch function $f$: for any parameter $\epsilon > 0$ and any integer $k\geq 1$, $f_{k, \epsilon}(d) = d + 2^{30k/\epsilon}d^{1-1/k}$.

\begin{lemma}\label{pairwise-sublinear}
For any undirected unweighted graph $G = (V, E)$ on $n$ vertices, any collection $\pset\subseteq V\times V$ of pairs of its vertices, any integer $k\geq 1$ and parameter $\epsilon\in (0, 0.1)$, there is a subgraph $H\subseteq G$ with $O(2^{2k/\epsilon}n^{1+10k\epsilon}|\pset|^{1 / 2^{k+1}})$ edges, such that for every pair $(s, t)\in \pset$, $\dist_H(s, t)\leq f_{k, \epsilon}(\dist_G(s, t))$.
\end{lemma}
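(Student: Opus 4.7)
The proof proceeds by induction on $k$. For the base case $k=1$, the stretch function becomes $f_{1,\epsilon}(d)=d+2^{30/\epsilon}$, and since $2^{30/\epsilon}\ge 6$, Lemma~\ref{pairwise-spanner} applied directly to $(G,\pset)$ yields a subgraph $H$ with $O(n|\pset|^{1/4})$ edges whose stretch on $\pset$ is at most $+6$; this is well within the required bound $O(2^{2/\epsilon}n^{1+10\epsilon}|\pset|^{1/4})$. For the inductive step, fix $k\ge 2$ and assume the statement for $k-1$. I first bucket $\pset$ by distance into $O(\log n)$ groups where $\dist_G(s,t)\in[D,2D)$ for some dyadic $D$, build a spanner per group, and take the union; the extra $\log n$ factor is absorbed into $n^{10k\epsilon}$. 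Henceforth I fix a single bucket with distance scale $D$.

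Within this bucket, invoke \Cref{clustering} on $G$ with radius parameter $R=\Theta(D^{1-1/k})$ and stretch parameter $\eps$, obtaining a ball collection $\balls$. Further partition $\balls$ into $K=O(k/\eps)$ \emph{size levels} $\balls_1,\ldots,\balls_K$, where level-$b$ balls have sizes in $[n^{b\eps},n^{(b+1)\eps})$. I will maintain, for each ball $\ball\in\balls$, a demand-pair set $\pset_\ball\sse V(\ball)\times V(\ball)$, and at the end invoke the inductive hypothesis on each $(G[\ball(c,4r)],\pset_\ball)$ with parameter $k-1$. The global spanner $H$ also includes a BFS tree of each $G[\ball(c,4r)]$ rooted at $c$ (contributing $\sum_\ball |\ball(c,4r)|=O(n^{1+\eps}/\eps)$ edges) so that intra-ball detours of length $O(R)$ are always available.

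The core step is the demand-pair assignment. Process pairs $(s,t)\in\pset$ in arbitrary order. For a pair not yet settled by the current $H$, cover the $s$-$t$ shortest path by a sequence of balls $\ball(c_0,r_0),\ldots,\ball(c_l,r_l)$ meeting at boundary vertices $w_1,\ldots,w_l$ exactly as in Figure~\ref{overview-path-partition}, with $l=O(D/R)=O(D^{1/k})$. Maintain, for each level $b$, a potential $\Phi_b$ counting unsettled pairs of level-$b$ ball centers (where ``settled'' means the stretch guarantee $f_{k,\eps}$ already holds in $H$). Choose thresholds $\beta_b$ balancing potential-drop against demand-pair cost; roughly $\beta_b=|\balls_b|/|\pset|^{1/2}$. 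Letting $l_b$ denote the number of level-$b$ balls on the current covering and $\Delta\Phi_b$ the number of unsettled center pairs among them, there are two cases. If $\Delta\Phi_b\ge \beta_b\cdot l_b$ for \emph{every} level $b$, then add $(w_i,w_{i+1})$ to $\pset_{\ball(c_i,r_i)}$ for all $i$; this drops each $\Phi_b$ by at least $\beta_b l_b$, so amortized, each new demand pair contributes $\beta_b$ to the drop of $\Phi_b$ at its own level. Otherwise pick the smallest level $b^\ast$ violating the inequality; by a pigeonhole argument analogous to \cite{kavitha2017new} one can locate a \emph{bridge}: level-$b^\ast$ indices $x\le \beta_{b^\ast}$, $z\in[\beta_{b^\ast}+1,l_{b^\ast}-\beta_{b^\ast}]$, $y\ge l_{b^\ast}-\beta_{b^\ast}+1$ with both center pairs $(c_x,c_z),(c_z,c_y)$ already settled in $H$. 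Add demand pairs to $\pset_{\ball(c_i,r_i)}$ only for $i$ in the first and last $\beta_{b^\ast}$ level-$b^\ast$ balls, and then recurse the argument on the prefix and suffix segments, where only levels $<b^\ast$ remain uncontrolled. Iterating through levels from $K$ downward, one concludes that the total demand pairs assigned per $(s,t)$ is $O(\sum_b \beta_b)$, plus an amortized charge to potential drops, giving $\sum_\ball |\pset_\ball|=O\bigl(|\pset|^{1/2}\cdot \sum_b |\balls_b|/\beta_b\bigr)+O(K\cdot |\pset|\cdot \max_b\beta_b)$. With the balanced choice of $\beta_b$ this yields $|\pset_\ball|\le |\pset|^{1/2}$ on average per ball at each level.

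Applying the inductive hypothesis with $k-1$ to each $(G[\ball(c,4r)],\pset_\ball)$ produces a subgraph of size $O\bigl(2^{2(k-1)/\eps}|\ball(c,4r)|^{1+10(k-1)\eps}|\pset_\ball|^{1/2^{k}}\bigr)$ with stretch $f_{k-1,\eps}$ there. Summing over $\balls$ via Hölder's inequality and the bound $\sum_\ball |\ball(c,4r)|=O(n^{1+\eps}/\eps)$ from \Cref{clustering} gives the claimed $O(2^{2k/\eps}n^{1+10k\eps}|\pset|^{1/2^{k+1}})$ edge count, once the factors of $K=O(k/\eps)$ are absorbed into $2^{2k/\eps}$. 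For stretch, the inductive call on each ball $\ball(c_i,r_i)$ contributes at most $2^{30(k-1)/\eps}(2^{10/\eps}R)^{1-1/(k-1)}$ additive error on the subpath $\pi[w_i,w_{i+1}]$; summing over $O(D^{1/k})$ subpaths yields total error $O(D^{1/k}\cdot R^{1-1/(k-1)}\cdot 2^{30(k-1)/\eps+10(k-1)/\eps})\le 2^{30k/\eps}D^{1-1/k}$ by the choice $R=\Theta(D^{1-1/k})$, and the bridging detours (through BFS trees and already-settled center pairs) contribute only $O(R\cdot\beta_{b^\ast})$ additional error per bridge, which the same bound absorbs.

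The chief obstacle will be the multi-level bridging argument: isolating a bridge at the lowest violated level while guaranteeing that the recursive prefix/suffix segments preserve the inductive structure (in particular, that higher-level balls within those segments have already absorbed their demand pairs) requires careful case analysis to avoid double-charging demand pairs and to keep $\Phi_b$ monotone. A secondary subtlety is the bookkeeping that turns the geometric size buckets into a clean Hölder-style summation; in particular, one must verify that the exponent $1/2^{k+1}$ arises correctly from one additional square-root split on top of the inductive $1/2^k$ exponent, and that the $n^{\eps}$ slackness in \Cref{clustering} accumulates to exactly $n^{10k\eps}$ over $k$ recursion levels.
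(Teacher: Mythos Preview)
Your overall strategy matches the paper's: induction on $k$, distance bucketing, clustering with radius $R=\Theta(D^{1-1/k})$, ball-size levels, and a bridge/potential argument to control demand-pair growth. However, there is a genuine gap: you attempt to handle \emph{all} size levels uniformly via the bridge argument, and this fails for the largest balls.

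Concretely, consider balls with $|\ball|>n/\sqrt{|\pset|}$. There are at most $O(n^{\eps}\sqrt{|\pset|})$ such balls, so your threshold $\beta_b=|\balls_b|/\sqrt{|\pset|}$ drops below $1$ at the top levels. The bridge argument then only yields $\sum_{\ball\in\balls_b}|\pset_\ball|=O(K|\pset|)$, giving average $|\pset_\ball|$ of order $|\pset|/|\balls_b|$ rather than $\sqrt{|\pset|}$. Plugging into the inductive size bound, a single ball of size $\Theta(n)$ receiving $|\pset_\ball|=\Theta(|\pset|)$ demand pairs contributes $\Theta\big(n^{1+10(k-1)\eps}|\pset|^{1/2^k}\big)$ edges, which exceeds the target $n^{1+10k\eps}|\pset|^{1/2^{k+1}}$ whenever $|\pset|^{1/2^{k+1}}>n^{10\eps}$; for instance, take $|\pset|=\Theta(n^2)$ and $\eps<1/(10\cdot 2^k)$. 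So the claimed ``$|\pset_\ball|\le |\pset|^{1/2}$ on average'' is false at these levels, and the H\"older/Jensen summation does not close.

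The paper fixes this with a step you omit entirely: before the bridge argument, it samples a random hitting set $S$ of size $O(\sqrt{|\pset|}\log n)$ that meets every large ball, and runs a preliminary path-buying pass over all pairs in $S\times S$ at the current distance scale. This directly forces $|\pset_c|\le |S|=O(\sqrt{|\pset|}\log n)$ for every ball and guarantees that $S$--$S$ distances are already preserved in $H_D$. In the subsequent bridge argument over pairs in $\pset$, large-ball (level-$0$) indices are never assigned further demand pairs; instead the stretch across them is certified by detouring through the $S$-vertices sitting inside the first and last large balls on the path. Only the small-ball levels $1,\ldots,\lceil 1/\eps\rceil$ participate in the tree/bridge recursion.

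Two smaller points. The number of size levels is $\lceil 1/\eps\rceil$, not $O(k/\eps)$; the factor $k$ enters only through accumulating $n^{O(\eps)}$ over $k$ recursion depths. And the additive error from a single bridge is $O(R)$ (BFS-tree detours) plus the ``tight'' threshold on the already-settled center pair, not $O(R\cdot\beta_{b^\ast})$; the total error grows geometrically as $2^{\text{tree depth}}$ times these quantities, which is the origin of the $2^{\Theta(k/\eps)}$ stretch constant.
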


\subsection{Preparation and a subroutine}

We prove \Cref{pairwise-sublinear} by an induction on $k$. The base case (when $k=1$) immediately follows from \Cref{pairwise-spanner}. We assume now that \Cref{pairwise-sublinear} is correct for $1,\ldots, k-1$.
We will design an algorithm that computes a pairwise additive spanner required in \Cref{pairwise-sublinear}. At a high level, our algorithm can be viewed as a combination of the clustering algorithm (\Cref{clustering}) from \cite{bodwin2021better} and the path-buying schemes from \cite{kavitha2017new}. 


For each integer $D\in \set{1,2,2^2,\ldots,2^{\floor{\log n}}}$, we will construct a subgraph $H_D\subseteq G$, such that for all pairs $(s,t)\in \pset$ with $D\le \dist_G(s, t) <2D$, $\dist_{H_D}(s, t)\leq \dist_G(s, t) + 2^{30k/\epsilon}D^{1-1/k}$ holds.
We will then let $H=\bigcup_{0\le i\le \floor{\log n}}H_{2^i}$ to finish the construction.

We now describe the construction of the subgraph $H_D$.
We first apply the algorithm from Lemma \ref{clustering} to graph $G$ with parameters $R=D^{1-1/k}$ and $\epsilon$; for convenience, we will assume $D^{1-1/k}$ is an integer. Let $\clusters$ be the collection of balls we get.
We will also iteratively construct, for each ball $\ball(c,r)\in \balls$, a set $\pset_c\subseteq \ball(c, 4r)\times\ball(c, 4r)$ of pairs, which is initially empty.
Over the course of the algorithm, we maintain the graph $H_D$ as the union of the following edges over all balls $\ball(c, r)\in \clusters$:
\begin{enumerate}[(i)]
	\item a BFS tree $T_c$ that is rooted at $c$ and spans all vertices in $G[\ball(c, 4r)]$;
	\item a pairwise spanner of $G[\ball(c, 4r)]$ with respect to the set $\pset_c$ of pairs, with stretch function $f_{k-1, \epsilon}$ and size $O(2^{2k/\epsilon}|\ball(c, 4r)|^{1+10k\epsilon} |\pset_c|^{1 / 2^{k+1}})$, whose existence is guaranteed by the inductive hypothesis. 
	When the sets $\set{\pset_c}$ change during our construction algorithm, the graph $H_D$ evolves with them.
\end{enumerate}

In order to construct the sets $\set{\pset_c}$, we will take a path-buying approach. Specifically, we will iteratively find a short path that contains relatively few new edges and ``settles'' many pairs of clusters (by connecting them in a near-optimal way). As the construction of $H_D$ is recursive, we will distribute the task of ``buying this path'' to the clusters in $\bset$ as in \cite{bodwin2021better}, which requires us to first chop up a shortest path into short segments. For this, we need the following subroutine called $\pathpart$, which was initially proposed in \cite{bodwin2021better}.

\paragraph{Subroutine \pathpart.} The input to subrountine \pathpart is a shortest path $\pi$ connecting a pair $s,t$ of vertices in a graph $G$ and a collection $\balls$ of balls that covers all vertices in $G$, and the output is a partitioning of path $\pi$ as the sequential concatenation of its subpaths $\pi = \alpha_1\circ\alpha_2\circ\cdots \circ\alpha_l$, such that, for each $1\le i\le l$, if we denote by $s_i, t_i$ the endpoints of $\alpha_i$ (so $\alpha_i=\pi[s_i,t_i]$), then there exists a ball $\ball(c_i, r_i)\in \clusters$, such that $s_i\in \ball(c_i, r_i)$ and $t_i\in \ball(c_i, 2r_i)$, and we say that the ball $\ball(c_i, r_i)$ \emph{hosts} the subpath $\alpha_i$.


We start by directing $\pi$ from $s$ to $t$ and setting $s_1=s$. Since $s_1$ is covered by $\balls$, there exists a ball in $\bset$ that contains $s_1$, and we designate this ball as $\ball(c_1,r_{1})$. We then find the \textbf{last} vertex on $\pi$ that lies in $\ball(c_1,2r_{1})$, and designate it as $t_1$. The first subpath is then defined to be $\alpha_1=\pi[s_1,t_1]$. We then set $s_2=t_1$ and repeat the process to find subpaths $\alpha_2,\ldots,\alpha_l$ until all edges are included in some subpath; that is, for a general index $i\geq 2$, if $s_i\neq t$, then find a ball $\ball(c_i, r_i)\in\bset$ containing $s_i$, and define $t_i\in \pi(s_i, t]\cap \ball(c_i, 2r_i)$ to be the vertex on $\pi$ which is closest to $t$. After that, if $t_i\neq t$, assign $s_{i+1} = t_i$ and repeat.
See Figure \ref{path-partition} for an illustration.

\begin{figure}[h]
	\begin{center}
		\begin{tikzpicture}[thick,scale=0.7]
	\draw (0, 0) node[circle, draw, fill=black!50, inner sep=0pt, minimum width=6pt, label = $s$] {};
	\draw (18, 0) node[circle, draw, fill=black!50, inner sep=0pt, minimum width=6pt,label = $t$] {};
	
	\draw (2, 0) node[circle, draw, fill=black!50, inner sep=0pt, minimum width=6pt,label = $t_1$] {};
	\draw [line width = 0.5mm] (0.2, 0) -- (1.8, 0);
	\draw (0.6, -1) node[red, label={$\alpha_1$}]{};
	\draw [red] (0, 0.5) ellipse (2.3 and 2);
	\draw (0, 2.4) node[red, label={$\ball(c_1, 2r_{1})$}]{};
	
	\draw (5, 0) node[circle, draw, fill=black!50, inner sep=0pt, minimum width=6pt,label = $t_2$] {};
	\draw [line width = 0.5mm] (2.3, 0) -- (4.8, 0);
	\draw (3, -1) node[red, label={$\alpha_2$}]{};
	\draw [red] (3, 0.5) ellipse (2.3 and 2);
	\draw (3, 2.4) node[red, label={$\ball(c_2, 2r_{2})$}]{};
	
	\draw (8, 0) node[circle, draw, fill=black!50, inner sep=0pt, minimum width=6pt,label = $t_3$] {};
	\draw [line width = 0.5mm] (5.3, 0) -- (7.8, 0);
	\draw (6.5, -1) node[red, label={$\alpha_3$}]{};
	\draw [red] (6, 0.5) ellipse (2.3 and 2);
	\draw (6, 2.4) node[red, label={$\ball(c_3, 2r_{3})$}]{};
	
	\draw[dotted] (3.7, 1.2) -- (2.5, -6);
	\draw[dotted] (8.3, 1.2) -- (9.5, -6);
	\draw [fill=orange,orange] (6, -7) ellipse (2 and 2);
	\draw (6, -10) node[red, label={$\ball(c_3, r_{3})$}]{};
	\draw (4.5, -7) node[circle, draw, fill=black!50, inner sep=0pt, minimum width=6pt,label = $t_2$] {};
	\draw (9, -7) node[circle, draw, fill=black!50, inner sep=0pt, minimum width=6pt,label = $t_3$] {};
	\draw [red] (6, -7) ellipse (3.3 and 3.3);
	\draw (6, -11.5) node[red, label={$\ball(c_3, 2r_{3})$}]{};
	\draw (6, 2.4) node[red, label={$\ball(c_3, 2r_{3})$}]{};
	\draw [line width = 0.5mm] (4.7, -7) -- (8.8, -7);
	\draw (6.5, -8) node[red, label={$\alpha_3$}]{};
	\draw [dashed] (2, -7) -- (4.3, -7);
	\draw [dashed] (9.2, -7) -- (11.5, -7);

	\draw (16, 0) node[circle, draw, fill=black!50, inner sep=0pt, minimum width=6pt,label = $t_l$] {};
	\draw [line width = 0.5mm] (17.8, 0) -- (16.2, 0);
	\draw (17.4, -1) node[red, label={$\alpha_l$}]{};
	\draw [red] (17.5, 0.5) ellipse (2.3 and 2);
	\draw (18, 2.4) node[red, label={$\ball(c_l, 2r_{l})$}]{};
	
	\draw (13, 0) node[circle, draw, fill=black!50, inner sep=0pt, minimum width=6pt,label = $t_{l-1}$] {};
	\draw [line width = 0.5mm] (15.8, 0) -- (13.2, 0);
	\draw (14.5, -1) node[red, label={$\alpha_{l-1}$}]{};
	\draw [red] (15, 0.5) ellipse (2.3 and 2);
	\draw (14.5, 2.4) node[red, label={$\ball(c_{l-1}, 2r_{l-1})$}]{};
	
	\draw [dashed] (8.2, 0) -- (12.8, 0); 
\end{tikzpicture}
	\end{center}
	\caption{A partitioning of the shortest path $\pi$ from $s$ to $t$ into subpaths $\alpha_1, \alpha_2, \ldots, \alpha_l$ by balls in $\clusters$. Note that in general $G[\ball(c_i, 2r_i)]$ does not necessarily contain the entire subpath $\alpha_i$.}\label{path-partition}
\end{figure}
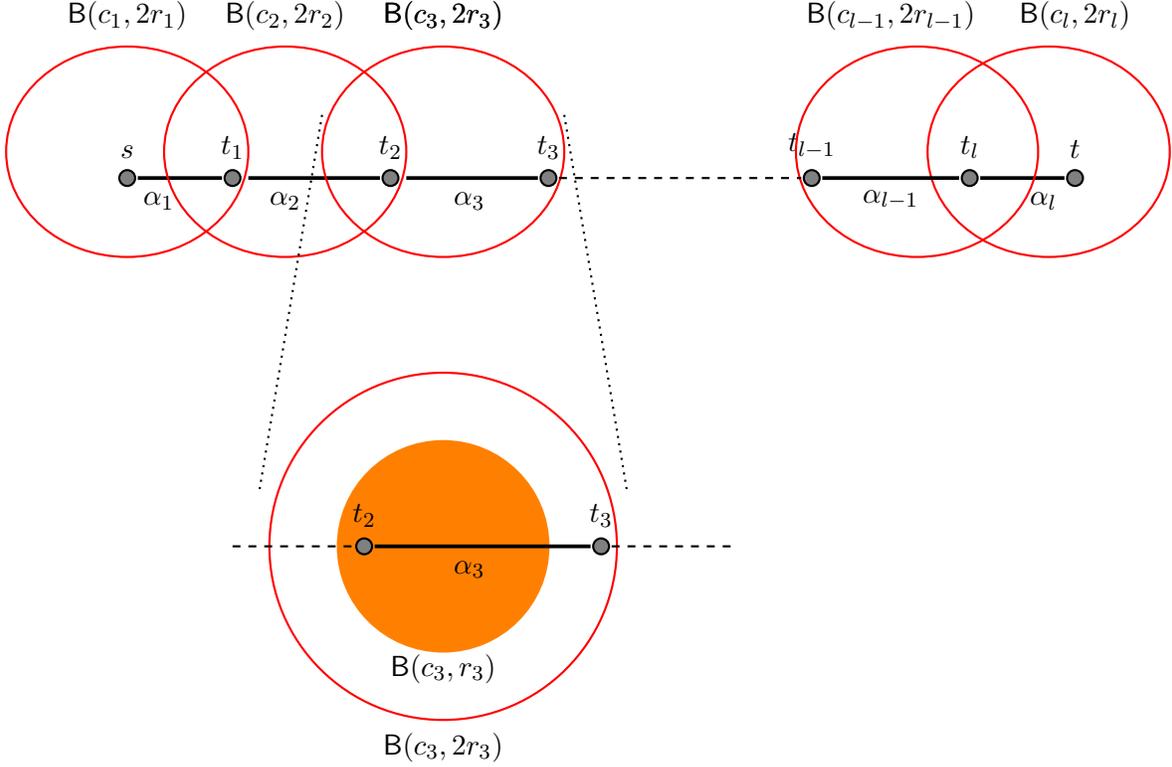

We prove the following simple properties of the subroutine.

\begin{observation}\label{different-balls}
	All balls in $\set{\ball(c_i, r_i)}_{1\leq i\leq l}$ are distinct.
\end{observation}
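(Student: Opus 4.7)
The plan is to argue by contradiction. Suppose there exist indices $i<j$ with $\ball(c_i,r_i)=\ball(c_j,r_j)$, and denote the common ball by $\ball(c,r)$. The strategy is to leverage the defining property of $t_i$ -- that it is the vertex of $\pi(s_i,t]$ farthest along $\pi$ that still lies in $\ball(c_i,2r_i)$ -- in order to produce a vertex of $\pi$ lying strictly later than $t_i$ but still inside $\ball(c_i,2r_i)$, which would be a contradiction.

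I will split into two cases based on whether $j\ge i+2$ or $j=i+1$. The case $j\ge i+2$ is the easy one: since each subpath $\alpha_k=\pi[s_k,t_k]$ has strictly positive length (the algorithm defines $t_k\in \pi(s_k,t]$, which excludes $s_k$ itself), the chain $s_{i+1}=t_i,\ s_{i+2}=t_{i+1},\ldots,\ s_j=t_{j-1}$ marches strictly forward along $\pi$. Hence $s_j$ lies strictly after $t_i$, yet $s_j\in \ball(c_j,r_j)=\ball(c,r)\subseteq \ball(c_i,2r_i)$, which directly contradicts the maximality of $t_i$.

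The boundary case $j=i+1$ is what I expect to be the main subtlety, because $s_{i+1}=t_i$ does not a priori lie strictly after $t_i$. Here I will use that if the algorithm actually produces $\alpha_{i+1}$, then $t_i\ne t$, so $\pi$ has a successor vertex $w$ immediately after $t_i$. From $t_i=s_{i+1}\in \ball(c_{i+1},r_{i+1})=\ball(c,r)$ I obtain $\dist_G(c,t_i)\le r$, whence by the triangle inequality $\dist_G(c,w)\le r+1\le 2r$, invoking the radius lower bound $r\ge R\ge 1$ guaranteed by \Cref{clustering}. Therefore $w\in \pi(s_i,t]\cap \ball(c_i,2r_i)$ lies strictly beyond $t_i$, again contradicting the choice of $t_i$ as the farthest such vertex, and completing the proof.
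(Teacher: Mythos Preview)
Your proof is correct, but it takes a more roundabout route than the paper's. The paper's one-line argument simply observes that $t_j\in \ball(c_j,2r_j)=\ball(c_i,2r_i)$, and since $t_j\in\pi(s_j,t]$ with $s_j=t_{j-1}$, the $t_k$'s march strictly forward along $\pi$, so $t_j$ lies strictly beyond $t_i$ for any $j>i$---immediately contradicting the maximality of $t_i$. No case split is needed.

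Your detour arises because you reach for $s_j$ (or, in the case $j=i+1$, the successor vertex $w$) as the witness inside $\ball(c_i,2r_i)$. This forces the case distinction, since $s_{i+1}=t_i$ does not lie strictly beyond $t_i$, and then obliges you to invoke the triangle inequality and the radius lower bound $r\ge R\ge 1$. All of this is sound, but unnecessary: the vertex $t_j$ itself is always strictly beyond $t_i$ (even when $j=i+1$, since $t_j\in\pi(s_j,t]$ excludes $s_j=t_i$) and always lies in $\ball(c_j,2r_j)$ by construction, so it serves as a uniform witness.
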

\begin{proof}
	Suppose otherwise that a ball is selected twice by \pathpart, namely $(c_i, r_i) = (c_j, r_j)$ for some $1\leq i<j\leq l$. Then, by the algorithm description we have $t_j\in \ball(c_j, 2r_j) = \ball(c_i, 2r_i)$, which contradicts the choice of $t_i$ which is the closest-to-$t$ vertex in $\ball(c_i, 2r_i)$ on $\pi$.
\end{proof}

\begin{observation}
For each $1\le i\le l$, the subpath $\alpha_i$ is entirely contained in the ball $\ball(c_i,4r_i)$.
\end{observation}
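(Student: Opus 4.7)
The plan is to use the triangle inequality twice. Since $\alpha_i = \pi[s_i, t_i]$ is a subpath of the shortest path $\pi$, it is itself a shortest path from $s_i$ to $t_i$ in $G$, so $|\alpha_i| = \dist_G(s_i, t_i)$. By construction of \pathpart, we have $s_i \in \ball(c_i, r_i)$ and $t_i \in \ball(c_i, 2r_i)$, which means $\dist_G(c_i, s_i) \le r_i$ and $\dist_G(c_i, t_i) \le 2r_i$. The triangle inequality therefore yields
\[
|\alpha_i| \;=\; \dist_G(s_i, t_i) \;\le\; \dist_G(s_i, c_i) + \dist_G(c_i, t_i) \;\le\; r_i + 2r_i \;=\; 3r_i.
\]

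Now, for an arbitrary vertex $v$ on $\alpha_i$, applying the triangle inequality again gives
\[
\dist_G(c_i, v) \;\le\; \dist_G(c_i, s_i) + \dist_G(s_i, v) \;\le\; r_i + |\alpha_i| \;\le\; r_i + 3r_i \;=\; 4r_i,
\]
so $v \in \ball(c_i, 4r_i)$. Since this holds for every vertex of $\alpha_i$, the entire subpath lies inside $\ball(c_i, 4r_i)$. There is no real obstacle here; the observation is a routine consequence of the fact that $\pi$ is a shortest path together with the containment guarantees on the endpoints built into the definition of \pathpart.
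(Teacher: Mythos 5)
Your proof is correct and follows essentially the same route as the paper's: both bound $\dist_G(c_i,v)$ for a vertex $v$ on $\alpha_i$ by chaining the triangle inequality through $s_i$, using $s_i\in\ball(c_i,r_i)$, $t_i\in\ball(c_i,2r_i)$, and the fact that $\alpha_i$ is a subpath of a shortest path, to get the bound $4r_i$. No issues.
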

\begin{proof}
From the description of the subroutine, for each $1\le i\le l$, $s_i\in \ball(c_i, r_i)$ and $t_i\in \ball(c_i, 2r_i)$. Therefore, for every vertex $u\in \pi[s_i, t_i]$, by triangle inequality: $$\dist_{G}(c_i, u)\leq \dist_{G}(c_i, s_i) + \dist_{G}(s_i, t_i)\leq \dist_{G}(c_i, s_i) + \dist_G(c_i, s_i) + \dist_{G}(c_i, t_i)\leq 4r_i,$$
which implies that $u\in \ball(c_i, 4r_i)$.
\end{proof}

\begin{observation}\label{obs}
Let $R$ be the minimum radius of the balls in $\balls$. Then $l \leq |\pi|/R+1$. 
\end{observation}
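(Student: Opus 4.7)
The plan is to show that every subpath $\alpha_i$ produced by \pathpart, except possibly the last one, contains at least $R$ edges, which immediately yields $|\pi|=\sum_{i=1}^{l}|\alpha_i|\ge (l-1)R$ and hence $l\le |\pi|/R+1$.

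To prove that $|\alpha_i|\ge R$ for every $1\le i\le l-1$, I would fix such an index $i$ and exploit the maximality that defines $t_i$. Since $i<l$, the endpoint $t_i$ is not the final vertex $t$ of $\pi$, so there is a vertex $u$ that appears immediately after $t_i$ on $\pi$. By the choice of $t_i$ as the vertex on $\pi\cap \ball(c_i,2r_i)$ closest to $t$, the vertex $u$ lies outside $\ball(c_i,2r_i)$, so $\dist_G(c_i,u)\ge 2r_i+1$. Combined with $\dist_G(c_i,s_i)\le r_i$ (since $s_i\in \ball(c_i,r_i)$) and the triangle inequality, this gives
\[
\dist_G(s_i,u)\ge \dist_G(c_i,u)-\dist_G(c_i,s_i)\ge (2r_i+1)-r_i=r_i+1\ge R+1.
\]
On the other hand, $s_i$ and $u$ both lie on the shortest path $\pi$, with $u$ exactly one edge past $t_i$, so $\dist_G(s_i,u)=|\pi[s_i,u]|=|\alpha_i|+1$. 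Therefore $|\alpha_i|\ge r_i\ge R$ as required.

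Summing over $i=1,\ldots,l-1$ gives $|\pi|\ge \sum_{i=1}^{l-1}|\alpha_i|\ge (l-1)R$, and rearranging yields $l\le |\pi|/R+1$, which is the claimed bound. I do not expect any real obstacle here: the argument is entirely driven by the greedy maximal choice of $t_i$ in the construction of \pathpart, and the only subtlety is remembering that $\pi$ is a shortest path so that distances along $\pi$ coincide with graph distances in $G$.
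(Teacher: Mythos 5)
Your proof is correct and follows essentially the same route as the paper: both arguments reduce to showing $|\alpha_i|\ge r_i\ge R$ for $i<l$ via the maximal choice of $t_i$ and the triangle inequality (the paper works with $\dist_G(c_i,t_i)=2r_i$ directly, while you pass through the vertex just after $t_i$, but this is the same idea). No issues.
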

\begin{proof}
It suffices to show that, for each $1\leq i<l$, $|\alpha_i|\ge R$.
From the algorithm, vertex $s_i$ belongs to the ball $\ball(c_i,r_i)$, and vertex $v_i$ is the last vertex on $\pi$ that lies in the ball $\ball(c_i,2 r_i)$, so if $t_i\neq t$ then $\dist_{G}(t_i,c_i)=2r_i$ must hold. Therefore, 
$|\pi[s_i, t_i]|\geq \dist_{G}(c_i, t_i) - \dist_G(c_i, s_i)\geq 2r_i - r_i = r_i \geq R$.
\end{proof}

\paragraph{Level of balls.} Before we describe the algorithm for constructing the sets $\set{\pset_c}$, we first classify all clusters according to their sizes as follows.
Recall that we are initially given a set $\pset\subseteq V\times V$ of pairs in $G$.
For each integer $1\leq i\leq \ceil{1/\epsilon}$, we say that a ball $\ball(c, r)$ is \emph{at level $i$} if $$n^{1-i\epsilon}/\sqrt{|\pset|}< |\ball(c, r)|\le n^{1-(i-1)\epsilon}/\sqrt{|\pset|}.$$
Denote by $\clusters_i$ the set of all level-$i$ balls in $\clusters$.
We define the set of level-$0$ balls as $\clusters_0 = \{\ball(c, r)\mid |\ball(c, r)|> n/\sqrt{|\pset|}\}$, so $\balls=\clusters_0 \cup \clusters_1\cup\cdots\cup\clusters_{\ceil{1/\eps}}$. A ball from $\clusters$ is called \textbf{large} if it is in $\clusters_0$, and \textbf{small} otherwise. Here is a simple observation.

\begin{observation}\label{obs:Bi-size}
For each $0\leq i\leq \ceil{1/\eps}$, $|\clusters_i|\leq O(n^{(i+1)\epsilon}\sqrt{|\pset|}/\epsilon)$.
\end{observation}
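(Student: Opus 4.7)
The plan is to bound $|\clusters_i|$ by combining the size lower bound on level-$i$ balls with the two coverage properties from \Cref{clustering}, namely $\sum_{\ball(c, r)\in\balls}|\ball(c, r/2)| = O(n/\eps)$ and $|\ball(c, 4r)|\le n^{\eps}\cdot |\ball(c, r/2)|$. The idea is that each level-$i$ ball contributes a substantial amount to the first sum, so not too many such balls can exist.

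More concretely, I would first fix a level $i$ with $1\le i\le \ceil{1/\eps}$ and consider an arbitrary ball $\ball(c,r)\in\clusters_i$. By definition of level $i$, we have $|\ball(c,r)|>n^{1-i\eps}/\sqrt{|\pset|}$. Since $\ball(c,r)\subseteq \ball(c,4r)$, the second property of \Cref{clustering} gives
\[
|\ball(c, r/2)|\ \ge\ \frac{|\ball(c, 4r)|}{n^{\eps}}\ \ge\ \frac{|\ball(c, r)|}{n^{\eps}}\ >\ \frac{n^{1-(i+1)\eps}}{\sqrt{|\pset|}}.
\]
Summing this lower bound over all balls in $\clusters_i$ and comparing with the first property of \Cref{clustering} yields
\[
|\clusters_i|\cdot \frac{n^{1-(i+1)\eps}}{\sqrt{|\pset|}}\ \le\ \sum_{\ball(c, r)\in\clusters_i}|\ball(c, r/2)|\ \le\ \sum_{\ball(c, r)\in\balls}|\ball(c, r/2)|\ =\ O(n/\eps),
\]
and rearranging gives $|\clusters_i|=O(n^{(i+1)\eps}\sqrt{|\pset|}/\eps)$, as claimed.

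The case $i=0$ is essentially identical: the defining inequality is $|\ball(c,r)|>n/\sqrt{|\pset|}$, so the same argument produces $|\ball(c,r/2)|>n^{1-\eps}/\sqrt{|\pset|}$ and hence $|\clusters_0|=O(n^{\eps}\sqrt{|\pset|}/\eps)$, matching the stated bound with $i=0$. There is no real obstacle in this proof; the only subtlety is remembering to pass from the size of $\ball(c,r)$ itself to the size of $\ball(c,r/2)$ via the doubling-type property $|\ball(c,4r)|\le n^{\eps}|\ball(c,r/2)|$, which is exactly what costs the extra factor of $n^{\eps}$ in the final bound.
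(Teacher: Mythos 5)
Your proof is correct and takes essentially the same approach as the paper: a counting argument that combines the size lower bound defining level-$i$ balls with the coverage bounds of \Cref{clustering}. The only cosmetic difference is that you route the $n^{\eps}$ loss through $|\ball(c,r/2)|\ge |\ball(c,4r)|/n^{\eps}$ and the bound $\sum|\ball(c,r/2)|=O(n/\eps)$, whereas the paper uses $|\ball(c,r)|\le|\ball(c,4r)|$ together with $\sum|\ball(c,4r)|=O(n^{1+\eps}/\eps)$ directly; the two are arithmetically equivalent.
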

\begin{proof}
	By Lemma \ref{clustering}, $\sum_{\ball(c, r)\in\clusters_i}|\ball(c, r)|\leq \sum_{\ball(c, r)\in\clusters_i}|\ball(c, 4r)|\le O(n^{1+\eps}/\eps)$, and since each ball in $\bset_i$ has size at least $n^{1-i\epsilon} / \sqrt{|\pset|}$, $|\clusters_i|\leq O(n^{(i+1)\epsilon}\sqrt{|\pset|}/\epsilon)$.
\end{proof}


\subsection{Step 1. Handling large balls}

We first take a uniformly random subset $S\subseteq V$ of size $\ceil{10\sqrt{|\pset|}\log n}$. Since each ball in $\clusters_0$ contains at least $n/\sqrt{|\pset|}$ vertices, with high probability, $S$ intersects all balls in $\balls_0$. 

We now proceed to iteratively construct the sets $\set{\pset_c}$ of pairs. Throughout, we maintain, for each  ball $\ball(c,r)\in \bset$, a set $\pset_c$ of pairs of vertices in $\ball(c,2r)$, and another set $U_c\subseteq S$ of vertices in $V$, which are initially empty sets. Intuitively, $U_c$ collects all vertices in $S$ whose distances from the ball center $c$ are already preserved by the spanner $H_D$ during the algorithm.

For every pair $s,t$ of vertices in $S$, we denote by $\pi_{s,t}$ an arbitrary shortest path connecting them in $G$, and we compute the set 
$$\Pi= \{\pi_{s, t}\mid s, t\in S, \text{ } \dist_{G}(s, t) < 2D + 4\cdot 2^{10/\epsilon}\cdot D^{1-1/k}\}.$$

We then process all paths in $\Pi$ sequentially in an arbitrary order. For each path $\pi_{s, t}\in \Pi$, we first apply the subroutine \pathpart to it and the collection $\bset$ of balls, and obtain a partition $\pi_{s,t} = \alpha_1\circ\alpha_2\circ\cdots \circ\alpha_l$. 
For each $1\le i\le l$, we denote by $\ball(c_i, r_i)$ the ball that hosts the subpath $\alpha_i$.
If there exists a ball $\ball(c_i, r_i)$ such that $s, t\in U_{c_i}$, then we do nothing and move on to the next path in $\Pi$. Otherwise, for each $1\le  i\le l$, we add both vertices $s,t$ to the set $U_{c_i}$, and add the pair $(s_i, t_i)$ to the set $\pset_{c_i}$ of pairs.

\paragraph{Stretch analysis of Step 1.}
We make use of the following simple observation.
\begin{observation}\label{large-ball-size}
	At the end of Step 1, for each ball $\ball(c, r)\in \clusters$,  $|\pset_c|\leq |S| = \ceil{10|\pset|^{1/2}\log n}$.
\end{observation}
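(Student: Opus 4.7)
The plan is to bound $|\pset_c|$ by $|U_c|$ and then use the trivial bound $|U_c|\le |S|$. The key invariant I would establish is that every time a new pair gets appended to $\pset_c$, at least one new vertex from $S$ is added to $U_c$ on the same iteration. Once this invariant is in place, the conclusion $|\pset_c|\le|U_c|\le|S|$ is immediate.

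To establish the invariant, note first that a pair is added to $\pset_c$ only when some path $\pi_{s,t}\in\Pi$ is processed (i.e., not skipped) and \pathpart returns a ball equal to $\ball(c,r)$ somewhere in the sequence $\ball(c_1,r_1),\ldots,\ball(c_l,r_l)$. By \Cref{different-balls}, this ball appears at most once in the partition of $\pi_{s,t}$, so any single path contributes at most one pair to $\pset_c$. Next, the skip rule says that $\pi_{s,t}$ would have been skipped if \emph{any} ball $\ball(c_j,r_j)$ in its partition already satisfied $s,t\in U_{c_j}$; in particular, at the moment $\pi_{s,t}$ is processed, it cannot be that both $s$ and $t$ already lie in $U_c$. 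Since the processing step then inserts both $s$ and $t$ into $U_{c_i}$ for every index $i$, including the one corresponding to $\ball(c,r)$, at least one of $s,t$ is freshly added, so $|U_c|$ strictly increases.

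Combining these two observations, the number of additions to $\pset_c$ over the course of Step 1 is at most the number of times $|U_c|$ can strictly increase. Since $U_c\subseteq S$ throughout, this number is at most $|S|=\lceil 10|\pset|^{1/2}\log n\rceil$.

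I do not anticipate a real obstacle here; the proof is a one-line invariant argument. The only mild subtlety worth stating carefully is the direction of the skip rule, namely that skipping is triggered by the \emph{existence} of a single ball in the partition with both endpoints already in its $U$-set, which is why processing a non-skipped path guarantees progress at the specific ball $\ball(c,r)$ we are tracking.
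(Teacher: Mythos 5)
Your proof is correct and follows essentially the same argument as the paper: both hinge on the invariant that each non-skipped path contributes at most one pair to $\pset_c$ (via \Cref{different-balls}) while strictly enlarging $U_c\subseteq S$, giving $|\pset_c|\le |U_c|\le |S|$. Your write-up just makes explicit the direction of the skip rule that the paper leaves implicit.
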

\begin{proof}
	By the algorithm description along with \Cref{different-balls}, every time a new pair is added to $\pset_c$, a new vertex from $S$ is also added to $U_c$. Since no more pair will be added to $\pset_{c}$ as long as $U_c$ contains all vertices of $S$, we get that $|\pset_c|\leq |U_c|\leq |S| = \ceil{10|\pset|^{1/2}\log n}$.
\end{proof}

We now show in the following claims that, after the first step, all pairwise distances (in $G$) between vertices in $S$ are well-preserved in $H_D$. Therefore, as the set $S$ hits all balls in $\bset_0$ with high probability, the distance between any pair of vertices from $V(\balls_0)$ is also well-preserved.

\begin{claim}\label{C0}
After a vertex $s\in S$ is added to the set $U_c$ for some $\ball(c,r)\in \bset$, the following holds:
	$$\dist_{H_D}(s, c)\leq \dist_G(s, c) + 49\cdot 2^{(30k-10)/\epsilon}\cdot D^{1-1/k}.$$
\end{claim}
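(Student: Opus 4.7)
The plan is to exhibit a concrete short $s$-to-$c$ walk in $H_D$ using the two gadgets installed in each ball. Suppose $s$ was added to $U_c$ while the algorithm was processing a path $\pi = \pi_{s, s'} \in \Pi$ for some $s' \in S$; invoking \pathpart produces a decomposition $\pi = \alpha_1 \circ \cdots \circ \alpha_l$ with host balls $\ball(c_1, r_1), \ldots, \ball(c_l, r_l)$, and $c = c_i$ for some index $i$. By the algorithm, every pair $(s_j, t_j)$ is placed into $\pset_{c_j}$ at this moment and stays there afterwards; since the inductively built pairwise spanner inside $G[\ball(c_j, 4 r_j)]$ preserves every pair in the current $\pset_{c_j}$ with stretch $f_{k-1, \epsilon}$, the distance bound $\dist_{H_D}(s_j, t_j) \le f_{k-1, \epsilon}(\dist_G(s_j, t_j))$ holds from this moment on.

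Next I would concatenate: for $j = 1, \ldots, i-1$ use the approximate $s_j$-to-$t_j$ path guaranteed by the inductive spanner inside $G[\ball(c_j, 4 r_j)]$, and then hop from $s_i \in \ball(c_i, r_i)$ to $c = c_i$ along the BFS tree $T_{c_i}$, whose length is at most $r_i$. This gives
$$\dist_{H_D}(s, c) \;\le\; \sum_{j=1}^{i-1} \bigl(\dist_G(s_j, t_j) + 2^{30(k-1)/\epsilon}\cdot \dist_G(s_j, t_j)^{1-1/(k-1)}\bigr) \;+\; r_i.$$
Since $\pi$ is a geodesic with $s_1 = s$, we have $\sum_{j<i}\dist_G(s_j, t_j) = \dist_G(s, s_i)$, and the triangle inequality combined with $s_i \in \ball(c_i, r_i)$ gives $\dist_G(s, s_i) \le \dist_G(s, c) + r_i$. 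Therefore
$$\dist_{H_D}(s, c) \;\le\; \dist_G(s, c) + 2 r_i + \sum_{j=1}^{i-1} 2^{30(k-1)/\epsilon}\cdot \dist_G(s_j, t_j)^{1-1/(k-1)}.$$

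Finally, I would plug in the clustering parameters: $r_j \le 2^{10/\epsilon}\cdot D^{1-1/k}$ by \Cref{clustering} and $\dist_G(s_j, t_j) \le 3 r_j$ since $\alpha_j$ is a geodesic subpath with both endpoints in $\ball(c_j, 2 r_j)$; using $\pi \in \Pi$ (so $|\pi| < 2D + 4\cdot 2^{10/\epsilon} D^{1-1/k}$) together with \Cref{obs} yields $l \le O_\epsilon(D^{1/k})$. An arithmetic check, reading the exponent $1 - 1/(k-1)$ as $0$ in the base case $k = 2$, then shows that the right-hand side is at most $\dist_G(s, c) + 49 \cdot 2^{(30k-10)/\epsilon}\cdot D^{1-1/k}$.

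The main obstacle is simply the exponent bookkeeping in this last step: verifying that the per-hop additive error $2^{30(k-1)/\epsilon}$ (for $k = 2$) or $2^{30(k-1)/\epsilon}\cdot R^{(k-2)/(k-1)}$ (for $k \ge 3$), times the number of subpaths $l = O_\epsilon(D^{1/k})$, plus the BFS overhead $2 r_i \le 2^{11/\epsilon} D^{1-1/k}$, is collectively dominated by $49 \cdot 2^{(30k-10)/\epsilon}\cdot D^{1-1/k}$. The arithmetic that makes the exponents on $D$ align is the telescoping $D^{1/k}\cdot D^{(k-2)/k} = D^{(k-1)/k} = D^{1-1/k}$; everything else is routine triangle-inequality manipulation.
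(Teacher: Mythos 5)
Your proposal is correct and follows essentially the same route as the paper: concatenate the inductively guaranteed $f_{k-1,\epsilon}$-stretch paths for the pairs $(s_j,t_j)$ added to $\pset_{c_j}$ along the prefix of the \pathpart decomposition, then finish with a BFS-tree hop to the center $c_i$, and verify the constants using $|\alpha_j|\le O(2^{10/\epsilon}D^{1-1/k})$ and $l=O_\epsilon(D^{1/k})$. The only (immaterial) difference is that you route through $s_i$ and pay $r_i$ on the tree, whereas the paper routes through $t_i$ and pays $2r_i$; the arithmetic closes comfortably within the $49\cdot 2^{(30k-10)/\epsilon}D^{1-1/k}$ budget either way.
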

\begin{proof}
Assume that the shortest path $\pi_{s,t}$ was being processed when $s$ was added to $U_c$, and that $\ball(c, r)$ was $\ball(c_i, r_i)$ under the notation of the subroutine \pathpart. From the algorithm, for each $1\leq j\leq l$, the pair $(s_j, t_j)$ was added to the collection $\pset_{c_j}$. From the construction of $H_D$, for each $1\leq j\leq l$, there exists a path $\phi_j$ in $H_D$ from $s_j$ to $t_j$, such that: 
	$$\begin{aligned}
	|\phi_j| \leq |\alpha_j| + 2^{30(k-1)/\epsilon}\cdot |\alpha_j|^{1-1/(k-1)} &\leq |\alpha_j| + 2^{30(k-1)/\epsilon}\cdot \left(4\cdot 2^{10/\epsilon}\cdot D^{1-1/k}\right)^{1-1/(k-1)}\\
	&\leq |\alpha_j| + 4\cdot 2^{(30k-20)/\epsilon}\cdot D^{1-2/k}
	\end{aligned}$$
	(we have used the fact that $|\alpha_j| = \dist_G(s_j,t_j)\le \dist_G(s_j,c_j)+ \dist_G(t_j,c_j)\le  4r_{j} \leq 4\cdot 2^{10/\epsilon}\cdot D^{1-1/k}$). Taking a summation over all indices $1\leq j\leq i$ and using  \Cref{obs}, we have:
	$$\begin{aligned}
	\dist_{H_D}(s, c_i)&\leq  \dist_{H_D}(c_i, t_i)+ \dist_{H_D}(s, t_i)\\
	&\leq 2\cdot 2^{10/\epsilon}\cdot D^{1-1/k} + \dist_G(s, t_i) + 48\cdot 2^{(30k-10)/\epsilon}\cdot D^{1-1/k}\\
	&\leq 2\cdot 2^{10/\epsilon}\cdot D^{1-1/k} + (\dist_G(s, c_i) + \dist_G(c_i, t_i)) + 48\cdot 2^{(30k-10)/\epsilon}\cdot D^{1-1/k}\\
	&\leq \dist_G(s, c_i) + 49\cdot 2^{(30k-10)/\epsilon} \cdot D^{1-1/k}.
	\end{aligned}$$
\end{proof}

\begin{claim}\label{C0-ineq}
	For any $s, t\in S$ such that $\dist_G(s, t) < 2D + 4\cdot 2^{10/\epsilon}\cdot D^{1-1/k}$, we have:
	$$\dist_{H_D}(s, t)\leq \dist_G(s, t) + 100 \cdot 2^{(30k-10)/\epsilon}\cdot D^{1-1/k}.$$
\end{claim}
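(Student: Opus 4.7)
The plan is to split into two cases according to whether, when the path $\pi_{s,t}$ was processed in Step~1, the algorithm \emph{bought} it (adding pairs $(s_j,t_j)$ to the sets $\pset_{c_j}$ and adding $s,t$ to each $U_{c_j}$) or \emph{skipped} it because some ball $\ball(c_i,r_i)$ in the partition already contained both $s$ and $t$ in $U_{c_i}$. Since $\dist_G(s,t) < 2D + 4\cdot 2^{10/\eps}\cdot D^{1-1/k}$, the path $\pi_{s,t}$ is in $\Pi$ and was indeed processed, so one of the two cases applies.

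In the ``bought'' case, I would essentially repeat the accounting from the proof of \Cref{C0}. Applying \pathpart to $\pi_{s,t}$ partitions it into $\pi_{s,t}=\alpha_1\circ\cdots\circ\alpha_l$, and for each $j$ the pair $(s_j,t_j)$ is now in $\pset_{c_j}$, so by the inductive hypothesis there is a path $\phi_j$ in $H_D$ from $s_j$ to $t_j$ of length at most $|\alpha_j|+2^{30(k-1)/\eps}\cdot(4\cdot 2^{10/\eps}D^{1-1/k})^{1-1/(k-1)} \le |\alpha_j|+4\cdot 2^{(30k-20)/\eps}\cdot D^{1-2/k}$. Concatenating these gives
\[
\dist_{H_D}(s,t)\le |\pi_{s,t}|+l\cdot 4\cdot 2^{(30k-20)/\eps}\cdot D^{1-2/k}.
\]
Now \Cref{obs} (with $R=D^{1-1/k}$) bounds $l\le |\pi_{s,t}|/D^{1-1/k}+1\le 3D^{1/k}$, so the additive overhead is at most $12\cdot 2^{(30k-20)/\eps}D^{1-1/k}$, which is comfortably below $100\cdot 2^{(30k-10)/\eps}D^{1-1/k}$ since the factor $2^{-10/\eps}\le 2^{-10}$ absorbs the constant.

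In the ``skipped'' case, some ball $\ball(c_i,r_i)$ of the partition of $\pi_{s,t}$ satisfies $s,t\in U_{c_i}$ already. Applying \Cref{C0} to the pairs $(s,c_i)$ and $(c_i,t)$ and the triangle inequality gives
\[
\dist_{H_D}(s,t)\le \dist_G(s,c_i)+\dist_G(c_i,t)+98\cdot 2^{(30k-10)/\eps}\cdot D^{1-1/k}.
\]
The key observation I would invoke is that $\ball(c_i,r_i)$ hosts the subpath $\alpha_i$ of $\pi_{s,t}$, so some vertex $v\in V(\pi_{s,t})$ lies in $\ball(c_i,r_i)$; hence $\dist_G(s,c_i)+\dist_G(c_i,t)\le \dist_G(s,v)+\dist_G(v,t)+2r_i=\dist_G(s,t)+2r_i$. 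Since $r_i\le 2^{10/\eps}D^{1-1/k}$, the total additive overhead is at most $98\cdot 2^{(30k-10)/\eps}D^{1-1/k}+2\cdot 2^{10/\eps}D^{1-1/k}\le 100\cdot 2^{(30k-10)/\eps}D^{1-1/k}$, completing the claim.

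The only subtle point is the skipped case: one needs to know that $c_i$ is close (within $r_i$) to the shortest $s$--$t$ path \emph{currently being processed}, rather than to some other path through which $s$ or $t$ entered $U_{c_i}$. This is exactly guaranteed by the fact that the obstruction ball $\ball(c_i,r_i)$ is discovered by running \pathpart on $\pi_{s,t}$ itself, so the $2r_i$ slack suffices. Everything else is routine triangle-inequality and constant bookkeeping.
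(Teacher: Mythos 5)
Your proof is correct and follows essentially the same route as the paper's: the same two-case split (path bought versus skipped because some $\ball(c_i,r_i)$ already had $s,t\in U_{c_i}$), the same segment-by-segment summation via \Cref{obs} in the bought case, and the same double application of \Cref{C0} plus a triangle inequality through a vertex of $\pi_{s,t}$ lying in $\ball(c_i,r_i)$ (the paper uses $s_i$ where you use a generic $v$) in the skipped case. The only nitpick is that for paths in $\Pi$ of length up to $2D+4\cdot 2^{10/\eps}D^{1-1/k}$, \Cref{obs} gives $l\le 2D^{1/k}+4\cdot 2^{10/\eps}+1$ rather than $3D^{1/k}$, but the resulting overhead is still far below the stated bound, so nothing breaks.
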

\begin{proof}
For any such pair of vertices $s, t\in S$, consider the moment when the shortest path $\pi_{s, t}\in \Pi$ was processed and partitioned into subpaths $\alpha_1,\ldots,\alpha_l$. We distinguish between the following two cases.
	\begin{itemize}[leftmargin=*]
		\item Case 1. There existed an index $1\le i\le l$ such that $s, t\in U_{c_i}$ at the moment.
		
		In this case, from \Cref{C0}:
		$$\dist_{H_D}(s, c_i)\leq \dist_G(s, c_i) + 49\cdot 2^{(30k-10)/\epsilon} \cdot D^{1-1/k}\leq \dist_G(s, s_i) + 50\cdot 2^{(30k-10)/\epsilon} \cdot D^{1-1/k},$$
		$$\dist_{H_D}(c_i, t)\leq \dist_G(c_i, t) +49\cdot 2^{(30k-10)/\epsilon}\cdot D^{1-1/k}\leq \dist_G(s_i, t) + 50\cdot 2^{(30k-10)/\epsilon}\cdot D^{1-1/k}.$$
		Summing up these two inequalities finishes the proof.
		
		\item Case 2. There did not exist any $i$ such that $s, t\in U_{c_i}$ at the moment.
		
		In this case, for each $1\leq j\leq l$, the pair $(s_j, t_j)$ was added to the set $\pset_{c_j}$. Similar to the proof of \Cref{C0}, for each $1\leq j\leq l$, there exists a path $\phi_j$ in $H_D$ from $s_j$ to $t_j$, such that: 
		$$\begin{aligned}
		|\phi_j| \leq |\alpha_j| + 2^{30(k-1)/\epsilon}\cdot |\alpha_j|^{1-1/(k-1)} &\leq |\alpha_j| + 2^{30(k-1)/\epsilon}\cdot \left(4\cdot 2^{10/\epsilon}\cdot D^{1-1/k}\right)^{1-1/(k-1)}\\
		&\leq |\alpha_j| + 4\cdot 2^{(30k-20)/\epsilon}\cdot D^{1-2/k}.
		\end{aligned}$$
		Taking a summation for all indices $1\leq j\leq l$ and using \Cref{obs}, we get that:
		$$\begin{aligned}
			\dist_{H_D}(s, t) &\leq \dist_G(s, t) + 4\cdot 2^{(30k-20)/\epsilon}\cdot D^{1-2/k}\cdot \brac{2D^{1/2} + 4\cdot 2^{10/\epsilon} + 1}\\
			&\leq \dist_G(s, t) + 28\cdot 2^{(30k-10)/\epsilon}\cdot D^{1-1/k}.
		\end{aligned}$$
	\end{itemize}
\end{proof}

\subsection{Step 2. Handling small balls} 

We now process all pairs $(s,t)\in \pset$ with $D\le \dist_G(s, t) < 2D$ in an arbitrary order. Take any such pair $(s,t)$, and compute a shortest path $\pi_{s,t}$ between them in $G$. We then apply the subroutine \pathpart to it and the collection $\bset$ of balls, and obtain a partitioning $\pi_{s,t} = \alpha_1\circ\alpha_2\circ\cdots \circ\alpha_l$ together with the balls $\set{\ball(c_i, r_i)}_{1\le i\le l}$ that host them. However, as the number of pairs $(s,t)\in \pset$ with $D\le \dist_G(s, t) < 2D$ can be very large, we cannot afford to add the pair $(s_i,t_i)$ to set $\pset_{c_i}$ for all $1\le i\le l$ as in Step 1, and will instead carefully pick a subset of indices in $[1,l]$ to do so.

\paragraph{Basic notations.} We will now describe how to choose these indices. This is again done via an iterative process. Throughout, we will gradually construct a binary tree $\tree$, which initially contains a single node, which is the root of $\tree$. 
Each node of the tree is labeled by a subinterval of $[1, l]$. Initially, the root of $\tree$ is labeled with $[1, l]$. 
Each index $i\in [1,l]$ is either marked \textbf{active} or \textbf{inactive}. Initially, all indices are inactive. The algorithm continues to be executed unless for each leaf of the tree $\tree$, all indices in its associated interval are active. Over the course of the algorithm, whenever we mark some index $i$ active, we will simultaneously add the pair $(s_i,t_i)$ to the set $\pset_{c_{i}}$.
We say that an index $i\in [1,l]$ is \textbf{at level $j$} if its corresponding ball $\ball(c_i,r_i)$ is at level $j$ (that is, if $n^{1-j\epsilon}/\sqrt{|\pset|}< |\ball(c_i, r_i)|\le n^{1-(j-1)\epsilon}/\sqrt{|\pset|}$).

\paragraph{Path-buying with tree structure $\tree$.} As a pre-processing step, we first focus on level-$0$ indices (if there are none of them, then we skip the pre-processing). Let $i_1, i_2$ be the smallest and the largest level-$0$ indices, respectively. We add two new nodes to the tree $\tree$, connecting each of them to the root by an edge. One of the new nodes is labeled by interval $[1, i_1-1]$, and the other node is labeled by interval $[i_2+1, l]$. So now the levels of indices on $[1, i_1-1]$ and $[i_2+1, l]$ are at least $1$. Here we do not modify the active/inactive status of the indices.




We now describe an iteration. We take an arbitrary leaf of the tree $\tree$ whose associated interval contains an inactive index, and denote by $I$ the interval associated with this leaf node. We can assume that all levels of $I$ are at least $1$.
For each $1\le L\le \ceil{1/\eps}$, let $i^L_1 < i^L_2 <\cdots < i^L_{p(L)}$ be all level-$L$ indices in the interval $I$ (so the number of level-$L$ indices in $I$ is $p(L)$); we will \textbf{ignore the superscript $L$} if it is clear from context. Consider the corresponding balls $\set{\ball(c_{i^L_a}, r_{i^L_a})}$ that host them. We say that a pair $\left(\ball(c_{i^L_a}, r_{i^L_a}), \ball(c_{i^L_b}, r_{i^L_b})\right)$ of balls is \textbf{tight}, if 
$$\dist_{H_D}(c_{i^L_a}, c_{i^L_b}) \le \dist_G(c_{i^L_a}, c_{i^L_b}) + \left(3\cdot 2^{(30k-20)/\epsilon} + 10\cdot 2^{10/\epsilon}\right)\cdot D^{1-1/k}.$$
and we denote by $q(L)$ the number of pairs of balls in $\set{\ball(c_{i^L_a}, r_{i^L_a})}$ which are not tight. In addition, we define $\beta(L)=\ceil{n^{(L+1)\epsilon}}$.

We will distinguish between the following two cases.

\paragraph{Case 1. For all $1\le L\le \ceil{1/\eps}$, $p(L) \leq \max\{4\beta(L), 2q(L)/\beta(L) \}$.}
In this case, we activate all indices $i\in I$ (and along with it add the corresponding pair $(s_{i}, t_{i})$ to the set $\pset_{c_{i}}$ and then update $H_D$ accordingly), and continue to the next iteration. We do not modify the tree $\tree$ in this iteration. Note that, after this iteration, the associated interval of the leaf that we processed in this iteration no longer contains any inactive indices, so it will remain as a leaf in $\tree$ forever.

\paragraph{Case 2. There exists some $1\le L\le \ceil{1/\eps}$, such that $p(L) > \max\{4\beta(L), 2q(L)/\beta(L) \}$.}
In this case, 
we activate the $\beta(L)$ smallest and $\beta(L)$ largest level-$L$ indices in $I$ (and along with it add the corresponding pair $(s_{i}, t_{i})$ to the set $\pset_{c_{i}}$ and then update $H_D$ accordingly).

In order to describe the modification of $\tree$ in this iteration, we need the following lemma.

\begin{lemma}
	There exists three level-$L$ indices $i_x, i_y, i_z\in I$ such that:
	\begin{enumerate}[(i)]
		\item $1\leq x\leq \beta(L)$;
		\item $p(L)-\beta(L) < y\leq p(L)$;
		\item $\beta(L) < z\leq p(L)-\beta(L)$;
		\item the pair $\left(\ball(c_{i_x}, r_{i_x}),\ball(c_{i_z}, r_{i_z})\right)$ and the pair $\left(\ball(c_{i_y}, r_{i_y}),\ball(c_{i_z}, r_{i_z})\right)$ of balls are both tight.
	\end{enumerate}
\end{lemma}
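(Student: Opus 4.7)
The plan is a double-counting followed by a one-line pigeonhole on the non-tight pairs. To keep notation light, drop the superscript $L$ and write $\beta := \beta(L)$, $p := p(L)$, $q := q(L)$. From the Case~2 hypothesis $p > \max\{4\beta,\, 8q/\beta\}$ I extract two facts used below: the middle range $(\beta,\, p-\beta]$ has size $p-2\beta$ (which is strictly greater than $2\beta$, hence nonempty), and rearranging $p\beta > 8q$ gives $2q/\beta < p/4$.

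Next, I classify middle indices by how they could fail. Call $z \in (\beta,\, p-\beta]$ \emph{bad-left} if for every $x \in [1,\beta]$ the pair $\bigl(\ball(c_{i_x}, r_{i_x}),\, \ball(c_{i_z}, r_{i_z})\bigr)$ is not tight, and \emph{bad-right} analogously with the top $\beta$ indices. Each bad-left $z$ is the shared endpoint of $\beta$ non-tight level-$L$ pairs whose other endpoint lies in $[1,\beta]$; these pair-sets are disjoint across different bad-left $z$'s because their middle-range endpoints differ, so at most $q/\beta$ middle indices are bad-left, and symmetrically at most $q/\beta$ are bad-right. The number of \emph{good} middle indices (neither bad-left nor bad-right) is therefore at least
\[
(p - 2\beta) - \frac{2q}{\beta} \;>\; (p - 2\beta) - \frac{p}{4} \;=\; \frac{3p}{4} - 2\beta \;>\; \beta \;\geq\; 1,
\]
where the last strict inequality uses $p > 4\beta$.

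Finally, I pick any good middle index $z$: the failure of bad-left supplies an $x \in [1,\beta]$ with $\bigl(\ball(c_{i_x}, r_{i_x}),\, \ball(c_{i_z}, r_{i_z})\bigr)$ tight, and the failure of bad-right supplies a $y \in (p-\beta,\, p]$ with $\bigl(\ball(c_{i_y}, r_{i_y}),\, \ball(c_{i_z}, r_{i_z})\bigr)$ tight, yielding the required triple. I do not expect any deep obstacle here; the only delicate point is asserting that the $\beta$ non-tight pairs associated to each bad-left $z$ are globally distinct, and this is immediate because such pairs have pairwise distinct middle-range endpoints. The argument uses none of the shortest-path or metric structure of the surrounding construction — it is a self-contained combinatorial pigeonhole on the ``not-tight'' graph restricted to level-$L$ ball centers appearing in $I$.
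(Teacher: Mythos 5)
Your proof is correct and is essentially the paper's argument viewed in the contrapositive: the paper assumes no such triple exists and double-counts to get $q(L)\ge (p(L)-2\beta(L))\cdot\beta(L)$, contradicting $p(L)>\max\{4\beta(L),8q(L)/\beta(L)\}$, while you count the ``bad'' middle indices directly (each charging $\beta(L)$ distinct non-tight pairs) to exhibit a good $z$. The one delicate point you flag — global distinctness of the charged pairs — is justified exactly as you say, via the distinctness of the hosting balls (Observation 3.2) and the disjointness of the index ranges.
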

\begin{proof}
	Assume for contradiction that there do not exist such three indices. Consider now any pair $a,b$ of indices such that $1\le a\le \beta(L)$ and $\beta(L) < b \leq p(L) - \beta(L)$. From our assumption, for at least one pair of indices $(i^L_a,i^L_{b})$ and $(i^L_b,i^L_{p(L)+1-a})$, their corresponding balls are not tight.
	This implies that $q(L)\geq (p(L) - 2\beta(L))\cdot\beta(L)$. Then either $p(L)<4\beta(L)$ holds, or $q(L)\geq p(L)\cdot \beta(L) / 2$ holds, a contradiction of the initial assumption of Case 2.
\end{proof}

To complete this iteration, we pick indices $i_x, i_y, i_z$ as in the above lemma. We then add two new nodes to the tree $\tree$ (together with an edge connecting to the leaf) as the child nodes of this leaf. 
Denote $I=[i',i'']$, and we label the child nodes with intervals $[i', i_{x}]$ and $[i_{y}, i'']$, respectively. This completes the description of an iteration. See Figure \ref{bridge} for an illustration.

\begin{figure}[h]
	\begin{center}
		\begin{tikzpicture}[thick,scale=0.75]
	\draw [line width = .9mm] (0, 0.2) -- (0, -0.2);
	\draw [line width = .9mm] (1, 0.2) -- (1, -0.2);
	\draw [line width = 0.5mm, color=orange] (0.1, 0) -- (0.9, 0);
	\draw (0.5, -1.2) node[red, label={$i'$}]{};
	
	\draw [line width = 0.5mm] (1.1, 0) -- (2.2, 0);
	
	\draw [line width = .9mm] (2.3, 0.2) -- (2.3, -0.2);
	\draw [line width = .9mm] (3.3, 0.2) -- (3.3, -0.2);
	\draw [line width = 0.5mm, color=orange] (2.4, 0) -- (3.2, 0);
	
	\draw [line width = 0.5mm] (3.4, 0) -- (4.5, 0);
	
	\draw [line width = .9mm] (4.6, 0.2) -- (4.6, -0.2);
	\draw [line width = .9mm] (5.6, 0.2) -- (5.6, -0.2);
	\draw [line width = 0.5mm, color=orange] (4.7, 0) -- (5.5, 0);
	\draw (5.1, -1.2) node[red, label={$i_{x}$}]{};
	\draw [red] (5.1, 1.1) ellipse (1.5 and 1.5);
	\draw (5.1, 1.1) node[circle, draw, fill=black!50, inner sep=0pt, minimum width=6pt, label = $c_{i_{x}}$] {};
	
	\draw [decorate,
	decoration = {brace}] (5.4,-1.3) -- (0.2,-1.3);
	\draw (2.8, -2.5) node[label={activate the first $\beta(L)$ level-$L$ indices}]{};
	
	\draw [line width = .9mm] (20, 0.2) -- (20, -0.2);
	\draw [line width = .9mm] (19, 0.2) -- (19, -0.2);
	\draw [line width = 0.5mm, color=orange] (19.1, 0) -- (19.9, 0);
	\draw (19.5, -1.2) node[red, label={$i''$}]{};
	
	\draw [line width = 0.5mm] (17.8, 0) -- (18.9, 0);
	
	\draw [line width = .9mm] (17.7, 0.2) -- (17.7, -0.2);
	\draw [line width = .9mm] (16.7, 0.2) -- (16.7, -0.2);
	\draw [line width = 0.5mm, color=orange] (16.8, 0) -- (17.6, 0);
	
	\draw [line width = 0.5mm] (15.5, 0) -- (16.6, 0);
	
	\draw [line width = .9mm] (15.4, 0.2) -- (15.4, -0.2);
	\draw [line width = .9mm] (14.4, 0.2) -- (14.4, -0.2);
	\draw [line width = 0.5mm, color=orange] (14.5, 0) -- (15.3, 0);
	\draw (14.9, -1.3) node[red, label={$i_{y}$}]{};
	\draw [red] (14.9, 1.1) ellipse (1.5 and 1.5);
	\draw (14.9, 1.1) node[circle, draw, fill=black!50, inner sep=0pt, minimum width=6pt, label = $c_{i_{y}}$] {};
	
	\draw [decorate,
	decoration = {brace}] (19.8,-1.3) -- (14.6,-1.3);
	\draw (17.2, -2.5) node[label={activate the last $\beta(L)$ level-$L$ indices}]{};
	
	\draw [dashed] (5.7, 0) -- (8.9, 0);
	\draw [line width = .9mm] (9, 0.2) -- (9, -0.2);
	\draw [line width = .9mm] (11, 0.2) -- (11, -0.2);
	\draw [line width = 0.5mm, color=orange] (9.1, 0) -- (10.9, 0);
	\draw (10, -1.4) node[red, label={$i_{z}$}]{};
	\draw [dashed] (11.1, 0) -- (14.3, 0);

	\draw [red] (10, 1.3) ellipse (1.9 and 1.9);
	\draw (10, 1.3) node[circle, draw, fill=black!50, inner sep=0pt, minimum width=6pt, label = $c_{i_{z}}$] {};
	
	\draw [style={decorate, decoration=snake}] (5.3, 1.1) -- (9.8, 1.3);
	\draw [style={decorate, decoration=snake}] (14.7, 1.1) -- (10.2, 1.3);
\end{tikzpicture}
	\end{center}
	\caption{The distances between $c_{i_{x}}, c_{i_{z}}$ and $c_{i_{y}}, c_{i_{z}}$ are approximately preserved in $H$; the orange segments correspond to level-$L$ indices; a prefix and a suffix of level-$L$ indices are activated in this iteration.}\label{bridge}
\end{figure}
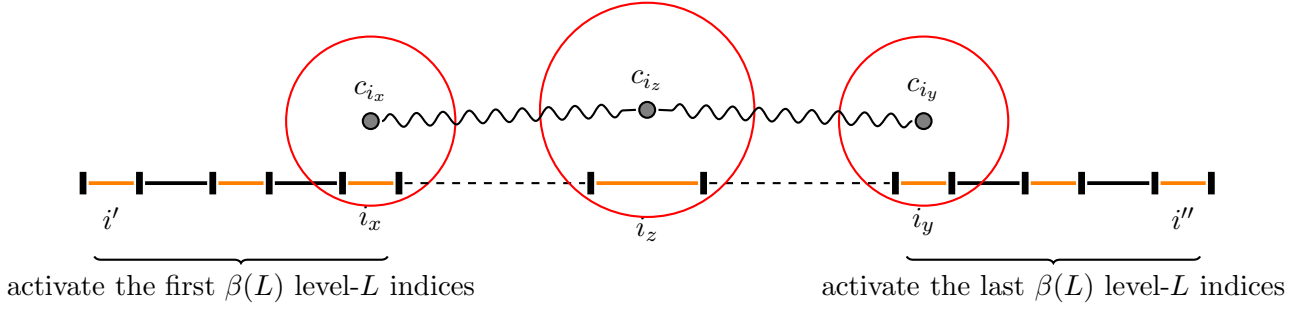

\paragraph{Stretch analysis of Step 2.}
Consider any pair $(s, t)\in \pset$ such that $D\le \dist_G(s, t) < 2D$. Let $\pi$ be the shortest path connecting $s$ to $t$ that was processed in Step 2. We start by showing that, in this iteration, the depth of the tree $\tree$ that we construct is small. 

\begin{observation}
At the end of the iteration of processing $\pi$, $\tree$ has depth at most $\ceil{1/\eps}+1$.
\end{observation}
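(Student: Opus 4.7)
The plan is to bound the depth of $\tree$ by showing that on every root-to-leaf path, the pre-processing step contributes at most one edge, and each level $L \in \{1, \ldots, \ceil{1/\eps}\}$ can be the splitting level of a Case 2 iteration at most once along that path. Summing these contributions gives the desired bound of $\ceil{1/\eps} + 1$.

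The heart of the argument will be a monotonicity claim: once Case 2 fires at some node $u$ of $\tree$ with splitting level $L$ and parent interval $I$, both child intervals $[i', i_x]$ and $[i_y, i'']$ contain at most $\beta(L)$ level-$L$ indices each. Indeed, by the choice of $i_x$ and $i_y$, the index $i_x$ is among the first $\beta(L)$ level-$L$ indices of $I$ and $i_y$ is among the last $\beta(L)$ level-$L$ indices of $I$, so $[i', i_x]$ can only contain level-$L$ indices that are among the first $\beta(L)$ of $I$, and symmetrically for $[i_y, i'']$. Since the intervals associated with descendants of $u$ are nested inside one of these two child intervals, any descendant $v$ of $u$ inherits an interval containing at most $\beta(L)$ level-$L$ indices.

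Writing $p_v(L)$ for the number of level-$L$ indices in the interval at $v$, the above bound gives $p_v(L) \leq \beta(L) < 4\beta(L)$, which falsifies the first branch of the disjunction $p(L) > \max\{4\beta(L),\, 8q(L)/\beta(L)\}$ required to trigger Case 2 at level $L$. Hence Case 2 at level $L$ cannot fire at any descendant of $u$, establishing the promised uniqueness of $L$ along any root-to-leaf path through $u$. Note that the pre-processing step serves an analogous role for level $0$: it strips level-$0$ indices from the intervals of all non-root nodes, so level $0$ cannot appear as a splitting level later.

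Combining these facts: from the root, we traverse at most one pre-processing edge and then a sequence of Case 2 edges, each of which consumes a fresh level in $\{1, \ldots, \ceil{1/\eps}\}$. Consequently the number of Case 2 edges on any root-to-leaf path is at most $\ceil{1/\eps}$, so the depth of $\tree$ is at most $\ceil{1/\eps} + 1$. I do not expect any significant obstacle; the main sanity check is the structural observation about how $i_x$ and $i_y$ confine the level-$L$ indices of $I$ to the two children, which is immediate from the definitions of $i_x$ and $i_y$ in the Case 2 description.
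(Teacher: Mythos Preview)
Your proposal is correct and follows essentially the same approach as the paper: both arguments hinge on the structural observation that after a Case~2 split at level $L$, each child interval contains at most $\beta(L)$ level-$L$ indices, so level $L$ cannot be used again along that root-to-leaf path. The paper phrases this as ``the number of levels with inactive indices decreases by one'' at each Case~2 step (since those $\beta(L)$ indices were all activated), while you phrase it via the count threshold $p(L)\le\beta(L)<4\beta(L)$; these are two sides of the same coin.
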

\begin{proof}
The pre-processing step can increase the depth by at most one. From the algorithm description, it is easy to observe that, as we walk down in any root-to-leaf path in $\tree$, in each step, the number of levels with inactive indices decreases by one. Hence, the observation follows.
\end{proof}

We are now ready to analyze the stretch (in $H_D$) of pairs in $\pset$.

\begin{lemma}
\label{lem: step 2}
For every pair $(s, t)\in \pset$, $\dist_{H_D}(s, t)\leq \dist_G(s, t) + 2^{30k/\epsilon}\cdot D^{1-1/k}$.
\end{lemma}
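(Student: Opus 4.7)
The plan is to bound $\dist_{H_D}(s,t)$ by recursively traversing the tree $\tree$ built for the pair $(s,t)$, using one in-ball path segment for every active index and one ``bridge'' at each Case~2 node and at the pre-processing node.

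For each active index $i$, the pair $(s_i,t_i)$ lies in $\pset_{c_i}$, so by the inductive hypothesis applied inside $G[\ball(c_i,4r_i)]$ (whose pairwise spanner is a subgraph of $H_D$), and using $|\alpha_i|\le 4r_i\le 4\cdot 2^{10/\epsilon}D^{1-1/k}$, I obtain $\dist_{H_D}(s_i,t_i)\le |\alpha_i|+4\cdot 2^{(30k-20)/\epsilon}\cdot D^{1-2/k}$. Summing over any leaf interval $[i',i'']$ (where all indices are active) therefore gives $\dist_{H_D}(s_{i'},t_{i''})\le |\pi[s_{i'},t_{i''}]|+(i''-i'+1)\cdot 4\cdot 2^{(30k-20)/\epsilon}\cdot D^{1-2/k}$.

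For each internal (Case~2) node with interval $[i',i'']$, children $[i',i_x]$ and $[i_y,i'']$, and witness center $c_{i_z}$, I would splice the inductively-constructed subpaths together via the detour $t_{i_x}\rightsquigarrow c_{i_x}\rightsquigarrow c_{i_z}\rightsquigarrow c_{i_y}\rightsquigarrow s_{i_y}$. The BFS trees in $H_D$ bound the two outer legs by $2r_{i_x}$ and $r_{i_y}$, while the tightness of the pairs $(c_{i_x},c_{i_z})$ and $(c_{i_z},c_{i_y})$ yields
\[
\dist_{H_D}(c_{i_x},c_{i_z})+\dist_{H_D}(c_{i_z},c_{i_y})\le \dist_G(c_{i_x},c_{i_z})+\dist_G(c_{i_z},c_{i_y})+2\bigl(3\cdot 2^{(30k-20)/\epsilon}+10\cdot 2^{10/\epsilon}\bigr)D^{1-1/k}.
\]
Combining with $\dist_G(c_{i_x},c_{i_z})+\dist_G(c_{i_z},c_{i_y})\le r_{i_x}+2r_{i_z}+r_{i_y}+|\pi[s_{i_x},s_{i_y}]|$ from the triangle inequality, each Case~2 bridge contributes additive stretch $O(2^{(30k-20)/\epsilon})\cdot D^{1-1/k}$ on top of the $\pi$-segment $|\pi[t_{i_x},s_{i_y}]|$ it replaces. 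The pre-processing bridge is handled analogously by picking $u\in S\cap\ball(c_{i_1},r_{i_1})$ and $v\in S\cap\ball(c_{i_2},r_{i_2})$, which exist w.h.p.; verifying that $\dist_G(u,v)<2D+4\cdot 2^{10/\epsilon}D^{1-1/k}$ and invoking \Cref{C0-ineq} gives a pre-processing bridge of stretch $O(2^{(30k-10)/\epsilon})\cdot D^{1-1/k}$.

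Summing across the whole tree: (i)~the total number of active indices is at most $l\le 2D^{1/k}+1$, yielding in-ball stretch $O(2^{(30k-20)/\epsilon})\cdot D^{1-1/k}$; (ii)~the number of Case~2 internal nodes is at most $2^{\ceil{1/\epsilon}+1}$ since $\tree$ has depth at most $\ceil{1/\epsilon}+1$, giving a total Case~2 bridge stretch of $O(2^{(30k-20)/\epsilon + 1/\epsilon})\cdot D^{1-1/k}$; (iii)~the single pre-processing bridge contributes $O(2^{(30k-10)/\epsilon})\cdot D^{1-1/k}$. All three terms are comfortably absorbed into $2^{30k/\epsilon}\cdot D^{1-1/k}$ thanks to the $2^{10/\epsilon}$ slack. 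I expect the main obstacle to be the bookkeeping: one must verify that the bridged intervals and leaf intervals together exactly reconstitute a traversal of $\pi$ so that path-length contributions telescope to $\dist_G(s,t)$, and check simultaneously that every pairwise distance $\dist_G(u,v)$ (or $\dist_G(c_{i_a},c_{i_b})$) invoked for a bridge stays within the thresholds demanded by \Cref{C0-ineq} and the definition of tightness.
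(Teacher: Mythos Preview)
Your proposal is correct and follows essentially the same approach as the paper: both arguments traverse the tree $\tree$, use the inductive hypothesis on the in-ball spanners over leaf intervals, invoke tightness at each Case~2 node for the bridge, and apply \Cref{C0-ineq} for the pre-processing bridge, with the depth bound on $\tree$ controlling the number of bridges. The only difference is presentational: the paper packages the recursion as an induction on the \emph{height} of a node proving $\dist_{H_D}(c_{i'},c_{i''})\le \dist_G(c_{i'},c_{i''})+30(2^{h+1}-1)\cdot 2^{(30k-20)/\epsilon}D^{1-1/k}$ in terms of ball-center distances (which matches the tightness definition directly and sidesteps the endpoint-vs-center conversions), whereas you unroll this into an explicit path and sum the per-segment and per-bridge stretches globally---the ``telescoping'' bookkeeping you anticipate is precisely what the paper's center-to-center inductive statement absorbs.
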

\begin{proof}
	We will utilize the tree structure of $\tree$. Focus on the time when the construction of $\tree$ was completed. For any tree node $N$, its \emph{height} is defined to be the maximum depth of the subtree of $\tree$ rooted at $N$. We first prove the following claim.
	
	\begin{claim}
	\label{clm: stretch by depth}
		For each non-root tree node $N$ with height $h$ and associated interval $[i', i'']$,
		$$\dist_{H_D}(c_{i'}, c_{i''})\leq \dist_G(c_{i'}, c_{i''}) + 30\cdot (2^{h+1}-1)\cdot 2^{(30k-20)/\epsilon}\cdot D^{1-1/k}.$$
	\end{claim}
	\begin{proof}[Proof of \Cref{clm: stretch by depth}]		
	We prove the claim by induction on $h$. The base case is when $h = 0$ and $N$ is a leaf in $\tree$. From the algorithm, for each $i\in [i', i'']$, the pair $(s_i, t_i)$ is added into the set $\pset_{c_i}$. Similar to the proof of \Cref{C0}, for each $i'\leq i\leq i''$, there exists a path $\phi_i$ in $H_D$, such that 
		$$\begin{aligned}
			|\phi_j| \leq |\alpha_j| + 2^{30(k-1)/\epsilon}\cdot |\alpha_j|^{1-1/(k-1)} &\leq |\alpha_j| + 2^{30(k-1)/\epsilon}\cdot \left(4\cdot 2^{10/\epsilon}\cdot D^{1-1/k}\right)^{1-1/(k-1)}\\
			&\leq |\alpha_j| + 4\cdot 2^{(30k-20)/\epsilon}\cdot D^{1-2/k}.
		\end{aligned}$$
		Taking a summation and using \Cref{obs} (which implies that $l<3D^{1/k}$),
		$$\dist_{H_D}(s_{i'}, t_{i''})\leq \dist_G(s_{i'}, t_{i''}) + 12\cdot 2^{(30k-20)/\epsilon}\cdot D^{1-1/k}.$$
		By triangle inequality, we have:
		$$\dist_{H_D}(s_{i'}, t_{i''})\geq \dist_{H_D}(c_{i'}, c_{i''}) - \dist_{H_D}(c_{i'}, s_{i'}) - \dist_{H_D}(c_{i''}, t_{i''})\geq \dist_{H_D}(c_{i'}, c_{i''}) - 3\cdot 2^{10/\epsilon}D^{1-1/k},$$
		$$\dist_G(s_{i'}, t_{i''})\leq \dist_G(c_{i'}, c_{i''}) + \dist_G(c_{i'}, s_{i'}) + \dist_G(c_{i''}, t_{i''})\leq \dist_G(c_{i'}, c_{i''}) + 3\cdot 2^{10/\epsilon}D^{1-1/k}.$$
		Combining all three inequalities, we have:
		$$\begin{aligned}
			\dist_{H_D}(c_{i'}, c_{i''})&\leq \dist_G(c_{i'}, c_{i''}) + \left(12\cdot 2^{(30k-20)/\epsilon} + 6\cdot 2^{10/\epsilon}\right)\cdot D^{1-1/k}\\
			&\leq \dist_G(c_{i'}, c_{i''}) + 18\cdot 2^{(30k-20)/\epsilon}\cdot D^{1-1/k}.
		\end{aligned}$$
		
		Assume now that the claim is true for $1,\ldots,h-1$. Consider a node $N$ at height $h$ with two children $N_1, N_2$ associated with intervals $[i', i_{x}]$ and $[i_{y}, i'']$ respectively. By inductive hypothesis, we have:
		$$\dist_{H}(c_{i'}, c_{i_{x}})\leq \dist_G(c_{i'}, c_{i_{x}}) + 30\cdot (2^{h}-1)\cdot 2^{(30k-20)/\epsilon}\cdot D^{1-1/k},$$
		$$\dist_{H}(c_{i''}, c_{i_{y}})\leq \dist_G(c_{i''}, c_{i_{y}}) + 30\cdot (2^{h}-1)\cdot 2^{(30k-20)/\epsilon}\cdot D^{1-1/k}.$$
		From the algorithm, the node $N$ may only grow two children $N_1$ and $N_2$ in Case 2, and in this case there exists an index $i_z$ such that the pairs $(\ball(c_{i_x}, r_{i_x}),\ball(c_{i_z}, r_{i_z}))$ and $(\ball(c_{i_y}, r_{i_y}),\ball(c_{i_z}, r_{i_z}))$ of balls are both tight. In other words,
		\[\dist_{H_D}(c_{i_{x}}, c_{i_{z}})\leq \dist_G(c_{i_{x}}, c_{i_{z}}) + \left(3\cdot 2^{(30k-20)/\epsilon} + 10\cdot 2^{10/\epsilon}\right)D^{1-1/k},\]
		\[\dist_{H_D}(c_{i_{z}}, c_{i_{y}})\leq \dist_G(c_{i_{z}}, c_{i_{y}}) + \left(3\cdot 2^{(30k-20)/\epsilon} + 10\cdot 2^{10/\epsilon}\right)D^{1-1/k}.\]
		The sum of the left-hand sides of the above four inequalities is at least $\dist_{H_D}(c_{i'}, c_{i''})$, from triangle inequality. For their right-hand sides, notice that: 
		\[\begin{aligned}
			\dist_G(c_{i'}, c_{i_{x}}) + \dist_G(c_{i_{x}}, c_{i_{z}})&\leq (\dist_G(s_{i'}, s_{i_{x}}) + \dist_G(c_{i'}, s_{i'}) + \dist_G(c_{i_{x}}, s_{i_{x}}))\\
			& +(\dist_G(s_{i_{x}}, s_{i_{z}}) + \dist_G(c_{i_{x}}, s_{i_{x}}) + \dist_G(c_{i_{z}}, s_{i_{z}}))\\
			&\leq \dist_G(s_{i'}, s_{i_{z}}) + 4\cdot 2^{10/\epsilon}\cdot D^{1-1/k}\\
			&\leq \dist_G(c_{i'}, c_{i_{z}}) + 6\cdot 2^{10/\epsilon}\cdot D^{1-1/k}.
		\end{aligned}
		\]
		Symmetrically, we also have $\dist_G(c_{i''}, c_{i_{y}}) + \dist_G(c_{i_{y}}, c_{i_{z}})\leq \dist_G(c_{i''}, c_{i_{z}}) + 6\cdot 2^{10/\epsilon}\cdot D^{1-1/k}$.
		
		Note that
		$\dist_G(c_{i'}, c_{i_{z}}) + \dist_G(c_{i''}, c_{i_{z}})\leq \dist_G(c_{i'}, c_{i''}) + 2\cdot 2^{10/\epsilon}\cdot D^{1-1/k}$ (again by triangle inequality). Altogether, we get that:
		$$\dist_{H_D}(c_{i'}, c_{i''})\leq \dist_G(c_{i'}, c_{i''}) + 30\cdot (2^{h+1}-1)\cdot 2^{(30k-20)/\epsilon}\cdot D^{1-1/k}.$$
	\end{proof}

	Lastly, we consider the root $N_0$ of $\tree$. Recall that it is associated with the interval $[1,l]$. 
	If there are no level-$0$ indices in $[1, l]$, then \Cref{clm: stretch by depth} already implies \Cref{lem: step 2}. We assume from now on that there are level-$0$ indices in $[1, l]$. From the algorithm, in the pre-processing step, $N_0$ grows two children $N_1, N_2$ associated with intervals $[1, i_1-1]$ and $[i_2+1, l]$, respectively, such that $i_1, i_2$ are the smallest and the largest level-$0$ indices. By definition of $S$, there exists $v_1, v_2$ such that $v_1\in (\ball(c_{i_1}, r_{i_1})\cap S), v_2\in (\ball(c_{i_2}, r_{i_2})\cap S)$. Notice that $\dist_G(v_1, v_2)\leq \dist_G(s_{i_1}, s_{i_2}) + 4\cdot 2^{10/\epsilon}\cdot D^{1-1/k} < 2D + 4\cdot 2^{10/\epsilon}\cdot D^{1-1/k} $. Then, from \Cref{C0-ineq}, we have:
	$$\dist_{H_D}(v_1, v_2)\leq \dist_G(v_1, v_2) + 100\cdot 2^{(30k-10)/\epsilon} D^{1-1/k}.$$
	
	Suppose $N_1, N_2$ have height $h_1, h_2\leq \ceil{1/\eps}$ respectively. By \Cref{clm: stretch by depth},
	$$\dist_{H_D}(c_1, c_{i_1-1})\leq \dist_G(c_1, c_{i_1-1}) + 30\cdot (2^{h_1+1}-1)\cdot 2^{(30k-20)/\epsilon}\cdot D^{1-1/k},$$
	$$\dist_{H_D}(c_l, c_{i_2+1})\leq \dist_G(c_l, c_{i_2+1}) + 30\cdot (2^{h_2+1}-1)\cdot 2^{(30k-20)/\epsilon}\cdot D^{1-1/k}.$$
Taking a summation of the left-hand sides of the above three inequalities:
	\[\begin{aligned}
		&\dist_{H_D}(c_1, c_{i_1-1}) + \dist_{H_D}(c_l, c_{i_2+1}) + \dist_{H_D}(v_1, v_2)\\
		& \geq \dist_{H_D}(c_1, c_{i_1-1}) + \dist_{H_D}(c_l, c_{i_2+1}) + \dist_{H_D}(c_{i_1}, c_{i_2}) - 2\cdot 2^{10/\epsilon}\cdot D^{1-1/k}\\
		&\geq \dist_{H_D}(c_1, c_l) - 8\cdot 2^{10/\epsilon}\cdot D^{1-1/k}\\
		&\geq \dist_{H_D}(s, t) - 11\cdot 2^{10/\epsilon} D^{1-1/k}.
	\end{aligned}\]
Taking a summation of their right-hand sides (ignoring the tails for now):
	\[\begin{aligned}
		\dist_G(c_1, c_{i_1-1}) + \dist_G(c_l, c_{i_2+1}) + \dist_G(v_1, v_2) & \leq \dist_G(s, s_{i_1-1}) + (\dist_G(s, c_1) + \dist_G(c_{i_1-1}, s_{i_1}))\\
		&+\dist_G(t, s_{i_2+1}) + (\dist_G(t, c_l) + \dist_G(s_{i_2+1}, c_{i_2}))\\
		&+ \dist_G(s_{i_1}, s_{i_2}) + (\dist_G(v_1, s_{i_1}) + \dist_G(v_2, s_{i_2}))\\
		&\leq \dist_G(s, t) + 9\cdot 2^{10/\epsilon}\cdot D^{1-1/k}.
\end{aligned} \]
Altogether, $\dist_{H_D}(s, t)\leq \dist_G(s, t) + 200\cdot 2^{(30k-10)/\epsilon}D^{1-1/k} < \dist_G(s, t) + 2^{30k/\epsilon}D^{1-1/k}$; recall that we are assuming $\epsilon \in (0, 0.1)$ and so $2^{10/\epsilon} > 200$.
\end{proof}

\subsection{Size analysis}

In this subsection, we complete the proof of \Cref{pairwise-sublinear} by showing that the spanner $H$ constructed by the algorithm in this section satisfies the desired size bound. We start by proving the following claim.

\begin{claim}
\label{clm: pairs added}
In the end, for each level $1\leq i\leq \ceil{1/\eps}$, $\sum_{\ball(c, r)\in \clusters_i}|\pset_c|\leq O(\frac{1}{\epsilon^2}2^{\ceil{1/\epsilon}}n^{(i+1)\epsilon}|\pset|\log n)$.
\end{claim}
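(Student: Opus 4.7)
The plan is to account for additions to the sets $\pset_c$ from three disjoint sources---Step 1, Case-1 activations in Step 2, and Case-2 splits in Step 2---and bound each separately.

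Step 1 contributes at most $|\clusters_i|\cdot|S|$ to level $i$, since by \Cref{large-ball-size} each $\pset_c$ grows by at most $|S|=O(\sqrt{|\pset|}\log n)$ during Step 1 and $|\clusters_i|=O(n^{(i+1)\epsilon}\sqrt{|\pset|}/\epsilon)$; this yields $O(n^{(i+1)\epsilon}|\pset|\log n/\epsilon)$. For Step 2 I use the already-established fact that every tree $\tree$ has depth at most $\ceil{1/\epsilon}+1$ (the pre-processing removes level $0$, and each Case-2 split on level $L$ eliminates level $L$ from the descending subtree because the middle level-$L$ indices at positions $[i_x+1,i_y-1]$ fall outside both child intervals), so $\tree$ has at most $2^{O(1/\epsilon)}$ nodes. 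A Case-2 split on level $L$ activates $2\beta(L)=O(n^{(L+1)\epsilon})$ indices, all at level $L$, so the per-tree contribution to the level-$i$ count from Case 2 is $O(2^{1/\epsilon}\beta(i))$, and summing over at most $|\pset|$ trees gives $O(2^{1/\epsilon}n^{(i+1)\epsilon}|\pset|)$.

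The main obstacle is Case 1. At a leaf with interval $I$, the activation adds at most $p(i)\leq 4\beta(i)+8q(i)/\beta(i)$ level-$i$ pairs. Summing the first term $4\beta(i)$ over at most $2^{O(1/\epsilon)}|\pset|$ Case-1 leaves gives $O(2^{1/\epsilon}n^{(i+1)\epsilon}|\pset|)$. The second term I control by a global potential argument based on the monotonicity of tightness: since $H_D$ only grows throughout the algorithm, once a pair $(c_a,c_b)$ of ball centers satisfies the tight inequality, it does so forever. I will argue that every Case-1 activation at a leaf $N$ with interval $I$ makes every pair of level-$i$ balls whose positions lie in $I$ tight. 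Indeed, after $(s_j,t_j)$ is added to $\pset_{c_j}$ for every $j\in I$, the inductive pairwise spanner inside $\ball(c_j,4r_j)$ provides a path $\phi_j$ from $s_j$ to $t_j$ in $H_D$ of length at most $|\alpha_j|+4\cdot 2^{(30k-20)/\epsilon}D^{1-2/k}$, exactly as in \Cref{C0}. Concatenating the $\phi_j$ for $j\in[i^i_a,i^i_b]$ (at most $3D^{1/k}$ paths by \Cref{obs}), attaching the BFS-tree segments from ball centers $c_{i^i_a},c_{i^i_b}$ to the subpath endpoints $s_{i^i_a},t_{i^i_b}$, and combining with triangle inequality against $\dist_G(c_{i^i_a},c_{i^i_b})$ produces a path in $H_D$ witnessing the tight inequality. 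Hence every Case-1 activation increases the global number of tight level-$i$ ball-center pairs by at least $q(i)$, so $\sum q(i)\leq\binom{|\clusters_i|}{2}=O(n^{2(i+1)\epsilon}|\pset|/\epsilon^2)$ summed across all Case-1 leaves of all trees. Dividing by $\beta(i)=\Theta(n^{(i+1)\epsilon})$ yields $O(n^{(i+1)\epsilon}|\pset|/\epsilon^2)$ for the second Case-1 term.

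Summing the four contributions gives the claimed bound $O\!\left(\frac{1}{\epsilon^2}\,2^{\ceil{1/\epsilon}}\,n^{(i+1)\epsilon}\,|\pset|\log n\right)$.
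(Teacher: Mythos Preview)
Your proof is correct and follows essentially the same approach as the paper: bound Step~1 via $|\clusters_i|\cdot|S|$, and for Step~2 use the global potential $\Phi_i$ counting non-tight level-$i$ center pairs, charging Case-1 leaves with large $p(i)$ against the decrease of $\Phi_i$ while bounding every other node's level-$i$ contribution by $O(\beta(i))$. The paper's write-up packages the per-node $4\beta(i)$ term and the potential term into a single per-iteration bound $2^{\ceil{1/\epsilon}+2}\cdot 4\beta(i)+8\Delta\Phi_i/\beta(i)$, but the underlying accounting is identical to yours.
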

\begin{proof}

By \Cref{large-ball-size}, all sets $\pset_c$ have size $O(|\pset|^{1/2}\log n)$ before Step 2 begins. Hence, at the time when Step 1 finishes, we have $\sum_{\ball(c, r)\in \clusters_i}|\pset_c|\leq O(\frac{1}{\epsilon}n^{(i+1)\epsilon}|\pset|\log n)$. So, we only need to analyze how much $|\pset_c|$ has increased during Step 2. 

Define $\Phi_i$ to be the number of  pairs of level-$i$ balls that are not tight. So, at the beginning of Step 2, we have $\Phi_i\le |\clusters_i|^2\leq O(n^{2(i+1)\epsilon}|\pset|/\epsilon^2)$ using \Cref{obs:Bi-size}.

Consider the iteration in Step 2 when some pair $(s,t)\in \pset$ was being processed, and let $\tree$ be the binary tree constructed in that iteration. Let $N$ be a node of $\tree$. If the algorithm added more than $4\beta(i)$ pairs to $\bigcup_{\ball(c, r)\in\clusters_i}\pset_c$ in the round that $N$ was processed, then it must be through Case 1 and $N$ would remain a leaf in $\tree$ till the end. Put in other way, at most $4\beta(i)$ pairs were added to the collection $\bigcup_{\ball(c, r)\in\clusters_i}\pset_c$ in processing every non-leaf node in $\tree$.
Consider now a leaf node $N$ of $\tree$.
Using similar analysis in the proof of \Cref{clm: stretch by depth}, we can show that after the round of processing $N$,
every pair of level-$i$ balls that host some segment of the shortest path processed in that iteration became tight, and so $\Phi_i$ would decrease by at least $p(i)\cdot \beta(i)/2$; recall that $p(i)$ is the number of pairs added to the path collection $\bigcup_{\ball(c, r)\in\clusters_i}\pset_c$. Therefore, comparing with the decrease in $\Psi_i$ over the course of the iteration, the sum $\sum_{\ball(c, r)\in \clusters_i}|\pset_c|$ increased by at most: $$2^{\ceil{1/\eps}+2}\cdot 4 \beta(i) + 2\Delta\Phi_i / \beta(i)$$
where $\Delta\Phi_i$ was the total decrease of $\Phi_i$ in this iteration. Therefore, during Step 2, the total amount $\sum_{\ball(c, r)\in \clusters_i}|\pset_c|$ has increased by at most: 
$$|\pset|\cdot 2^{\ceil{1/\eps}+2}\cdot 4\beta(i) + 8|\clusters_i|^2 / \beta(i) = O(2^{\ceil{1/\epsilon}}n^{(i+1)\epsilon}|\pset|/\epsilon^2)$$
%
\end{proof}

\begin{lemma}
	\label{lem: size of dist D spanner}
In the end, the spanner $H_D$ contains at most $\tilde{O}(2^{2k/\epsilon}\cdot n^{1 + (10k-8)\epsilon}|\pset|^{1/2^{k+1}})$ edges.
\end{lemma}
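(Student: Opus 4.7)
The plan is to split $|E(H_D)|$ into the contributions from the two types of edges in its definition: item (i), the BFS trees $T_c$, and item (ii), the recursively built pairwise sublinear additive spanners. For (i), the third coverage property of \Cref{clustering} gives $\sum_{\ball(c,r)\in\bset}|\ball(c,4r)|=O(n^{1+\eps}/\eps)$, which is already well inside the claimed bound.

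For (ii), applying the inductive hypothesis with parameters $k-1$ and $\eps$ upper bounds the size of each such inner spanner by $O\!\left(2^{2(k-1)/\eps}\cdot|\ball(c,4r)|^{1+10(k-1)\eps}\cdot|\pset_c|^{1/2^{k}}\right)$. The plan is to split the collection $\bset=\bset_0\cup\bset_1\cup\cdots\cup\bset_{\ceil{1/\eps}}$ and bound each level's contribution separately; summing across levels will cost only an $O(1/\eps)$ factor, absorbed into $\tilde O(\cdot)$.

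For level $0$, note that level-$0$ indices are only touched in the pre-processing phase of Step 2 (they always lie in the interval of the root of $\tree$ but never fall inside a leaf's interval), so $\pset_c$ for a level-$0$ ball $\ball(c,r)$ does not grow during Step 2. Thus \Cref{large-ball-size} gives $|\pset_c|\le|S|=O(\sqrt{|\pset|}\log n)$. Combined with $|\ball(c,4r)|^{10(k-1)\eps}\le n^{10(k-1)\eps}$ and $\sum_{\bset_0}|\ball(c,4r)|=O(n^{1+\eps}/\eps)$, the level-$0$ contribution is bounded by $\tilde O\!\left(2^{2(k-1)/\eps}\cdot n^{1+(10k-9)\eps}\cdot|\pset|^{1/2^{k+1}}\right)$.

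For each level $i\ge 1$, I will use the uniform ball-size bound $|\ball(c,4r)|\le n^{1-(i-2)\eps}/\sqrt{|\pset|}$ (which follows from $|\ball(c,4r)|\le n^\eps|\ball(c,r/2)|$ in \Cref{clustering} and the definition of level $i$) to absorb the $|\ball(c,4r)|^{10(k-1)\eps}$ factor into $n^{10(k-1)\eps}\cdot|\pset|^{-5(k-1)\eps}$. The remaining sum $\sum_{\bset_i}|\ball(c,4r)|\cdot|\pset_c|^{1/2^k}$ will be controlled by H\"older's inequality with exponents $\bigl(2^k,\,2^k/(2^k-1)\bigr)$: the pair-factor $\bigl(\sum_{\bset_i}|\pset_c|\bigr)^{1/2^k}$ is bounded using \Cref{clm: pairs added}, while the ball-factor $\bigl(\sum_{\bset_i}|\ball(c,4r)|^{2^k/(2^k-1)}\bigr)^{(2^k-1)/2^k}$ is bounded by pulling out one copy of the uniform ball-size bound and applying $\sum_{\bset_i}|\ball(c,4r)|\le O(n^{1+\eps}/\eps)$. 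The main obstacle is the delicate but routine bookkeeping of exponents across these estimates; once assembled, the $|\pset|$-exponent lands at $\frac{1}{2^{k+1}}-5(k-1)\eps\le\frac{1}{2^{k+1}}$ and the $n$-exponent at $1+(10k-9)\eps+\frac{2\eps}{2^k}+O(\eps^2)\le 1+(10k-8)\eps$. Summing over the $\ceil{1/\eps}+1$ levels and multiplying by $2^{2(k-1)/\eps}\le 2^{2k/\eps}$ yields the claimed $\tilde O\!\left(2^{2k/\eps}\cdot n^{1+(10k-8)\eps}\cdot|\pset|^{1/2^{k+1}}\right)$ bound.
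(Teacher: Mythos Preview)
Your proposal is correct and follows essentially the same approach as the paper: split $E(H_D)$ into BFS trees and inner spanners, handle level $0$ via \Cref{large-ball-size} (your observation that level-$0$ indices never enter a leaf interval is exactly the paper's ``$\pset_c$ stays unchanged during Step 2''), and bound each level $i\ge 1$ using \Cref{clm: pairs added} together with the per-level size cap. The only cosmetic difference is that the paper pulls out the entire factor $|\ball(c,4r)|^{1+10(k-1)\eps}$ and then applies Jensen's inequality to $\sum_{\bset_i}|\pset_c|^{1/2^k}$ together with the count $|\bset_i|=O(n^{(i+1)\eps}\sqrt{|\pset|}/\eps)$, whereas you keep one factor of $|\ball(c,4r)|$ inside the sum and apply H\"older with exponents $(2^k,2^k/(2^k-1))$; these are interchangeable and lead to the same final exponents.
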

\begin{proof}
	For level $0$, for each ball $\ball(c, r)\in \clusters_0$, since $\pset_c$ stays unchanged during Step 2, by \Cref{large-ball-size}, we have $|\pset_c|\leq O(|\pset|^{1/2}\log n)$ at the end of the algorithm. Therefore, by the inductive hypothesis on sublinear additive spanners with stretch function $f_{k-1, \epsilon}(\cdot)$, the number of edges added to $H$ by balls in $\clusters_0$ is asymptotically (up to an $O(\log n)$ factor) bounded by:
	$$\sum_{\ball(c, r)\in \clusters_0}2^{2(k-1)/\epsilon}\cdot|\ball(c, 4r)|^{1+10(k-1)\epsilon}\cdot |\pset_c|^{1/2^k}\leq \tilde{O}(2^{2k}n^{1+(10k-8)\epsilon}|\pset|^{1/2^{k+1}}).$$
	
	For each level $1\leq i\leq \ceil{1/\eps}$, by the inductive hypothesis on sublinear additive spanners with stretch function $f_{k-1, \epsilon}(\cdot)$, the number of edges added to $H$ by balls in $\clusters_i$ is asymptotically (up to an $O(\log n)$ factor) bounded by:
	\[\begin{aligned}
		&\sum_{\ball(c, r)\in \clusters_i}2^{2(k-1)/\epsilon}\cdot|\ball(c, 4r)|^{1+10(k-1)\epsilon}\cdot |\pset_c|^{1/2^k}\\
		&\leq 2^{2(k-1)/\epsilon}\cdot\left(\frac{n^{1-(i-1)\epsilon}}{|\pset|^{1/2}}\right)^{1+10(k-1)\epsilon}\cdot\sum_{\ball(c, r)\in \clusters_i}|\pset_c|^{1/2^{k}}\\
		&\leq 2^{2(k-1)/\epsilon}\cdot\frac{n^{1-(i-1)\epsilon + 10(k-1)\epsilon}}{|\pset|^{1/2}}\cdot |\clusters_i|\cdot \left(\frac{\sum_{\ball(c, r)\in \clusters_i}|\pset_c|}{|\clusters_i|}\right)^{1/2^{k}}\\
		&\leq 2^{2(k-1)/\epsilon}\cdot\frac{n^{1-(i-1)\epsilon + 10(k-1)\epsilon}}{|\pset|^{1/2}}\cdot O(n^{(i+1)\epsilon}|\pset|^{1/2}/\epsilon^2)\cdot \tilde{O}\left(2^{\ceil{1/\epsilon}}|\pset|^{1/2}\right)^{1/2^{k}}\\
		&\leq \tilde{O}(2^{2k/\epsilon}\cdot n^{1 + (10k-8)\epsilon}|\pset|^{1/2^{k+1}}).
	\end{aligned}\]
where the second inequality uses concavity of $x^{1/2^k}$ and Jensen's inequality, and the third inequality uses \Cref{clm: pairs added}, and the last inequality uses the fact that $2^{\frac{1}{\epsilon} - \ceil{\frac{1}{\epsilon}}^{1/2^k}} > 1 / \epsilon^2$ which holds as $k\geq 1$ and $\epsilon < 0.1$. Summing over all different levels $0\leq i\leq \ceil{1/\epsilon}$, we get that $|E(H_D)|=\tilde{O}(2^{2k/\epsilon}\cdot n^{1+(10k-8)\epsilon}|\pset|^{1/2^{k+1}})$.
\end{proof}

From \Cref{lem: size of dist D spanner} and by summing over all $D\in \set{1,2,\ldots,2^{\floor{\log n}}}$, we get that  $|E(H)|\le \tilde{O}(2^{2k/\epsilon}\cdot n^{1+(10k-8)\epsilon}\cdot |\pset|^{1/2^{k+1}}) = O(2^{2k/\epsilon}n^{1+10k\epsilon}|\pset|^{1/2^{k+1}})$, which completes the size analysis by induction.
\section{Almost Optimal Sublinear Additive Spanners}
\label{sec: sublinear spanner}

In this section, we provide the proof of \Cref{sublinear}.
In fact, we will prove the following theorem.

\begin{theorem}
\label{sublinear_real}
For any undirected unweighted graph $G = (V, E)$ on $n$ vertices, any parameter $\epsilon\in (0, 0.1)$ and any integer $k\ge 1$, there is a subgraph $H\subseteq G$ with $|E(H)|\le O\left(n^{1 + 10k\eps +\frac{1}{2^{k+1} - 1}}\right)$, such that for every pair $u,v\in V(G)$, $\dist_{H}(u,v)\le \dist_{G}(u,v)+2^{30k/\epsilon}\cdot \dist_{G}(u,v)^{1-1/k}$.
\end{theorem}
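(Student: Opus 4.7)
The plan is to prove \Cref{sublinear_real} by induction on $k$, using \Cref{pairwise-sublinear} as the key black box together with a hitting-set reduction from all-pair to $S\times S$ distance preservation. For the base case $k=1$, the target stretch $f_{1,\eps}(d) = d + O_\eps(1)$ is met by the classical $+6$ additive spanner of \cite{baswana2010additive} with $O(n^{4/3})$ edges, which comfortably fits inside $O(n^{4/3+10\eps})$.

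For the inductive step $k\ge 2$, I would build $H = \bigcup_{0\le i\le \lfloor\log n\rfloor} H_{2^i}$ scale-by-scale, where for each $D = 2^i$ the piece $H_D$ preserves all pairs $(s,t)$ with $D\le \dist_G(s,t) < 2D$. To construct $H_D$, first apply \Cref{clustering} with radius parameter $R = D^{1-1/k}$ and parameter $\eps$ to obtain a collection $\balls$ of balls, then set the threshold $\ell = n^{(2^k-1)/(2^{k+1}-1)}$ and call a ball $\ball(c,r)\in\balls$ \emph{small} if $|\ball(c,r)|\le \ell$ and \emph{large} otherwise. The spanner $H_D$ is the union of the following three ingredients:
\begin{enumerate}[(i)]
\item for every small ball, a copy of the sublinear additive spanner produced by the induction hypothesis with parameter $k-1$ applied to $G[\ball(c,4r)]$;
\item for every large ball, a BFS tree of $G[\ball(c,4r)]$ rooted at $c$;
\item the pairwise sublinear additive spanner produced by \Cref{pairwise-sublinear} with parameters $k,\eps$ on $\pset = S\times S$, where $S\subseteq V$ is a uniformly random sample of $\lceil 10(n/\ell)\log n\rceil$ vertices, which with high probability hits every large ball.
\end{enumerate}

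The size analysis balances the small-ball contribution against the pairwise contribution. Using \Cref{clustering} (in particular that $|\ball(c,4r)|\le n^\eps\ell$ for small balls and that $\sum_{\ball\in\balls}|\ball(c,4r)|=O(n^{1+\eps}/\eps)$) together with the induction hypothesis, the small-ball cost is at most $n^{1+O(k\eps)}\cdot \ell^{1/(2^k-1)}$. The pairwise spanner on $S\times S$ contributes $n^{1+10k\eps}\cdot (n/\ell)^{1/2^k}$ edges. The choice $\ell = n^{(2^k-1)/(2^{k+1}-1)}$ equates both exponents to $1+\tfrac{1}{2^{k+1}-1}+O(k\eps)$, the BFS trees only cost $O(n^{1+\eps})$, and the $O(\log n)$ factor from summing over $D$ is absorbed into the $n^{O(k\eps)}$ slack.

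For stretch, fix $(s,t)$ with $\dist_G(s,t)\in[D,2D)$, take a shortest path $\pi$ in $G$, and apply $\pathpart$ from \Cref{sec: pairwise} to decompose $\pi = \alpha_1\circ\cdots\circ\alpha_l$ hosted by balls $\ball(c_i,r_i)$ with $l\le 3D^{1/k}$ by \Cref{obs}. If every hosting ball is small, each $\dist_{H_D}(s_i,t_i) \le |\alpha_i| + O_\eps(R^{1-1/(k-1)})=|\alpha_i|+O_\eps(D^{(k-2)/k})$ by the induction hypothesis inside $\ball(c_i,4r_i)$, and telescoping over the $l\le 3D^{1/k}$ subpaths yields cumulative additive stretch $O_\eps(D^{1-1/k})$. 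Otherwise, let $i^*$ and $j^*$ be the first and last indices whose hosting ball is large, pick $u\in S\cap\ball(c_{i^*},r_{i^*})$ and $v\in S\cap\ball(c_{j^*},r_{j^*})$, and route in $H_D$ from $s$ to $s_{i^*}$ through small-ball spanners, from $s_{i^*}$ to $u$ via the BFS tree of $\ball(c_{i^*},4r_{i^*})$ incurring a detour of $O_\eps(R)$, from $u$ to $v$ via the $S\times S$ pairwise spanner with additive stretch $O_\eps(D^{1-1/k})$, and symmetrically from $v$ back to $t$. Using the $2^{30/\eps}$ slack per induction level keeps the total additive error within $2^{30k/\eps}D^{1-1/k}$. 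The main obstacle is the stretch bookkeeping in this mixed case: one must ensure that the telescoped small-ball stretches at the two tails, the constant-many $O_\eps(R)$ BFS detours to $S$-vertices, and the $+O_\eps(D^{1-1/k})$ pairwise stretch all compose without blowing up the base-$2$ constant in the exponent, mirroring the tight additive-stretch analyses of \Cref{C0} and \Cref{C0-ineq} in the pairwise proof.
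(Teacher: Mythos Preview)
Your plan is sound and in fact slightly simpler than the paper's, but it differs in one structural point worth noting. In the paper, the $S\times S$ distances are not handled by a single global call to \Cref{pairwise-sublinear}. Instead, for each scale $D$ the paper runs a path-buying pass over the $S\times S$ shortest paths (mirroring Step~1 of \Cref{sec: pairwise}): it \pathpart s each such path, assigns demand pairs $(s_i,t_i)$ only to the \emph{large} hosting balls, and then invokes \Cref{pairwise-sublinear} with parameter $k-1$ \emph{inside} each ball $G[\ball(c,4r)]$ on the accumulated $\pset_c$. The $U_c$-bookkeeping caps $|\pset_c|\le |S|$, yielding a per-ball cost $|\ball(c,4r)|^{1+10(k-1)\eps}|S|^{1/2^{k}}$, which after summing matches your global bound $n^{1+10k\eps}|S|^{1/2^k}$. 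So the two routes land on the same size exponent; yours treats \Cref{pairwise-sublinear} as a single black box, the paper unrolls one layer of it.

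The only wrinkle in your version is the stretch constant. Because you call \Cref{pairwise-sublinear} at level $k$, the $u$--$v$ segment already consumes $2^{30k/\eps}\dist_G(u,v)^{1-1/k}\approx 2\cdot 2^{30k/\eps}D^{1-1/k}$ of the error budget before the small-ball tails and BFS detours are added; there is no $2^{30/\eps}$ headroom left, and the final bound comes out as a constant multiple of $2^{30k/\eps}D^{1-1/k}$ rather than exactly $2^{30k/\eps}D^{1-1/k}$. The paper avoids this by using level $k-1$ locally on segments of length $O_\eps(R)$, so the pairwise contribution per segment is only $O(2^{(30k-20)/\eps}D^{1-2/k})$ and the inductive constant closes. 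Your route can be repaired by carrying a slightly larger constant (e.g.\ $2^{31k/\eps}$) through the induction; as written, the bookkeeping you flag as the ``main obstacle'' does not quite close with the constant $2^{30k/\eps}$ stated in \Cref{sublinear_real}.
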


Note that, for any constant parameter $\delta>0$, if we let $\eps=\frac{\delta}{10k(2^{k+1}-1)}$, then \Cref{sublinear_real} gives a sublinear spanner with stretch function $f(d)=d+2^{O(k^2 2^k/\delta)}\cdot d^{1-1/k}$ on $O(n^{1+\frac{1+\delta}{2^{k+1}-1}})$ edges, which implies \Cref{sublinear}.

In the remainder of this section, we provide the proof of \Cref{sublinear_real} by induction on $k$, with the help of \Cref{pairwise-sublinear}. 
The base case is when $k=1$. We note that it was shown by \cite{baswana2010additive} that any undirected unweighted graph admits an $+6$-additive spanner of size $O(n^{4/3})$, and the base case follows from this result. Assume from now on that  \Cref{sublinear_real} is true for $1,\ldots,(k-1)$.
Similar to \Cref{sec: pairwise}, we denote $f_{k, \epsilon}(d) = d + 2^{30k/\epsilon}\cdot d^{1-1/k}$ for brevity.

 
Similar to the algorithm in \Cref{sec: pairwise}, for each $D\in \set{1,2,2^2,\ldots,2^{\floor{\log n}}}$, we will construct a subgraph $H_D\subseteq G$, such that for all pairs $s,t\in V(G)$ with $D\le \dist_G(s, t) <2D$, $\dist_H(s, t)\leq \dist_G(s, t) + O(D^{1-1/k})$ holds.
We will then let $H=\bigcup_{0\le i\le \floor{\log n}}H_{2^i}$ to finish the construction.
For convenience, we assume that $D^{1-1/k}$ is an integer\footnote{Our computation will in fact use $\floor{D^{1-1/k}}$. For notational convenience we omit it and just write $D^{1-1/k}$ instead.}.

\subsection{Algorithm description}

We now describe the construction of graph $H_D$. We first apply the algorithm of \Cref{clustering} to $G$ with parameters $R = D^{1-1/k}$ and $\epsilon$. Let $\clusters$ be the set of balls we obtain. Let $L_k$ be a threshold value to be determined later. We say that a ball $\ball(c, r)\in\clusters$ is \emph{small} if $|\ball(c, r)| \leq L_k$, otherwise we say it is \emph{large}.

Similar to the algorithm in \Cref{sec: pairwise}, the spanner $H_D$ is the union of the following graphs:
\begin{enumerate}[(i)]
	\item for each ball $\ball(c, r)\in\clusters$, a BFS tree that is rooted at $c$ and spans all vertices in $G[\ball(c, 4r)]$;
	
	\item for each small ball $\ball(c, r)\in \clusters$, a spanner of the induced subgraph $G[\ball(c, 4r)]$, with stretch function $f_{k-1, \epsilon}$ and size $O\left(|\ball(c, 4r)|^{1 + 10(k-1)\eps +\frac{1}{2^{k} - 1}}\right)$, whose existence is guaranteed by the inductive hypothesis;
	
	\item for each ball $\ball(c, r)\in\clusters$, a pairwise spanner with respect to a collection $\pset_c$ of pairs in $\ball(c, 2r)$, with stretch function $f_{k-1, \epsilon}$ and size $O(2^{2(k-1)/\epsilon}|\ball(c, 4r)|^{1+10(k-1)\epsilon}|\pset_c|^{1 / 2^{k}})$, whose existence is guaranteed by the inductive hypothesis; we will guarantee that for each demand pair $(s, t)\in \pset_c$, any $s$-$t$ shortest path in $G$ lies in the induced subgraph $G[\ball(c, 4r)]$. The construction of sets $\pset_c$ is iterative and described next.
\end{enumerate}


Let $S$ be a random subset of $V$ of size $\ceil{\frac{10n}{L_k}\log n}$, so with high probability, $S$ intersects all large balls in $\balls$.
We then compute $\Pi = \{\pi_{s, t}\mid s, t\in S, \dist_{G}(s, t) < 2D + 4\cdot 2^{10/\epsilon}D^{1-1/k}\}$, where $\pi_{s,t}$ is an $s$-$t$ shortest path in $G$.
We now proceed to iteratively construct the sets $\set{\pset_c}$ of pairs. Throughout, we maintain, for each large  ball $\ball(c,r)\in \bset$, a set $\pset_c$ of pairs of vertices in $\ball(c,2r)$, and another set $U_c$ of vertices in $V$ which intuitively contains all vertices that are ``settled with the ball $\ball(c, r)$''.

We then process all paths in $\Pi$ sequentially in an arbitrary order. For each path $\pi_{s, t}\in \Pi$, we first apply the subroutine \pathpart to it and the collection $\bset$ of balls, and obtain a partitioning $\pi_{s,t} = \alpha_1\circ\alpha_2\circ\cdots \circ\alpha_l$. 
For each $1\le i\le l$, we denote by $\ball(c_i, r_i)$ the ball that hosts the subpath $\alpha_i$.


If either all the balls $\ball(c_1, r_{1}), \ball(c_2, r_{2}), \ldots, \ball(c_l, r_{l})$ are small, or there exists a large ball $\ball(c_i, r_{i})$ such that $s, t\in U_{c_i}$, then we do nothing and move on to the next path in $\Pi$. Otherwise, for each large ball $\ball(c_i, r_{i})$, we add vertices $s,t$ to set $U_{c_i}$, and add the pair $(s_i, t_i)$ to set $\pset_c$.
This completes the description of the construction of sets $\set{\pset_c}$, and also finishes the description of the algorithm.

Before we proceed to the size and stretch analysis, we prove the following simple observation.

\begin{observation}\label{demand-size}
In the end, for each ball $\ball(c, r)\in \clusters$, $|\pset_c|\leq |S| = \ceil{\frac{10n}{L_k}\log n}$.
\end{observation}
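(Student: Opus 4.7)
My plan is to mimic the proof of \Cref{large-ball-size} from the pairwise section, since the construction here has the same key invariant: each addition to $\pset_c$ is accompanied by the inclusion of a previously-missing vertex into $U_c\subseteq S$.

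First, I would separate the argument by ball type. For a small ball $\ball(c,r)\in \clusters$, the algorithm description explicitly restricts the "otherwise" branch to large balls $\ball(c_i,r_i)$, so $\pset_c$ is never modified and remains empty. The bound $|\pset_c|\le |S|$ is then trivial, and attention can focus on the case when $\ball(c,r)$ is large.

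Next, for a large ball $\ball(c,r)$, I would analyze what must be true at the moment a pair is added to $\pset_c$. Such an addition occurs only while processing some path $\pi_{s,t}\in \Pi$, and only if the "do nothing" condition failed --- in particular, there is no large ball $\ball(c_j,r_j)$ among those produced by $\pathpart$ with both $s,t\in U_{c_j}$. Since $c$ itself is one of those $c_j$'s and is large, we must have $\{s,t\}\not\subseteq U_c$, so the subsequent update inserts at least one new vertex of $S$ into $U_c$. Combined with \Cref{different-balls} (which guarantees that the large ball $\ball(c,r)$ is picked at most once during $\pathpart$ on $\pi_{s,t}$, hence at most one pair from $\pi_{s,t}$ is added to $\pset_c$), this gives an injection from additions into $\pset_c$ to new elements of $U_c$.

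Finally, since $U_c$ only ever absorbs endpoints of paths in $\Pi$, and those endpoints all lie in $S$, we get $|U_c|\le |S|$. Chaining the two bounds yields $|\pset_c|\le |U_c|\le |S|=\ceil{(10n/L_k)\log n}$, as required. I do not anticipate a real obstacle here; the only subtle point is verifying the one-pair-per-path statement for a fixed large ball $c$, which is immediate from \Cref{different-balls}.
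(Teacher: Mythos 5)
Your proposal is correct and follows essentially the same approach as the paper's (much terser) proof: each addition of a pair to $\pset_c$ is charged to the insertion of a new vertex of $S$ into $U_c$, giving $|\pset_c|\le|U_c|\le|S|$. Your extra care — noting that small balls never receive pairs, and invoking \Cref{different-balls} to ensure at most one pair per path is charged to a given ball — only makes explicit details the paper leaves implicit.
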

\begin{proof}
	By the algorithm, each time a new pair is added to $\pset_c$, $|U_c|$ also increases by at least one. Therefore $|\pset_c|\leq |U_c|\leq |S| = \ceil{\frac{10n}{L_k}\log n}$.
\end{proof}


\subsection{Stretch analysis}

The stretch analysis of the algorithm in \Cref{sec: sublinear spanner} is quite similar to the Step 1 stretch analysis in \Cref{sec: pairwise} (\Cref{C0} and \Cref{C0-ineq}). We start with the following claims.

\begin{claim}\label{path-buy}
In the end, for each large ball $\ball(c, r)\in \balls$ and each vertex $s\in U_c$,
$$\dist_{H_D}(s, c)\leq \dist_{G}(s, c) + 50\cdot 2^{(30k-10)/\epsilon}D^{1-1/k}.$$
\end{claim}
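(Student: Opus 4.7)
The proof will closely mirror that of Claim \ref{C0} from the pairwise-spanner section, which is the direct analogue of this statement. Since $s \in U_c$, there is a moment in the algorithm at which $s$ was inserted into $U_c$; by construction, this happens while processing some path $\pi_{s,t} \in \Pi$ (without loss of generality directed from $s$ to $t$), for which the algorithm entered the ``otherwise'' branch. Let $\pi_{s,t} = \alpha_1 \circ \alpha_2 \circ \cdots \circ \alpha_l$ be the partition produced by \pathpart, with host balls $\ball(c_j, r_j)$ and endpoints $s_j,t_j$, and let $i$ be the index with $\ball(c_i, r_i) = \ball(c, r)$. The goal is to build a short $s$-to-$c$ walk in $H_D$ by concatenating near-shortest replacements $\phi_j$ for each segment $\alpha_j$ with $1 \le j \le i$, and then using the BFS tree of $\ball(c,4r)$ (ingredient (i)) to travel from $t_i \in \ball(c,2r)$ to $c$.

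The only substantive departure from the pairwise-spanner argument is that not every host ball receives a demand pair. I will split into two cases based on the size of $\ball(c_j,r_j)$. If $\ball(c_j,r_j)$ is \emph{large}, then because we entered the ``otherwise'' branch, $(s_j,t_j)$ was added to $\pset_{c_j}$, so ingredient (iii) yields a path $\phi_j$ inside $G[\ball(c_j,4r_j)] \cap H_D$ with $|\phi_j| \le |\alpha_j| + 2^{30(k-1)/\epsilon} |\alpha_j|^{1-1/(k-1)}$. If $\ball(c_j,r_j)$ is \emph{small}, then ingredient (ii) plants in $H_D$ an entire sublinear spanner of $G[\ball(c_j,4r_j)]$ with the same stretch function $f_{k-1,\epsilon}$, which (since $\alpha_j$ lies in $\ball(c_j,4r_j)$) produces a $\phi_j$ of exactly the same length bound. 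Thus uniformly over $j$, using $|\alpha_j| \le 4r_j \le 4 \cdot 2^{10/\epsilon} D^{1-1/k}$, we get $|\phi_j| \le |\alpha_j| + 4 \cdot 2^{(30k-20)/\epsilon} D^{1-2/k}$.

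Summing over $j = 1,\dots,i$ and using Observation \ref{obs} together with $|\pi_{s,t}| < 2D + 4 \cdot 2^{10/\epsilon} D^{1-1/k}$ to bound $l = O(D^{1/k})$, the cumulative additive error is $O(2^{(30k-20)/\epsilon} D^{1-1/k})$, yielding $\dist_{H_D}(s,t_i) \le \dist_G(s,t_i) + O(2^{(30k-20)/\epsilon} D^{1-1/k})$. The BFS tree gives $\dist_{H_D}(t_i,c) \le 2r \le 2 \cdot 2^{10/\epsilon} D^{1-1/k}$, and triangle inequality in $G$ gives $\dist_G(s,t_i) \le \dist_G(s,c) + 2 \cdot 2^{10/\epsilon} D^{1-1/k}$. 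Chaining the three inequalities and absorbing the smaller $2^{10/\epsilon}$ terms into the dominant $2^{(30k-20)/\epsilon}$ term delivers $\dist_{H_D}(s,c) \le \dist_G(s,c) + 50 \cdot 2^{(30k-10)/\epsilon} D^{1-1/k}$.

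There is no real obstacle here: the only conceptual difference from Claim \ref{C0} is the large/small dichotomy, and both ingredients (ii) and (iii) were designed precisely so that the per-segment stretch is the same regardless of which case applies. The constants match Claim \ref{C0} up to insignificant factors, and the base-case application of the inductive hypothesis on $f_{k-1,\epsilon}$ is valid since $|\alpha_j| \le 4 \cdot 2^{10/\epsilon} D^{1-1/k}$ is safely within the regime where $|\alpha_j|^{1-1/(k-1)} \le (4 \cdot 2^{10/\epsilon})^{1-1/(k-1)} D^{1-2/k}$.
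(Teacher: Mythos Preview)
Your proposal is correct and follows essentially the same approach as the paper's proof: reconstruct a short $s$-to-$c$ walk segment by segment, using ingredient (ii) for small host balls and ingredient (iii) for large ones, then finish with the BFS tree inside $\ball(c,4r)$. The only cosmetic difference is that the paper sums over $j=1,\ldots,i-1$ and jumps from $t_{i-1}=s_i\in\ball(c_i,r_i)$ to $c$ via the BFS tree, whereas you include the $i$-th segment and jump from $t_i\in\ball(c_i,2r_i)$; both are valid and yield the same constants.
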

\begin{proof}
Assume that the shortest path $\pi_{s,t}$ was being processed when $s$ was added to $U_c$, and that $\ball(c, r)$ was $\ball(c_i, r_i)$ under the notation of the subroutine \pathpart. 
Next, we will construct a short path from $s$ to $c$ in $H_D$. For each $1\leq j<i$, consider the shortest path $\alpha_j = \pi[s_j, t_j]$ and the ball $\ball(c_j, r_{j})$. By \Cref{obs}, $\alpha_j$ lies entirely within $G[\ball(c_j, 4r_{j})]$. Let $\phi_j$ be a shortest path from $s_j$ to $t_j$ in $H_D$. 
We distinguish between the following cases.
\begin{itemize}[leftmargin=*]
	\item $\ball(c_j, r_{j})$ is small.
	
	Recall that graph $H_D$ contains a sublinear spanner of the induced subgraph $G[\ball(c_j, 4r_{j})]$ with stretch function $f_{k-1,\eps}$. Therefore,  
	$$\begin{aligned}
		|\phi_j| \leq |\alpha_j| + 2^{30(k-1)/\epsilon}\cdot |\alpha_j|^{1-1/(k-1)} &\leq |\alpha_j| + 2^{30(k-1)/\epsilon}\cdot \left(4\cdot 2^{10/\epsilon}\cdot D^{1-1/k}\right)^{1-1/(k-1)}\\
		&\leq |\alpha_j| + 4\cdot 2^{(30k-20)/\epsilon}D^{1-2/k}.
	\end{aligned}$$

	\item $\ball(c_j, r_{j})$ is large.
	
	From the algorithm description, the pair $(s_j, t_j)$ was added to set $\pset_c$ in this iteration, and $\alpha_j$ is contained entirely within $G[\ball(c_j, 4r_{j})]$ (from \Cref{obs}). Therefore, 
	$$|\phi_j| \leq f_{k-1, \epsilon}(|\alpha_j|) \leq |\alpha_j| + 4\cdot 2^{(30k-20)/\epsilon}D^{1-2/k}.$$
\end{itemize}

As $t_{i-1} = s_i\in \ball(c_i, r_{i})$ holds for every $1\leq j<i$, we have $\dist_G(c, t_{i-1})\leq r \leq 2^{10/\epsilon}D^{1-1/k}$, and therefore
	$$\begin{aligned}
		\sum_{j=1}^{i-1}|\phi_j| &\leq \sum_{j=1}^{i-1}\left(|\alpha_j| + 4\cdot 2^{(30k-20)/\epsilon}D^{1-2/k}\right)\leq \sum_{j=1}^{i-1}|\alpha_j| + O_\epsilon(D^{1-1/k})\\
		&\leq \dist_G(s, c) + \dist_G(c, t_{i-1}) + i\cdot 4\cdot 2^{(30k-20)/\epsilon}D^{1-2/k}\\
		&< \dist_G(s, c) + 49\cdot 2^{(30k-10)/\epsilon}D^{1-1/k}.
	\end{aligned}$$
Finally, we let $\phi_i$ be an arbitrary shortest path connecting $t_{i-1}$ to $c$ in $H_D$. Since $H_D$ contains a breadth-first search tree rooted at $c$ that spans all vertices in $\ball(c, 4r)$, $|\phi_i|\leq 4r\leq 4\cdot 2^{10/\epsilon}D^{1-1/k}$. Therefore, $\rho = \phi_1\circ\phi_2\circ\cdots\circ\phi_i$ is a path in $H_D$ that connects $s$ and $c$, and moreover,
	$$\dist_{H_D}(s, c)\leq \sum_{j=1}^{i}|\phi_j| \leq \dist_G(s, c) + 50\cdot 2^{(30k-10)/\epsilon}D^{1-1/k}.$$
\end{proof}

\begin{claim}\label{hitset}
	For any pair of vertices $s, t\in S$ such that $\dist_G(s, t) < 2D +  4\cdot 2^{10/\epsilon}D^{1-1/k}$, 
	$$\dist_{H_D}(s, t)\leq \dist_G(s, t) + 101\cdot 2^{(30k-10)/\epsilon}D^{1-1/k}.$$
\end{claim}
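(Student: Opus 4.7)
The plan is to mimic the two-case analysis from \Cref{C0-ineq} in the Step 1 stretch proof of \Cref{sec: pairwise}, branching on which arm of the algorithm was taken when the path $\pi_{s,t}$ was processed. Let $\pi_{s,t} = \alpha_1\circ\cdots\circ\alpha_l$ be the partition obtained by $\pathpart$, with hosting balls $\ball(c_j, r_j)$ and endpoints $s_j, t_j$.

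First I would handle the \emph{skip case}: there exists a large ball $\ball(c_i, r_i)$ hosting some $\alpha_i$ with $s, t\in U_{c_i}$ at the time $\pi_{s,t}$ was processed. (Note this case also covers the scenario where $s, t$ were already in $U_{c_i}$ for some large hosting ball from an earlier iteration.) The key tool here is \Cref{path-buy}, which gives $\dist_{H_D}(s, c_i)\le \dist_G(s, c_i) + 50\cdot 2^{(30k-10)/\epsilon}D^{1-1/k}$ and the analogous bound for $t$. Because $s_i, t_i$ lie on $\pi_{s,t}$ and $c_i$ is at distance at most $r_i\le 2^{10/\epsilon}D^{1-1/k}$ from them, triangle inequality yields $\dist_G(s, c_i)+\dist_G(c_i, t)\le \dist_G(s,t)+2\cdot 2^{10/\epsilon}D^{1-1/k}$. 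Summing and using $\dist_{H_D}(s,t)\le \dist_{H_D}(s,c_i)+\dist_{H_D}(c_i,t)$ gives the desired bound with plenty of slack.

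Next I would handle the remaining case, where either every hosting ball is small or no large hosting ball had both $s, t$ in its $U$-set (so the algorithm added $(s_j, t_j)$ to $\pset_{c_j}$ for every large $\ball(c_j, r_j)$). The point is that in both sub-cases $H_D$ contains, for every $j$, a subgraph that approximately preserves the $s_j$-$t_j$ distance: for small $\ball(c_j, r_j)$ it is the inductive $f_{k-1,\epsilon}$-sublinear spanner of $G[\ball(c_j,4r_j)]$, and for large $\ball(c_j, r_j)$ it is the inductive $f_{k-1,\epsilon}$-pairwise spanner on $\pset_{c_j}$. Since $|\alpha_j|\le 4r_j\le 4\cdot 2^{10/\epsilon}D^{1-1/k}$, each such path $\phi_j$ in $H_D$ satisfies $|\phi_j|\le |\alpha_j|+4\cdot 2^{(30k-20)/\epsilon}D^{1-2/k}$. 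Concatenating and applying \Cref{obs} to bound $l\le O(2^{10/\epsilon}D^{1/k})$ (using $\dist_G(s,t)<2D+4\cdot 2^{10/\epsilon}D^{1-1/k}$ and $R=D^{1-1/k}$), the accumulated additive error is $O(2^{(30k-10)/\epsilon}D^{1-1/k})$, comfortably within $101\cdot 2^{(30k-10)/\epsilon}D^{1-1/k}$.

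The main obstacle I anticipate is not technical subtlety but bookkeeping: verifying that the ``otherwise'' branch really treats small and large balls uniformly, i.e.\ that the small-ball spanner from item (ii) in the construction and the pairwise spanner from item (iii) give identical per-segment guarantees. This is where the fact that $\alpha_j$ lies entirely inside $G[\ball(c_j, 4r_j)]$ (established in \Cref{sec: pairwise} via triangle inequality) becomes essential, since both spanners are built on this induced subgraph. Modulo that observation, the argument reduces cleanly to the same computations as in \Cref{C0-ineq}, just with the additional branching on ``small vs. large host ball''.
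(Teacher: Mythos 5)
Your proof is correct and follows essentially the same two-case argument as the paper: the skip case is handled via \Cref{path-buy} plus a triangle inequality through $c_i$, and the remaining case concatenates per-segment paths $\phi_j$ with error $4\cdot 2^{(30k-20)/\epsilon}D^{1-2/k}$ each and sums via \Cref{obs}. Your explicit remark that small and large hosting balls yield the same per-segment guarantee (via items (ii) and (iii) of the construction, respectively, both applied inside $G[\ball(c_j,4r_j)]$) is a point the paper glosses over but is exactly the right justification.
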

\begin{proof}
For any such pair of vertices $s, t\in S$, consider the moment when the shortest path $\pi_{s, t}\in \Pi$ was processed and partitioned into $\alpha_1,\ldots,\alpha_l$. We distinguish between the following two cases.
	\begin{itemize}[leftmargin=*]
		\item There existed an index $1\le i\le l$ such that $s, t\in U_{c_i}$ at the moment.
		
		In this case, by \Cref{path-buy}, $\dist_{H_D}(s, c_i)\leq \dist_G(s, c_i) + 50\cdot 2^{(30k-10)/\epsilon}D^{1-1/k}$, and $\dist_{H_D}(c_i, t)\leq \dist_G(c_i, t) + 50\cdot 2^{(30k-10)/\epsilon}D^{1-1/k}$. By triangle inequality,
		$$\begin{aligned}
			\dist_{G}(s, c_i) + \dist_{G}(c_i, t)&\leq (\dist_G(s, s_i) + \dist_G(s_i, c_i)) + (\dist_G(s_i, t) + \dist_G(s_i, c_i))\\
			&\leq \dist_G(s, t) + 2\cdot 2^{10/\epsilon}D^{1-1/k}.
		\end{aligned}$$
		Therefore, $\dist_{H_D}(s, t)\leq \dist_{H_D}(s, c_i) + \dist_{H_D}(c_i, t)\leq \dist_G(s, t) +101\cdot 2^{(30k-10)/\epsilon}D^{1-1/k}$.
		
		\item There did not exist any $i$ such that $s, t\in U_{c_i}$ at the moment.
		
		In this case, for each $1\leq j\leq l$, the pair $(s_j, t_j)$ is added to the set $\pset_{c_j}$ after this iteration. According to the algorithm, in the resulting graph $H_D$, for each $1\leq i\leq l$, there is a path $\phi_i$ in ${H_D}$ between $s_i, t_i$ such that $|\phi_i|\leq |\alpha_i| + 4\cdot 2^{(30k-20)/\epsilon}D^{1-2/k}$. By  \Cref{obs},
		$$\begin{aligned}
			\dist_{H_D}(s, t)&\leq \sum_{i=1}^l|\phi_i|\leq \sum_{i=1}^l\left(|\alpha_i| + 4\cdot 2^{(30k-20)/\epsilon}D^{1-2/k}\right)\\
			&\leq \dist_G(s, t) + 48\cdot 2^{(30k-10)/\epsilon}D^{1-1/k}.
		\end{aligned}$$
	\end{itemize}
\end{proof}

In the following lemma, we complete the analysis of stretch of the graph $H_D$ on pairs of vertices in $G$ at distance at most $2D$.

\begin{lemma}
	For any pair of vertices $s, t\in V$ such that $\dist_G(s, t) < 2D$, we have: 
	$$\dist_{H_D}(s, t)\leq \dist_G(s, t) + 2^{30k/\epsilon}D^{1-1/k}.$$
\end{lemma}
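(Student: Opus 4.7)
The plan is to apply the subroutine \pathpart to a shortest $s$-$t$ path $\pi$ and split into two cases based on whether any of the hosting balls is large. Let $\pi = \alpha_1 \circ \alpha_2 \circ \cdots \circ \alpha_l$ be the output partitioning, with ball $\ball(c_i, r_i)$ hosting $\alpha_i$; by \Cref{obs}, $l \leq |\pi|/R + 1 \leq 3D^{1/k}$, and each $\alpha_i$ lies in $G[\ball(c_i, 4r_i)]$ with length at most $4 \cdot 2^{10/\epsilon} D^{1-1/k}$.

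Case 1: all hosting balls are small. For each $i$, the induced subgraph $G[\ball(c_i, 4r_i)]$ contains a sublinear spanner with stretch $f_{k-1,\epsilon}$ (by the inductive hypothesis on \Cref{sublinear_real}), so there is an $s_i$--$t_i$ path $\phi_i$ in $H_D$ with
\[
|\phi_i| \leq |\alpha_i| + 2^{30(k-1)/\epsilon}\cdot |\alpha_i|^{1-1/(k-1)} \leq |\alpha_i| + 4\cdot 2^{(30k-20)/\epsilon} D^{1-2/k}.
\]
Summing over $i$ and using $l \leq 3D^{1/k}$, the total overhead is at most $12 \cdot 2^{(30k-20)/\epsilon} D^{1-1/k}$, which is well below the target bound.

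Case 2: some hosting ball is large. Let $a$ be the smallest index and $b$ the largest index whose hosting ball is large. Apply Case 1's argument to the prefix $\pi[s, s_a]$ and suffix $\pi[t_b, t]$ (whose hosting balls are all small by choice of $a, b$), obtaining paths in $H_D$ with additive overhead $O_\epsilon(D^{1-1/k})$. For the middle segment, invoke the hitting property of $S$: since $\ball(c_a, r_a)$ and $\ball(c_b, r_b)$ are large, with high probability there exist $u \in \ball(c_a, r_a) \cap S$ and $v \in \ball(c_b, r_b) \cap S$. By the triangle inequality,
\[
\dist_G(u, v) \leq \dist_G(s_a, t_b) + \dist_G(u, c_a) + \dist_G(c_a, s_a) + \dist_G(t_b, c_b) + \dist_G(c_b, v) \leq \dist_G(s,t) + 4 \cdot 2^{10/\epsilon} D^{1-1/k},
\]
so $\dist_G(u,v) < 2D + 4 \cdot 2^{10/\epsilon} D^{1-1/k}$ and \Cref{hitset} applies, giving $\dist_{H_D}(u, v) \leq \dist_G(u, v) + 101 \cdot 2^{(30k-10)/\epsilon} D^{1-1/k}$. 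The BFS trees of $G[\ball(c_a, 4r_a)]$ and $G[\ball(c_b, 4r_b)]$ planted in $H_D$ yield $\dist_{H_D}(s_a, u) \leq 2r_a$ and $\dist_{H_D}(v, t_b) \leq 2r_b$, each at most $2\cdot 2^{10/\epsilon} D^{1-1/k}$.

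Concatenating the five subpaths (prefix, $s_a$-to-$u$, $u$-to-$v$, $v$-to-$t_b$, suffix) and applying triangle inequality on the $G$-distances to collapse back to $\dist_G(s, t)$, the total additive overhead is bounded by a sum of $O_\epsilon(D^{1-1/k})$ terms whose leading coefficient is dominated by the $101 \cdot 2^{(30k-10)/\epsilon}$ from \Cref{hitset}; a routine accounting (essentially as in the last step of the proof of \Cref{lem: step 2}) shows this is at most $2^{30k/\epsilon} D^{1-1/k}$. The main obstacle is bookkeeping the constants so that the chain of triangle inequalities, together with the additive overhead from the $r_a, r_b \leq 2^{10/\epsilon}D^{1-1/k}$ slack, still fits inside the stated $2^{30k/\epsilon}$ factor; this is where the large base of the exponent pays off.
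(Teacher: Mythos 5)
Your proposal is correct and follows essentially the same route as the paper's proof: partition the shortest path with \pathpart, handle the all-small case by summing the $f_{k-1,\epsilon}$ stretches of the small-ball spanners, and otherwise bracket the large balls by the first and last large indices, jump to hitting-set vertices $u,v\in S$ via the BFS trees, and invoke \Cref{hitset} for the middle portion. The only difference is that you defer the final constant bookkeeping, which the paper carries out explicitly to land at $113\cdot 2^{(30k-10)/\epsilon}+8\cdot 2^{10/\epsilon}\le 2^{30k/\epsilon}$; this is routine as you say.
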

\begin{proof}
	Let $\pi$ be an $s$-$t$ shortest path. We apply the subroutine \pathpart to path $\pi$ and the set $\balls$  of balls and obtain a partition $\pi = \alpha_1\circ\alpha_2\circ\cdots\circ\alpha_l$ and balls $\ball(c_1, r_{1}), \ball(c_2, r_{2}), \ldots, \ball(c_l, r_{l})$. If all these balls are small, note that for each  $1\leq i\leq l$, the subpath $\alpha_i$ lies entirely in $G[\ball(r_i, r_{i})]$, so there exists a path $\phi_i$ in $H_D$, such that
	$$\begin{aligned}
		|\phi_i| \leq |\alpha_i| + 2^{30(k-1)/\epsilon}\cdot |\alpha_i|^{1-1/(k-1)} &\leq |\alpha_i| + 2^{30(k-1)/\epsilon}\cdot \left(4\cdot 2^{10/\epsilon}\cdot D^{1-1/k}\right)^{1-1/(k-1)}\\
		&\leq |\alpha_i| + 4\cdot 2^{(30k-20)/\epsilon}D^{1-2/k}.
	\end{aligned}$$
	Since $l \leq \ceil{\frac{|\pi|}{D^{1-1/k}}} < 2D^{1/k} + 1 < 3D^{1/k}$, we get that $\dist_{H_D}(s, t)\leq \dist_G(s, t) + 12\cdot 2^{(30k-20)\epsilon}D^{1-1/k}$.
	
	Assume from now on that, some ball among $\ball(c_1, r_{1}), \ball(c_2, r_{2}), \ldots, \ball(c_l, r_{l})$ is large. Let $1\le x\le l$ be the smallest index of a large ball, and let $1\le y\le l$ be the largest index of a large ball. By the hitting set property, there exist $u, v\in S$ such that $u\in \ball(c_x, r_{x}), v\in \ball(c_y, r_{y})$. By triangle inequality,
	$$\dist_G(u, v)\leq \dist_G(c_x, c_y) + \dist_G(c_x, u) + \dist_G(c_y, v) < 2D + 4\cdot 2^{10/\epsilon}D^{1-1/k}.$$
	From \Cref{hitset}, $\dist_{H_D}(u, v)\leq \dist_G(u, v) + 101\cdot 2^{(30k-10)/\epsilon}D^{1-1/k}$.
	
	For each $j\in [1, x)\cup (y, l]$, since $\ball(c_j, r_{j})$ is small, there is a path $\phi_j$ in $H_D$ connecting $s_j$ to $t_j$, such that
	$$|\phi_j|\leq |\alpha_j| + 2^{30(k-1)/\epsilon}\cdot |\alpha_j|^{1-1/(k-1)} \leq |\alpha_j| + 4\cdot 2^{(30k-20)/\epsilon}D^{1-2/k}.$$
	Then,
	$$\dist_{H_D}(s, u) + \dist_{H_D}(v, t)\leq \dist_G(s, u) + \dist_G(v, t) + 12\cdot 2^{(30k-10)/\epsilon}D^{1-1/k}.$$
	Finally, by triangle inequality,
	$$\begin{aligned}
		\dist_{H_D}(s, t) &\leq \dist_G(s, u) + \dist_G(u, v) + \dist_G(v, t) + 113\cdot 2^{(30k-10)/\epsilon}D^{1-1/k}\\
		&\leq (\dist_G(s, s_x) + \dist_G(s_x, u))\\
		&+(\dist_G(s_x, s_y) + \dist_G(s_x, u) + \dist_G(s_y, v))\\
		&+(\dist_G(s_y, t) + \dist_G(v, s_y)) + 113\cdot 2^{(30k-10)/\epsilon}D^{1-1/k}\\
		&\leq \dist_G(s, t) + 8\cdot 2^{10/\epsilon}D^{1-1/k} + 113\cdot 2^{(30k-10)/\epsilon}D^{1-1/k}\\
		&\leq \dist_G(s, t) + 2^{30k/\epsilon}D^{1-1/k}.
	\end{aligned}$$
\end{proof}

\subsection{Size analysis}

From the algorithm, the edge set of $H_D$ contains three types of edges that are calculated below.
\begin{enumerate}[(1),leftmargin=*]
	\item For each ball $\ball(c, r)\in \clusters$, graph $H_D$ contains a BFS tree $T_c$ that is rooted at $c$ and spans all vertices in $\ball(c, 4r)$. From \Cref{clustering}, $\sum |E(T_c)|=\sum |\ball(c, 4r)|=O(n^{1+\epsilon}/\eps)$.
	\item For each small ball $\ball(c, r)\in \clusters$, graph $H_D$ contains a sublinear additive spanner of the subgraph $G[\ball(c, 4r)]$ with stretch function $f_{k-1, \epsilon}$, which contains $O(|\ball(c, 4r)|^{1+10(k-1)\epsilon + \frac{1}{2^k-1}})$ edges by inductive hypothesis. Summing over all small balls, the number of edges in these spanners is at most (ignoring constant factors):
	$$\sum_{\ball(c, r)\text{ is small}}|\ball(c, 4r)|^{1+10(k-1)\epsilon + \frac{1}{2^k-1}}\leq n^{1+(10k-5)\epsilon}\cdot L_k^{\frac{1}{2^k-1}}.$$
	\item For each large ball $\ball(c, r)\in \clusters$, graph $H_D$ contains a pairwise sublinear additive spanner of the induced subgraph $G[\ball(c, 4r)]$ with respect to the set $\pset_c$ of pairs, with stretch function $f_{k-1, \epsilon}$. By \Cref{pairwise-sublinear} and  \Cref{demand-size}, the number of edges in such a spanner is at most  $$\tilde{O}\left(2^{2k/\epsilon}|\ball(c, 4r)|^{1+10(k-1)\epsilon}\cdot \left(\frac{n}{L_k}\right)^{\frac{1}{2^k}}\right).$$
	Summing over all large balls, the number of edges in all these spanners is at most
	$$\tilde{O}\left(\sum_{\ball(c, r)\text{ is large}}2^{2k/\epsilon}|\ball(c, 4r)|^{1+10(k-1)\epsilon}\cdot \left(\frac{n}{L_k}\right)^{\frac{1}{2^k}}\right)\leq 2^{2k/\epsilon}n^{1+(10k-5)\epsilon}\cdot \left(\frac{n}{L_k}\right)^{\frac{1}{2^k}}.$$
\end{enumerate}

Setting $L_k = n^{\frac{2^k-1}{2^{k+1}-1}}$, the total number of edges over all the above types is at most  $O(2^{2k/\epsilon}\cdot n^{1+(10k-5)\epsilon + \frac{1}{2^{k+1}-1}})$. Summing over all $D\in \set{1,2,\ldots,2^{\ceil{\log n}}}$, we can conclude that $|E(H)|\le O(n^{1+10k\epsilon + \frac{1}{2^{k+1}-1}})$.

\section{Subset Additive Spanners}
\label{sec: subset}

In this section, we prove the following lemma, which will serve as a building block for \Cref{subquad}. 
For a graph $G = (V, E)$, a subgraph $H\subseteq G$ and a subset $U\subseteq V$, we say that $H$ is a \emph{subset spanner of $G$ on $U$ with additive error $f(n)$}, if for every pair $u,u'\in U$, $\dist_{H}(u,u')\le \dist_{G}(u,u')+f(n)$.

\begin{lemma}\label{subset}
For any $\epsilon> 0$, there is an algorithm that, given an unweighted undirected graph $G$ on $n$ vertices and $m$ edges, and a subset $U\subseteq V(G)$, in time $O\big(m\big(|U| + 2^{O(1/\epsilon)}\big)\big)$, computes a subset spanner $H$ of $G$ on $U$ with additive stretch $O(|U|^{3/2}\cdot n^\epsilon)$, such that $|E(H)|=O\brac{2^{O(1/\epsilon)}\cdot n\log n}$. 
\end{lemma}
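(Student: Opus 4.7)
The plan is to combine the ball clustering of \Cref{clustering} with a non-recursive path-buying scheme in the spirit of \Cref{sec: pairwise}. Set the radius parameter $R:=|U|^{3/2}$ and invoke \Cref{clustering} on $G$ with parameters $R$ and $\eps$, obtaining a collection $\balls$ of balls with radii in $[R, 2^{10/\eps}R]$. Call a ball $\ball(c,r)\in\balls$ \emph{large} if $|\ball(c,4r)|>|U|^2$ and \emph{small} otherwise. For every $\ball(c,r)\in\balls$, include in $H$ a BFS tree of $G[\ball(c,4r)]$ rooted at $c$, which by \Cref{clustering} contributes $\sum_{\ball(c,r)\in\balls}|\ball(c,4r)|=O(n^{1+\eps}/\eps)=2^{O(1/\eps)}\cdot n$ edges. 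Maintain a set $U_\ball\subseteq U$ for each $\ball\in\balls$, initialized to $\emptyset$. Process the pairs $(u,u')\in U\times U$ in an arbitrary order: compute a shortest $u$-$u'$ path $\pi_{u,u'}$ in $G$ with a fixed consistent tie-breaking rule (via edge-weight perturbation as in \Cref{cor: BFS consistent}), apply $\pathpart$ to decompose $\pi_{u,u'}=\alpha_1\circ\cdots\circ\alpha_l$ hosted by $\ball(c_1,r_1),\ldots,\ball(c_l,r_l)$; if some large $\ball(c_i,r_i)$ in the decomposition already satisfies $u,u'\in U_{\ball(c_i,r_i)}$, skip; otherwise add every edge of $\pi_{u,u'}$ to $H$ and insert $\{u,u'\}$ into $U_{\ball(c_i,r_i)}$ for every large $\ball(c_i,r_i)$ in the decomposition.

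For the stretch analysis, the key invariant is: whenever $u$ is added to $U_{\ball(c,r)}$ during the processing of some pair $(u,\tilde u)$, the bought path $\pi_{u,\tilde u}$ together with the BFS tree of $\ball(c,4r)$ witness $\dist_H(u,c)\le \dist_G(u,c)+2r$. Indeed, the prefix of $\pi_{u,\tilde u}$ in $H$ reaches the endpoint $s_i\in\ball(c,r)$ of the hosted segment in $\dist_G(u,s_i)$ steps, and the BFS tree connects $s_i$ to $c$ in at most $\dist_G(s_i,c)\le r$ additional steps. For a skipped pair $(u,u')$ with witness $\ball(c_i,r_i)$, since $\ball(c_i,r_i)$ is a host ball on the current $\pi_{u,u'}$ we have $s_i\in\pi_{u,u'}\cap\ball(c_i,r_i)$; so $\dist_G(u,c_i)+\dist_G(c_i,u')\le \dist_G(u,u')+2r_i$ by the triangle inequality, and hence $\dist_H(u,u')\le \dist_H(u,c_i)+\dist_H(c_i,u')\le \dist_G(u,u')+O(r_i)=\dist_G(u,u')+O(2^{10/\eps}|U|^{3/2})$, which lies within the claimed additive error $O(|U|^{3/2}n^\eps)$ in the regime $2^{10/\eps}\le n^\eps$.

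The size analysis is where the main technical effort lies. The BFS trees already contribute $2^{O(1/\eps)}n$ edges. For the bought paths, I would partition their edges by hosting ball: segments hosted by a given $\ball(c,r)$ lie in $G[\ball(c,4r)]$ and form a consistent collection of shortest paths, so by \Cref{consist} ball $\ball$ contributes $O(|\ball(c,4r)|+\sqrt{|\ball(c,4r)|}\cdot|\pset_\ball|)$ edges, where $\pset_\ball$ is the set of hosted demand segments. The crucial observation is that each demand pair accepted by a large ball strictly grows $U_\ball$, so $|\pset_\ball|\le |U_\ball|\le |U|\le \sqrt{|\ball(c,4r)|}$; this collapses the per-ball bound to $O(|\ball(c,4r)|)$ and yields $2^{O(1/\eps)}n$ edges over all large balls. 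The main obstacle is bounding the contribution of small balls, which still host segments of bought paths even though they do not grow any $U_\ball$; this would need a careful Cauchy--Schwarz-style accounting that exploits $|\ball(c,4r)|\le |U|^2$ together with the disjointness bound $\sum|\ball(c,4r)|=O(n^{1+\eps}/\eps)$ from \Cref{clustering}, in order to absorb the small-ball contribution into the overall $2^{O(1/\eps)}n$ budget.

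For the runtime, \Cref{clustering} takes $O(mn^\eps/\eps)=m\cdot 2^{O(1/\eps)}$, a BFS from each $u\in U$ to set up the consistent shortest-path trees takes $O(m|U|)$ in total, and the per-pair bookkeeping is near-linear in path length; the overall cost is therefore $O\big(m(|U|+2^{O(1/\eps)})\big)$, matching the claim.
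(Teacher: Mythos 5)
Your overall architecture (clustering at radius $R=|U|^{3/2}$, BFS trees, a path-buying pass over $U\times U$ that skips a pair once some large host ball has already collected both endpoints, and the charge $|\pset_\ball|\le|U_\ball|\le|U|\le\sqrt{|\ball|}$ for large balls) matches the paper, and your stretch argument for skipped pairs is sound. But the step you flag as ``the main obstacle'' is a genuine gap, and it is not closable by a Cauchy--Schwarz accounting within your scheme. Because you buy \emph{every} edge of each accepted path, a single small ball can host segments of up to $\Theta(|U|^2)$ distinct bought paths (small balls never trigger the skip rule), and the Coppersmith--Elkin bound for a consistent family then gives $O(|\ball(c,4r)|+\sqrt{|\ball(c,4r)|}\cdot|U|^2)=\Theta(|U|^3)$ edges for that one ball, which cannot be absorbed into a $2^{O(1/\eps)}n$ budget (there can be $\Theta(n/|U|^{3/2})$ such balls). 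The paper avoids this by \emph{never buying the small-ball portions of any demand path}: for each small ball it uses \Cref{bottleneck} to find a thin shell $\ball^=(c,d)\cup\ball^=(c,d+1)$ of size $O(|\ball(c,4r)|/R)$, plants an all-pairs distance preserver on that shell (costing only $O(\sqrt{|\ball(c,4r)|}\cdot(|\ball(c,4r)|/R)^2)=2^{O(1/\eps)}|\ball(c,4r)|$ edges, independent of how many demand paths traverse the ball), and then proves (\Cref{exact}, \Cref{boundary}) that any maximal subpath covered by these small-ball shells can be rerouted shell-to-shell with zero interior error; only the uncovered residue $\Sigma'$ of each accepted path is ever added to $H$ and charged to large balls. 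Without some such preprocessing that decouples the small-ball cost from the number of demand pairs, your size bound fails.

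A secondary but real error: you invoke \Cref{clustering} with parameter $\eps$ and then assert $\sum_{\ball(c,r)\in\balls}|\ball(c,4r)|=O(n^{1+\eps}/\eps)=2^{O(1/\eps)}\cdot n$; the equality is false, as $n^{\eps}\neq 2^{O(1/\eps)}$. The paper passes $\frac{10}{\eps\log_2 n}$ as the clustering parameter, which makes the coverage blow-up $2^{10/\eps}$ (so the BFS trees cost $2^{O(1/\eps)}n$ edges) while the radius blow-up becomes $n^{\eps}$ (which is what forces the $n^{\eps}$ factor in the additive error). With your parameterization the BFS trees alone already violate the claimed size bound.
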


\begin{remark}
	If we do not care about the runtime constraint, then the size of $E(H)$ can be made $2^{O(1/\epsilon)}n$ by replacing \Cref{clustering} with the original Lemma 13 from \cite{bodwin2021better}.
\end{remark}


\subsection{Algorithm description}

We now describe the algorithm for \Cref{subset}.
We first slightly perturb the (unit) weight of each edge, such that in the resulting graph, there is a \textbf{unique shortest path} between every pair of vertices. When calculating their distances, we ignore this perturbation. It is easy to verify that any subset spanner of this resulting graph immediately gives a subset spanner of the original graph with the same stretch function. We rename the resulting graph by $G$.

We now describe the algorithm for constructing $H$.
We first apply the algorithm from \Cref{clustering} to $G$ with parameter $R=\ceil{|U|^{3/2}}$ while replacing $\eps$ in \Cref{clustering} with $\frac{10}{\epsilon\log_2 n}$, and obtain a set $\clusters$ of balls in time $O\brac{2^{10/\epsilon}\cdot m\log  n}$. Classify balls in $\bset$ as two categories: (Small) $|\ball(c, r)| \leq |U|^2$, (Large) $|\ball(c, r)| > |U|^2$.

For each ball $\ball(c,r)\in \balls$, we compute a BFS tree $T_c$ that is rooted at $c$ and spans all vertices in $\ball(c, 4r)$, in time $O(\vol(\ball(c, 4r)))$. From \Cref{clustering}, $\sum_{c} |E(T_c)|= \sum_{c} |\ball(c, 4r)| = O\brac{2^{10/\epsilon}n\log n}$.

\paragraph{Handling small balls.} Consider a small ball $\ball(c, r)\in \clusters$. From Lemma \ref{bottleneck}, we can compute in time $O(\vol(\ball(c, 4r)))$ an integer $d\in [r, 2r]$ such that $|\ball^=(c, d)\cup\ball^=(c, d+1)|\leq 2|\ball(c, 4r)| / r \leq 2|\ball(c, 4r)| / R$. Further, denote: $$\clusters_1 = \{\ball(c, d)\mid \ball(c, r)\in\clusters\text{ is small} \}.$$

We then apply the algorithm from \Cref{cor: BFS consistent} to the induced subgraph $G[\ball(c, 4r)]$ with $S=\ball^=(c, d)\cup\ball^=(c, d+1)$, and obtain a collection $\Pi_c$ of shortest paths. We define $L_c=\bigcup_{\pi\in \Pi_c}\pi$.
From \Cref{cor: BFS consistent} and \Cref{clustering},
$$\begin{aligned}
	\sum_{c} |E(L_c)|&= \sum_{c} O\left(|\ball(c, 4r)| + \sqrt{|\ball(c, 4r)|}\cdot \left(\frac{|\ball(c, 4r)|}{R}\right)^2\right)\\
	& = \sum_c O(2^{15/\epsilon}|\ball(c, 4r)|)=O(2^{25/\epsilon}n\log n).
\end{aligned}$$

\paragraph{Handling large balls.} 
We first apply the algorithm from \Cref{cor: BFS consistent} to $G$ with $S=U$, and obtain a collection $\Pi$ of shortest paths connecting pairs of $U$.
%
We will then iteratively construct, for each large ball $\ball(c, r)$, a set $U_c\subseteq U$ of vertices. Initially, all sets $U_c$ are empty.

We then process all paths in $\Pi$ sequentially in an arbitrary order. Consider any path $\pi_{s, t}\in \Pi$ from $s$ to $t$. Then, the algorithm consists of the following steps.

\begin{enumerate}[(1),leftmargin=*]
	\item Recall that a vertex is covered by $\bset_1$ if it is contained in some ball in $\bset_1$. We first compute the set $\Sigma$ of all maximal subpaths of $\pi_{s, t}$ consisting of only vertices covered by $\bset_1$, and let $\Sigma'$ be the collection of subpaths of $\pi_{s, t}$ obtained by removing all edges of paths in $\Sigma$.
	
	\item Go over each vertex $x$ on $\pi_{s, t}$. Find an arbitrary ball $\ball(c_x, r_x)\ni x$. If $\ball(c_x, r_x)$ is large and $s, t\in U_{c_x}$, then we abort and continue on to the next path in $\Pi$.
	
	\item Otherwise, add all paths in $\Sigma^\prime$ to the final spanner $H$, and add $s, t$ to $U_{c_x}$ for each $x$ on $\pi_{s, t}$ where $\ball(c_x, r_x)$ is large.
\end{enumerate}

\subsection{Stretch analysis}

We now show that the graph $H$ obtained by the above algorithm satisfies the properties required in \Cref{subset}.
Recall that the balls in $\bset$ were obtained by applying 
the algorithm from \Cref{clustering} to graph $G$ with parameters $R=\ceil{|U|^{3/2}}$ and $\frac{10}{\epsilon\log_2 n}$.
Note that, from \Cref{clustering}, the radius of each ball in $\bset$ is at most $R\cdot 2^{10/(\frac{10}{\epsilon\log_2 n})}\le Rn^\eps$, so the radius (diameter, resp) of every ball $\ball(c,2r)$ is at most $2Rn^{\eps}$ ($4Rn^{\eps}$, resp).

\begin{definition}
	A vertex $v$ is \textbf{covered at boundary} by $\balls_1$, if $v$ belongs to the boundary of \textbf{every} ball in $\balls_1$ that contains $v$.
\end{definition}

\begin{claim}\label{exact}
Let $s, t$ be a pair of vertices such that all vertices on the shortest path between them are covered by $\clusters_1$. Then $\dist_H(s, t)\leq \dist_{G}(s, t) + 8n^\epsilon\cdot R$.
Furthermore, if both $s$ and $t$ are covered at boundary by $\clusters_1$, then $\dist_H(s, t) = \dist_{G}(s, t)$.
\end{claim}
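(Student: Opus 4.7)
The plan is to partition the unique shortest $s$--$t$ path $\pi=v_0v_1\cdots v_L$ (uniqueness coming from the edge-weight perturbation, with $v_0=s$, $v_L=t$) into contiguous sub-paths each covered by one ball in $\balls_1$, and then embed $\pi$ into $H$ by using the consistent path families $L_c$ on the ``middle'' balls and by making BFS-tree detours through $T_c$ at the two end balls.

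First I would greedily cover $\pi$: starting at $i=0$, pick any ball $\ball(c,d)\in\balls_1$ containing $v_i$, walk forward along $\pi$ until we exit that ball, and repeat. This yields indices $0=i_0<i_1<\cdots<i_k=L+1$ and balls $B_m=\ball(c_m,d_m)$ with $\pi[v_{i_{m-1}},v_{i_m-1}]\subseteq B_m$ for each $1\le m\le k$ and $v_{i_m}\notin B_m$ for $m<k$. At each interface, since $v_{i_m-1}\in B_m$, $v_{i_m}\notin B_m$, and the two are adjacent, the unit-edge layering forces $\dist_G(c_m,v_{i_m-1})=d_m$ and $\dist_G(c_m,v_{i_m})=d_m+1$; symmetrically $\dist_G(c_{m+1},v_{i_m})=d_{m+1}$ and $\dist_G(c_{m+1},v_{i_m-1})=d_{m+1}+1$. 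Hence both $v_{i_m-1}$ and $v_{i_m}$ lie in $S_{c_m}:=\ball^=(c_m,d_m)\cup\ball^=(c_m,d_m+1)$ and also in $S_{c_{m+1}}$.

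I would then argue that for every interior index $2\le m\le k-1$, the sub-path $\pi[v_{i_{m-1}},v_{i_m-1}]$ and the bridging edge $(v_{i_m-1},v_{i_m})$ have both endpoints in $S_{c_m}$ and lie inside $G[\ball(c_m,4r_m)]$; by uniqueness of shortest paths in the perturbed graph they are therefore exactly the paths picked by the BFS-consistency procedure of \Cref{cor: BFS consistent} inside that induced subgraph, and so they belong to $L_{c_m}\subseteq H$. For Part 2, the boundary assumption on $s$ and $t$ additionally gives $s\in\ball^=(c_1,d_1)\subseteq S_{c_1}$ and $t\in\ball^=(c_k,d_k)\subseteq S_{c_k}$, so the first and last sub-paths also lie in $L_{c_1}$ and $L_{c_k}$; concatenating, all of $\pi$ is in $H$ and $\dist_H(s,t)=\dist_G(s,t)$.

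For Part 1, without the boundary assumption, I would replace the first and last sub-paths by BFS-tree detours $s\to c_1\to v_{i_1-1}$ in $T_{c_1}$ and $v_{i_{k-1}}\to c_k\to t$ in $T_{c_k}$. Each detour costs at most $2d_m$, and \Cref{clustering} applied with parameter $10/(\eps\log_2 n)$ guarantees $r_m\le Rn^\eps$ and $d_m\le 2r_m$, so each detour contributes at most $4Rn^\eps$. The middle portion has length at most $|\pi|=\dist_G(s,t)$, and summing gives $\dist_H(s,t)\le\dist_G(s,t)+8Rn^\eps$; the edge cases $k=1,2$ are handled by the same argument. The main obstacle will be the shortest-path consistency step: arguing cleanly that each sub-segment of $\pi$ between consecutive elements of $S_{c_m}$ coincides exactly with the path chosen by \Cref{cor: BFS consistent} inside $G[\ball(c_m,4r_m)]$, which rests on uniqueness of shortest paths in the perturbed graph together with the fact that these sub-segments remain shortest paths in that induced subgraph.
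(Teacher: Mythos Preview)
Your greedy covering and the overall structure are right, but the ``symmetrically'' step hides a genuine gap. From $v_{i_m-1}\in B_m$, $v_{i_m}\notin B_m$, and adjacency you correctly get $\dist_G(c_m,v_{i_m-1})=d_m$ and $\dist_G(c_m,v_{i_m})=d_m+1$, so both interface vertices lie in $S_{c_m}$. But there is no symmetry with respect to $B_{m+1}$: you only picked $B_{m+1}$ as \emph{some} ball of $\balls_1$ containing $v_{i_m}$, and nothing prevents $v_{i_m-1}$ from lying in $B_{m+1}$ as well. Hence $v_{i_m}$ may sit strictly inside $\ball(c_{m+1},d_{m+1})$, not on its boundary, and then $v_{i_m}\notin S_{c_{m+1}}$. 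Consequently the subpath $\pi[v_{i_m},v_{i_{m+1}-1}]$ (equivalently, your $\pi[v_{i_{m-1}},v_{i_m-1}]$ with indices shifted) need not have both endpoints in $S_{c_{m+1}}$, and you cannot conclude it lies in $L_{c_{m+1}}$.

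The paper's proof avoids this by choosing the next ball more carefully: instead of any ball through the current vertex, it selects, among all balls of $\balls_1$ intersecting the already-covered prefix of $\pi$, the one whose intersection with $\pi$ reaches farthest toward $t$. Because this prefix cannot then be entirely contained in the new ball (that would contradict the choice of the very first ball), the prefix must cross the new ball's boundary, producing a vertex $v_i\in\ball^=(c_{i+1},d_{i+1})$. The covered segment is then $\pi[v_{i-1},u_i]$ with $v_{i-1}\in\ball^=(c_i,d_i)$ and $u_i\in\ball^=(c_i,d_i+1)$, so both endpoints are in $S_{c_i}$ and uniqueness of shortest paths places the segment inside $L_{c_i}$. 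Your argument becomes correct once you modify the ball selection in this way; the rest of your plan (BFS detours at the two ends, and the boundary case for Part~2) then goes through.
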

\begin{proof}
Let $\pi$ be the shortest path $\pi$ from $s$ to $t$ such that all vertices in $\pi$ are covered by $\balls_1$. We first compute two sequences $(u_0, u_1, u_2, \ldots, u_{l-1})$, $(v_0, v_1, v_2, \ldots, v_{l-1})$ of vertices in $\pi$, such that vertices $v_0=u_0=s,v_1,u_1,\ldots,v_{l-1},u_{l-1}$ appear on path $\pi$ from in the direction from $s$ to $t$ in this order, and a sequence $(\ball(c_1, r_{1}), \ball(c_2, r_{2}), \ldots, \ball(c_l, r_{l}))$ of small balls in $\balls$, such that $v_0\in \ball(c_1, d_{1})$, and for all $1\leq i\leq l-1$, $v_i\in \ball^=(c_{i+1}, d_{{i+1}}), u_i\in \ball^=(c_i, d_{i}+1)$; recall that integers $d_i$ were computed by \Cref{bottleneck}. See Figure \ref{terminal} for an illustration.

\begin{figure}[h]
	\begin{center}
		\begin{tikzpicture}[thick,scale=0.7]
	\draw [red] (15, 0.5) ellipse (2.3 and 2);
	\draw [red] (0, 0.5) ellipse (2.3 and 2);
	\draw [red] (3, 0.5) ellipse (2.3 and 2);
	\draw [red] (6, 0.5) ellipse (2.3 and 2);
	\draw [red] (18, 0.5) ellipse (2.3 and 2);
	
	\draw (0, 0) node[circle, draw, fill=black!50, inner sep=0pt, minimum width=6pt, label = $s$] {};
	\draw (18, 0) node[circle, draw, fill=black!50, inner sep=0pt, minimum width=6pt,label = $t$] {};
	
	\draw (2, 0) node[circle, draw, fill=black!50, inner sep=0pt, minimum width=6pt,label = $u_1$] {};
	\draw (1, 0) node[circle, draw, fill=black!50, inner sep=0pt, minimum width=6pt,label = $v_1$] {};
	\draw [line width = 0.5mm] (0.2, 0) -- (0.8, 0);
	\draw [line width = 0.5mm] (1.2, 0) -- (1.8, 0);
	
	\draw (-1, 2.4) node[red, label={$\ball(c_1, d_{1}+1)$}]{};
	
	\draw (5, 0) node[circle, draw, fill=black!50, inner sep=0pt, minimum width=6pt,label = $u_2$] {};
	\draw [line width = 0.5mm] (2.3, 0) -- (3.8, 0);
	\draw [line width = 0.5mm] (4.2, 0) -- (4.8, 0);
	
	\draw (2.7, 2.4) node[red, label={$\ball(c_2, d_{2}+1)$}]{};
	
	\draw (4, 0) node[circle, draw, fill=black!50, inner sep=0pt, minimum width=6pt,label = $v_2$] {};
	\draw (8, 0) node[circle, draw, fill=black!50, inner sep=0pt, minimum width=6pt,label = $u_3$] {};
	\draw [line width = 0.5mm] (5.3, 0) -- (7.8, 0);
	
	\draw (6.5, 2.4) node[red, label={$\ball(c_3, d_{3}+1)$}]{};
	
	\draw[dotted] (3.7, 1.2) -- (2.5, -6);
	\draw[dotted] (8.3, 1.2) -- (9.5, -6);
	\draw (3.5, -7) node[circle, draw, fill=black!50, inner sep=0pt, minimum width=6pt,label = $v_2$] {};
	\draw (9, -7) node[circle, draw, fill=black!50, inner sep=0pt, minimum width=6pt,label = $u_3$] {};
	\draw (6, -6) node[circle, draw, fill=black!50, inner sep=0pt, minimum width=6pt,label = $c_3$] {};
	\draw [red] (6, -6) ellipse (3.4 and 3.4);
	\draw [line width = 0.5mm] (3.7, -7) -- (8.8, -7);
	\draw [dashed] (2, -7) -- (3.3, -7);
	\draw [dashed] (9.2, -7) -- (11.5, -7);
		\draw [decorate,
	decoration = {brace}] (3.5,-6) -- (5.6,-5.2);
	\draw (4.5, -5.6) node[label={[rotate=20]$d_3$}]{};
	\draw [dashed] (5.8, -6.1) -- (3.7, -6.9);
	\draw [decorate,
	decoration = {brace}] (6.4,-5.2) -- (9,-6.1);
	\draw (7.7, -5.7) node[label={[rotate=-16]$d_3+1$}]{};
	\draw [dashed] (6.2, -6.1) -- (8.8, -6.9);
	
	\draw (16, 0) node[circle, draw, fill=black!50, inner sep=0pt, minimum width=6pt,label = -90: $v_{l-1}$] {};
	\draw (17, 0) node[circle, draw, fill=black!50, inner sep=0pt, minimum width=6pt,label = $u_{l-1}$] {};
	\draw [line width = 0.5mm] (17.8, 0) -- (17.2, 0);
	\draw [line width = 0.5mm] (16.8, 0) -- (16.2, 0);

	\draw (18.5, 2.4) node[red, label={$\ball(c_l, d_{l}+1)$}]{};
	
	\draw (13, 0) node[circle, draw, fill=black!50, inner sep=0pt, minimum width=6pt,label = -90 : $v_{l-2}$] {};
	\draw (14, 0) node[circle, draw, fill=black!50, inner sep=0pt, minimum width=6pt,label = $u_{l-2}$] {};
	\draw [line width = 0.5mm] (13.8, 0) -- (13.2, 0);
	\draw [line width = 0.5mm] (15.8, 0) -- (14.2, 0);
	
	\draw (14.5, 2.4) node[red, label={$\ball(c_{l-1}, d_{{l-1}}+1)$}]{};
	
	\draw [dashed] (8.2, 0) -- (12.8, 0);
\end{tikzpicture}
	\end{center}
	\caption{The construction of sequences $u_1, u_2, \ldots, u_l$ and $v_1, v_2, \ldots, v_{l-1}$.}\label{terminal}
\end{figure}
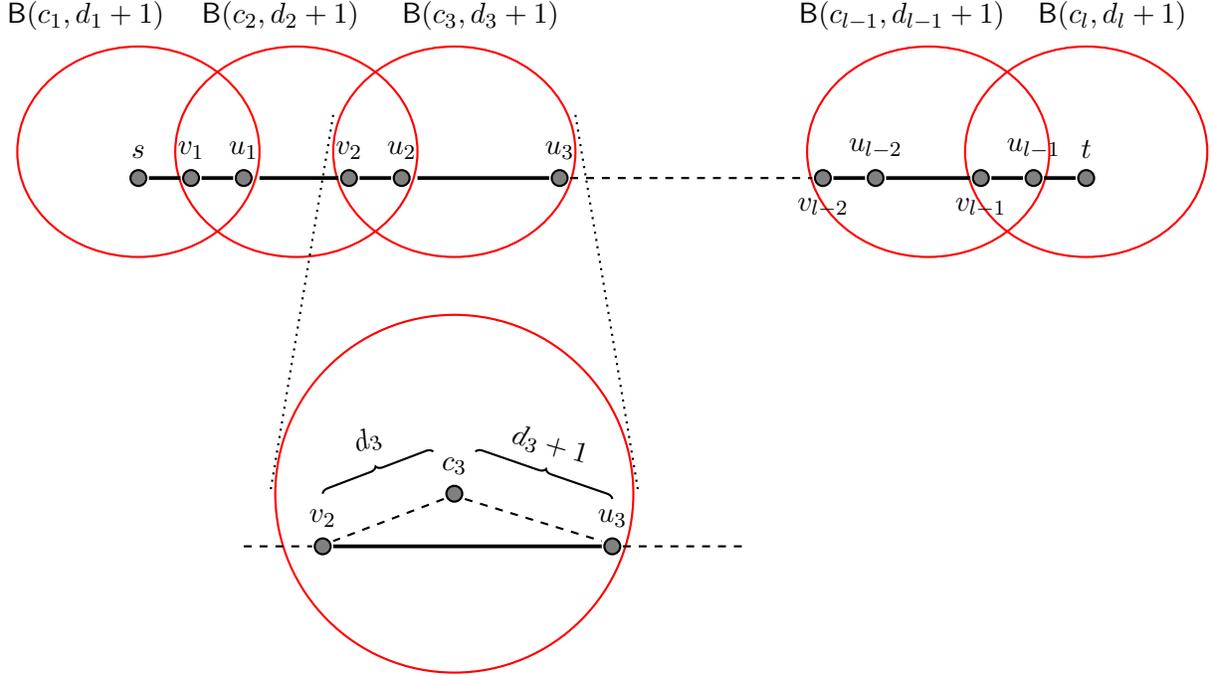
%

This is done via an iterative process described as the following steps. 

\begin{enumerate}[(1),leftmargin=*]
	\item Start with $i=0$ and set $u_0=v_0=s$. If $u_i = t$ then we terminate the process and set $l=i+1$.
	
	Otherwise, we find the ball in $\clusters_1$ that, among all balls in $\clusters_1$ that intersects $\pi[s, u_i]$, the one that contains a vertex on $\pi$ that is \textbf{closest} to $t$. In other words, if we denote $\pi$ as $(s=w_0,w_1,\ldots,w_{k-1},w_k=t)$, then we find the ball that intersects $\pi[s, u_i]$, and, subject to this, contains a vertex $w_j$ with the largest index.
	
	Note that such a ball always exists; for example, we can take an arbitrary ball in $\bset_1$ that contains $u_i$, and we can also notice that $u_{i+1}$ should belong to $\pi(u_i, t]$.
	
	We denote this ball by $\ball(c_{i+1}, d_{i+1})$ and set $u_{i+1}$ as the last vertex of $\pi$ that belongs to  $\ball(c_{i+1}, d_{i+1}+1)$. 
	
	\item If $i\ge 1$, we then let $v_i$ be any vertex that belongs to both $\pi[u_{i-1}, u_i]$ and $\ball^=(c_{i+1}, d_{i+1})$. We will prove shortly that such $v_i$ always exists.
	
	Then, increase $i\leftarrow i+1$ and go to Step (1).
\end{enumerate}

We now turn to argue the existence of $v_i$.

\begin{claim}
For each $i\ge 1$, $\pi[u_{i-1}, u_i]$ intersects with $\ball^=(c_{i+1}, d_{i+1})$, and if $u_{i+1}\neq t$, then $u_{i+1}\in \ball^=(c_{i+1}, d_{i+1}+1)$.
\end{claim}
\begin{proof}[Proof of claim]
First, if $\pi[s, u_i]$ does not intersect $\ball^=(c_{i+1}, d_{i+1})$, as $\pi[s, u_i]$ intersects $\ball(c_{i+1}, d_{i+1})$, it has to lie entirely within $\ball(c_{i+1}, d_{i+1})$, a contradiction to the choice of $\ball(c_1, d_{1})$ and $u_1$. Therefore, $\pi[s, u_i]$ intersect $\ball^=(c_{i+1}, d_{i+1})$. 

We now claim any such intersection $v_i$ must belong to $\pi[u_{i-1}, u_i]$. Otherwise, if $v_i$ belongs to $\pi[u_{j-1}, u_j]$ for some index $1\leq j<i$, then earlier when we were determining $u_{j+1}$, it should have been at least as close to $t$ as $u_{i+1}$, a contradiction to the property that $u_i\in (u_j,t]$.
%

Lastly, if $u_{i+1}\in \ball(c_{i+1}, d_{i+1})$, then the next vertex on path $\pi$ should also belong to $\ball(c_{i+1}, d_{i+1}+1)$, a contradiction to the choice of $u_{i+1}$.
\end{proof}
		
	
It is clear that when the iterative process terminates, $u_l = t$. We now show that for each $2\leq i\leq l-1$, the subpath $\pi[v_{i-1}, u_i]$ is contained in $H$. Note that $v_{i-1}\in \ball^=(c_i, d_{i})$ and $u_i\in \ball^=(c_i, d_{i}+1)$. Since $\ball(c_i, r_{i})$ is small, from the algorithm, we have constructed a subgraph $L_c$ in $G[\ball(c_i, 4r_{i})]$ that preserves all-pairs distances between vertices in $\ball^=(c_i, d_{i})\cup \ball^=(c_i, d_{i}+1)$. As the shortest path between every pair of vertices is unique, the subpath $\pi[v_{i-1}, u_i]$ should be entirely contained in $L_c$.

Lastly, we consider the distance in $H$ between the endpoints $s, t$ of $\pi$. From the construction, we know that vertices $s, u_1\in \ball(c_1, d_{1}+1)$, and $v_{l-1}, t\in \ball(c_l, d_{l}+1)$. Therefore, $\dist_H(s, u_1) \le  4Rn^\epsilon$, and $\dist_H(v_{l-1}, t) \le 4Rn^\epsilon$. It follows that $\dist_H(s, t)\leq \dist_{G}(s, t) + 8Rn^\epsilon$. Furthermore, if both $s, t$ are covered at boundary by $\clusters_1$, then $s\in \ball^=(c_1, d_{1})$ and $t\in \ball^=(c_l, d_{l})$. Therefore, the entire path $\pi$ belongs to $H$, and so $\dist_H(s, t) = \dist_{G}(s, t)$.
\end{proof}

We now focus on the algorithm for handling large balls.  Consider an iteration in which a shortest path $\pi_{s, t}$ is processed for some pair $s, t\in U$. We prove the following claims.

\begin{claim}\label{boundary}
Let $\pi_{s, t}[u, v]$ be a maximal subpath of $\pi_{s, t}$ consisting of only vertices covered by $\bset_1$. If $u\ne s$ and $v\ne t$, then both $u, v$ are covered at boundary by $\bset_1$.
\end{claim}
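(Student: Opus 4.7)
The plan is to prove Claim \ref{boundary} by a direct contradiction argument using the maximality of $\pi_{s,t}[u,v]$ together with the triangle inequality. Recall that $\bset_1=\{\ball(c,d):\ball(c,r)\in\balls\text{ is small}\}$, and that ``covered at boundary'' means that for every ball $\ball(c,d)\in\bset_1$ containing the vertex, the vertex lies in $\ball^=(c,d)$, i.e.\ at distance exactly $d$ from $c$.

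First I would set up the argument for the endpoint $u$. Since $\pi_{s,t}[u,v]$ is maximal and $u\ne s$, the predecessor $u^-$ of $u$ along $\pi_{s,t}$ exists and is \emph{not} covered by $\bset_1$. Now suppose for contradiction that $u$ is not covered at boundary; then there is some $\ball(c,d)\in\bset_1$ with $u\in\ball(c,d)$ but $u\notin\ball^=(c,d)$, i.e.\ $\dist_G(c,u)\le d-1$. Since $u^-$ is adjacent to $u$ in $G$, the triangle inequality gives
\[
\dist_G(c,u^-)\le \dist_G(c,u)+1\le d,
\]
so $u^-\in\ball(c,d)$, contradicting the fact that $u^-$ is not covered by $\bset_1$.

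The same reasoning applied to $v$ (with the successor $v^+$ of $v$ on $\pi_{s,t}$, which exists because $v\ne t$) yields the symmetric contradiction, completing the claim. No technical obstacle is expected: the argument uses only the $1$-Lipschitz property of $\dist_G(c,\cdot)$ along the path and the defining maximality of the subpath $\pi_{s,t}[u,v]$.
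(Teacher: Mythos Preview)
Your proof is correct and follows essentially the same approach as the paper: assume $u$ lies strictly inside some ball $\ball(c,d)\in\bset_1$, then use the $1$-Lipschitz property of $\dist_G(c,\cdot)$ to conclude that the predecessor of $u$ on $\pi_{s,t}$ also lies in $\ball(c,d)$, contradicting maximality. The paper's proof is identical in structure, only with slightly different notation (it writes $u\in\ball(c,d-1)$ and calls the predecessor $w$).
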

\begin{proof}
Assume for contradiction that $u$ belongs to a ball $\ball(c, d-1)$ for some small ball $\ball(c, r)\in\clusters$. Let $(w, u)$ be the last edge of subpath $\pi_{s, t}[s, u]$, then $w\in\ball(c, d)$, and so $w$ is also covered. Therefore, $\pi_{s, t}[w, v]$ should be a covered subpath, which contradicts the maximality of subpath $\pi_{s, t}[u, v]$.
\end{proof}

\begin{claim}\label{Uc}
For any large ball $\ball(c, r)$ and every vertex $s\in U_c$, $\dist_H(s, c)\leq \dist_{G}(s, c) + 10Rn^\epsilon$.
\end{claim}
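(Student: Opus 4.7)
Let $\pi_{s,t}$ be the shortest path processed in the iteration during which $s$ was added to $U_c$, and let $x$ be a vertex on $\pi_{s,t}$ for which the algorithm selected $\ball(c_x,r_x)=\ball(c,r)$, so $x\in\ball(c,r)$. My plan is to exhibit an $s$-to-$c$ walk in $H$ obtained by first traversing the prefix $\pi_{s,t}[s,x]$ inside $H$ and then hopping from $x$ to $c$ through the BFS tree of $\ball(c,4r)$ that $H$ stores.

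The hop from $x$ to $c$ is immediate: since $H$ contains a BFS tree of $\ball(c,4r)$ rooted at $c$ and $x\in \ball(c,r)$, we have $\dist_H(x,c)=\dist_G(x,c)\le r\le Rn^\eps$. Together with the triangle inequality $\dist_G(s,x)\le \dist_G(s,c)+r\le \dist_G(s,c)+Rn^\eps$, this piece contributes at most $2Rn^\eps$ above $\dist_G(s,c)$ when the two pieces are summed.

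The substantive step is to show $\dist_H(s,x)\le \dist_G(s,x)+8Rn^\eps$. I would decompose $\pi_{s,t}[s,x]$ along the $\Sigma/\Sigma'$ partition built during this iteration. Each $\Sigma'$ sub-path was added to $H$ verbatim and so contributes its exact $G$-length. For any $\Sigma$ sub-path $\pi_{s,t}[u,v]$ lying strictly in the interior of $\pi_{s,t}[s,x]$, we have $u\ne s$ and $v\ne t$; \Cref{boundary} guarantees that $u,v$ are both covered at boundary by $\bset_1$, and the second part of \Cref{exact} then preserves $\dist_G(u,v)$ exactly in $H$. Stretch can therefore be incurred only at the two extremes of the walk: (i) a leading $\Sigma$ sub-path whose left endpoint is the (possibly non-boundary) vertex $s$, and (ii) a trailing $\Sigma$ sub-path whose right endpoint is the (possibly non-boundary) vertex $x$. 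In each case the relevant sub-path still lies entirely in $\bset_1$-covered ball interiors, so the first part of \Cref{exact} gives an additive penalty bounded by $8Rn^\eps$ per extreme.

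Putting the pieces together yields $\dist_H(s,c)\le \dist_G(s,c)+O(Rn^\eps)$; a careful bookkeeping of the extreme penalties, arguing that the two potential bad segments share a boundary vertex through which the walk may be re-routed without paying $8Rn^\eps$ twice, then tightens this to the claimed $10Rn^\eps$. The main obstacle is precisely this constant bookkeeping at the two ends of $\pi_{s,t}[s,x]$; the rest of the proof is a transparent concatenation of \Cref{exact}, \Cref{boundary}, and the BFS tree inside $\ball(c,4r)$ stored in $H$.
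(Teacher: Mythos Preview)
Your approach is exactly the paper's: reduce to $\dist_H(s,x)\le\dist_G(s,x)+8Rn^\eps$ via the $\Sigma/\Sigma'$ decomposition, then hop from $x$ to $c$ along the stored BFS tree. In fact you are more careful than the paper here: the paper's proof only charges the leading $\Sigma$-segment at $s$ and implicitly assumes the constructed $H$-walk reaches $x$ without further loss, which is immediate only when $x$ sits on a $\Sigma'$-subpath. You correctly notice that $x$ may lie interior to a $\Sigma$-segment $[y,z]$ with $y\ne s$, forcing a second invocation of \Cref{exact} on $[y,x]$.

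Your proposed fix, however---that the two bad segments ``share a boundary vertex through which the walk may be re-routed''---does not work: the leading segment $[s,z_0]$ and the trailing segment $[y,x]$ are disjoint pieces of $\pi_{s,t}$ and share nothing. The correct tightening is to open up the proof of \Cref{exact} and observe that its $8Rn^\eps$ penalty is the sum of two $4Rn^\eps$ terms, one for each non-boundary endpoint (these are the detours $\dist_H(s,u_1)$ and $\dist_H(v_{l-1},t)$ in that proof). By \Cref{boundary}, $z_0$ and $y$ are both covered at boundary, so each of your two bad segments has exactly one non-boundary endpoint ($s$ and $x$, respectively) and hence contributes only $4Rn^\eps$; this keeps $\dist_H(s,x)\le\dist_G(s,x)+8Rn^\eps$ and the final bound at $10Rn^\eps$. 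If you insist on using \Cref{exact} only as a black box you obtain $18Rn^\eps$ rather than $10Rn^\eps$, which is still $O(Rn^\eps)$ and perfectly adequate for \Cref{subset}.
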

\begin{proof}
Let $\pi_{s, t}$ be the shortest path that was processed in the iteration where $s$ was added to $U_c$. 
By the algorithm, $\ball(c, r)$ must be referring to some ball $\ball(c_x, r_x)$ for some $x\in \pi_{s, t}$. As $H$ contains a BFS tree that is rooted at $c$ and contains $x$, and the radius of each ball is at most $2Rn^{\eps}$, it suffices to show that $\dist_H(s, x)\leq \dist_{G}(s, x) + 8Rn^\epsilon$.

Recall that path $\pi_{s, t}$ was partitioned into subpaths in sets $\Sigma$ and $\Sigma^\prime$. From the algorithm description, all subpaths in $\Sigma'$ are entirely contained in $H$.
For every subpath $\pi_{s, t}[y, z]$ of $\pi_{s, t}$ in $\Sigma$, if $y\neq s$, then from \Cref{exact} and \Cref{boundary}, $\dist_H(y, z) = \dist_{G}(y, z)$; if $y = s$, then from \Cref{exact}, we get that $\dist_H(y, z) \leq \dist_{G}(y, z) + 8Rn^\epsilon$. Since there is at most one subpath $\pi_{s, t}[y, z]$ in $\Sigma$ with $y = s$, the overall additive error caused by subpaths in $\Sigma$ is at most $8Rn^\epsilon$.
\end{proof}

\begin{claim}[additive error]
For every pair  $s, t\in U$, $\dist_H(s, t)\leq \dist_{G}(s, t) + 24Rn^\epsilon$.
\end{claim}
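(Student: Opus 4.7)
The plan is to split into two cases based on what the algorithm did when the shortest path $\pi_{s,t}$ was processed: either it aborted in Step (2), or it did not and executed Step (3).

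First, assume the algorithm did not abort on $\pi_{s,t}$. In this case, the full collection $\Sigma'$ was added to $H$, so every subpath of $\pi_{s,t}$ in $\Sigma'$ already has its distance preserved exactly. For subpaths in $\Sigma$, I would invoke \Cref{exact}. For any middle maximal covered subpath $\pi_{s,t}[u,v]$, \Cref{boundary} tells us both $u,v$ are covered at boundary by $\balls_1$, so \Cref{exact} gives $\dist_H(u,v)=\dist_G(u,v)$. The only subpaths in $\Sigma$ where the boundary hypothesis can fail are those containing $s$ or those containing $t$ as endpoints; for each of these, \Cref{exact} still gives an additive error of at most $8Rn^\epsilon$. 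Summing along $\pi_{s,t}$, the total additive stretch is at most $16Rn^\epsilon\le 24Rn^\epsilon$.

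Next, assume the algorithm aborted on $\pi_{s,t}$. Then by Step (2), there exists a vertex $x$ on $\pi_{s,t}$ with a large host ball $\ball(c_x,r_x)$ such that $s,t\in U_{c_x}$. Applying \Cref{Uc} to both $s$ and $t$ gives
\[
\dist_H(s,c_x)\le \dist_G(s,c_x)+10Rn^\epsilon, \qquad \dist_H(c_x,t)\le \dist_G(c_x,t)+10Rn^\epsilon.
\]
By the triangle inequality, $\dist_H(s,t)\le \dist_H(s,c_x)+\dist_H(c_x,t)$. Since $x$ lies on the $s$--$t$ shortest path, $\dist_G(s,x)+\dist_G(x,t)=\dist_G(s,t)$, and since every ball in $\balls$ has radius at most $Rn^\epsilon$ by our choice of parameters in \Cref{clustering}, we have $\dist_G(x,c_x)\le Rn^\epsilon$. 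Combining these via the triangle inequality yields $\dist_G(s,c_x)+\dist_G(c_x,t)\le \dist_G(s,t)+2Rn^\epsilon$, and hence $\dist_H(s,t)\le \dist_G(s,t)+22Rn^\epsilon\le \dist_G(s,t)+24Rn^\epsilon$.

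The proof is essentially a clean case split; there is no single hard step, but the minor subtlety worth being careful about is the asymmetry in \Cref{boundary} (which excludes the endpoints of $\pi_{s,t}$), which is exactly why the covered subpaths touching $s$ or $t$ must be handled with the weaker $+8Rn^\epsilon$ bound from \Cref{exact} instead of the exact preservation used for middle pieces.
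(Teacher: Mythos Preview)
Your proof is correct and uses the same supporting claims (\Cref{exact}, \Cref{boundary}, \Cref{Uc}) as the paper; the only difference is the case split. You split on whether the algorithm aborted in Step~(2), whereas the paper splits on whether $\pi_{s,t}$ is entirely covered by $\clusters_1$: if so, \Cref{exact} finishes directly; if not, some vertex on $\pi_{s,t}$ is only in large balls, so \emph{after} this iteration (aborted or not) some large ball has $s,t\in U_c$, and \Cref{Uc} applies. Both decompositions are valid and yield the same constants.
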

\begin{proof}
Consider the iteration when $\pi_{s, t}$ was processed.
If $\pi_{s, t}$ is entirely covered by $\clusters_1$, 
then the claim follows from \Cref{exact}. 
If not, then from the algorithm, after this iteration some large ball $\ball(c, r)\in\clusters$ intersecting $\pi_{s, t}$ must have added $s, t$ into its set $U_c$. Then from \Cref{Uc}, 
$$\dist_{H}(s, c)\leq \dist_{G}(s, c) + 10Rn^\epsilon,$$
$$\dist_H(c, t)\leq \dist_{G}(c, t) + 10Rn^\epsilon.$$
Let $v$ be an arbitrary vertex in $\ball(c, r)\cap V(\pi_{s, t})$. By triangle inequality,
\[\dist_{G}(s, c) + \dist_{G}(c, t) \leq \dist_G(s, v) + \dist_G(v, t) + 2\cdot \dist(v, c) \leq \dist_G(s, t) + 4Rn^\epsilon.\]
Altogether, $\dist_H(s, t)\leq \dist_{G}(s, t) + 24Rn^\epsilon$.
\end{proof}

\subsection{Size and runtime analysis}

\begin{claim}[spanner size]
$|E(H)|=O\brac{2^{O(1/\epsilon)}\cdot n\log n}$.
\end{claim}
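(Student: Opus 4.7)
The edges of $H$ split into three disjoint groups: the BFS trees $T_c$, the subgraphs $L_c$ for small balls, and the paths added in step (3) for large balls. I will show each contributes $2^{O(1/\epsilon)}\cdot n$ edges. The first two are immediate: by the coverage property of \Cref{clustering} (instantiated with $\epsilon_{\text{clust}}=10/(\epsilon\log_2 n)$, so that $n^{\epsilon_{\text{clust}}}=2^{10/\epsilon}$), we have $\sum_{\ball(c,r)\in\clusters}|E(T_c)|\le\sum|\ball(c,4r)|=2^{O(1/\epsilon)}n$, and the algorithm description has already tallied $\sum_{\ball(c,r)\text{ small}}|E(L_c)|=O(2^{25/\epsilon}n/\epsilon)=2^{O(1/\epsilon)}n$ via \Cref{cor: BFS consistent} applied to the thin boundary shells $\ball^=(c,d)\cup\ball^=(c,d+1)$ whose size is controlled by \Cref{bottleneck}.

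The main task is bounding the step (3) contribution. Observe that the paths which survive the abort check form a consistent sub-family of the consistent collection $\Pi$, so by \Cref{consist} their union has at most $O(n+\sqrt{n}\cdot p)$ edges, where $p$ counts the successful paths. To bound $p$, I leverage the abort mechanism: any path contributing an edge to step (3) must traverse a vertex outside $V(\bset_1)$ (otherwise $\Sigma'=\emptyset$), and such a vertex belongs to a large ball. At this large ball $\ball(c,r)$, the non-abort condition forces at least one of $s,t$ to have been absent from $U_c$ prior to processing, so $|U_c|$ strictly increases when the path is processed. Hence the number of successful paths through any fixed large ball $c$ is at most the final size of $U_c$, which is at most $|U|$. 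Summing, $p\le N_{\mathrm{large}}\cdot|U|$, where $N_{\mathrm{large}}$ denotes the number of large balls.

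To control $N_{\mathrm{large}}$: every large ball satisfies $|\ball(c,r)|>|U|^2$, so combined with $\sum|\ball(c,4r)|=2^{O(1/\epsilon)}n$ we obtain $N_{\mathrm{large}}\le 2^{O(1/\epsilon)}n/|U|^2$ and hence $p\le 2^{O(1/\epsilon)}n/|U|$. Together with the trivial bound $p\le\binom{|U|}{2}$, plugging into the consistent-paths estimate yields the desired $2^{O(1/\epsilon)}n$ bound on the step (3) edges and completes the proof. I anticipate the main subtlety to be combining these two bounds on $p$ tightly across all ranges of $|U|$; the structural observation that the successful pairs through each single large ball form a forest on $U_c$---since a cycle among them would force the last-added edge's endpoints to both already lie in $U_c$, violating the abort invariant---provides a natural sharpening tool for closing any residual gap.
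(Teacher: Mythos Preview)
Your treatment of the BFS trees $T_c$ and the small-ball spanners $L_c$ is fine. The gap is in the step (3) analysis.

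Applying \Cref{consist} \emph{globally} to the $p$ successful paths gives $O(n+\sqrt{n}\cdot p)$ edges, and you then try to show $\sqrt{n}\cdot p=2^{O(1/\epsilon)}n$ using the two bounds $p\le 2^{O(1/\epsilon)}n/|U|$ and $p\le \binom{|U|}{2}$. But these do not combine to cover all regimes of $|U|$: take $|U|=n^{1/3}$, so that both bounds give $p\le 2^{O(1/\epsilon)}n^{2/3}$, and hence $\sqrt{n}\cdot p=2^{O(1/\epsilon)}n^{7/6}$, which is super-linear. Your forest observation on the pairs through a single large ball is correct but only re-derives $|\Pi_c|\le |U_c|-1<|U|$; it does not improve the global count $p$ beyond $N_{\mathrm{large}}\cdot|U|$, so it cannot close this gap.

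The fix, and what the paper does, is to \emph{localize} the application of \Cref{consist}. For each large ball $\ball(c,r)$, collect the set $\Pi_c$ of subpaths of successful $\pi_{s,t}$ restricted to $G[\ball(c,4r)]$; one checks that these cover all step (3) edges (every interior vertex of a $\Sigma'$-segment is uncovered by $\bset_1$, hence lies in a large ball). The key invariant---your forest observation, or equivalently that $|U_c|$ strictly grows with each new path---gives $|\Pi_c|\le |U|$. Now apply \Cref{consist} inside $G[\ball(c,4r)]$: since the ball is large, $|U|<\sqrt{|\ball(c,r)|}\le\sqrt{|\ball(c,4r)|}$, so
\[
|E(\Pi_c)|=O\bigl(|\ball(c,4r)|+\sqrt{|\ball(c,4r)|}\cdot|\Pi_c|\bigr)=O\bigl(|\ball(c,4r)|\bigr),
\]
and summing over large balls gives $2^{O(1/\epsilon)}n$ by the coverage property of \Cref{clustering}. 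The point is that the largeness threshold $|\ball(c,r)|>|U|^2$ is calibrated precisely so that the local $\sqrt{\cdot}\,$-term is absorbed---information that is lost once you pass to the global $\sqrt{n}$.
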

\begin{proof}
It suffices to bound the total number of edges added to $H$ when handling large balls. For each large ball $\ball(c, r)\in \bset$, let us conceptually construct a set $\Pi_c$ of paths within $G[\ball(c, 4r)]$ during handling large balls.

Initially, all sets $\Pi_c$ are empty. When processing a path $\pi_{s, t}\in \Pi$, suppose Step (3) is executed. Consider the set of large balls $\{\ball(c_x, r_x)\mid x\in \pi_{s, t} \}$. For any ball $\ball(c, r)\in \{\ball(c_x, r_x)\mid x\in \pi_{s, t} \}$, let $y, z\in \ball(c, r+1)\cap \pi_{s, t}$ be the first and the last vertex on $\pi_{s, t}$. Then, add the subpath $\pi_{s, t}[y, z]$ to $\Pi_c$. Note that this path $\pi_{s,t}$ lies entirely in $G[\ball(c, 4r)]$.

We first show that in the end, all edges added to $H$ for handling large balls is a subset of $E\left(\bigcup_{\ball(c, r)\in \bset\text{ is large}} \Pi_c\right)$. When processing a path $\pi_{s, t}\in \Pi$, consider any vertex $x$ on a sub-path $\rho\in\Sigma^\prime$ which is not an endpoint of $\rho$. Then, since $x$ is not covered by $\bset_1$, any ball that covers $x$ must be large, and in particular $\ball(c_x, r_x)$ should be large. Suppose $y, z\in \ball(c_x, r_x+1)$ are the first and the last vertex on $\pi_{s, t}$. Then, $\pi_{s, t}[y, z]$ must contain all edges incident on $x$ on $\rho$. Hence, the union of all $\pi_{s, t}[y, z]$ ranging over all $x$'s should contain all paths in $\Sigma^\prime$.

Next, we show that $|\Pi_c|\leq |U|$ for any $c$. In fact, each time we added a new path to $\Pi_c$, we must have added $s, t$ to $U_c$ which increased $|U_c|$. Since $U_c\subseteq U$, we know that $|\Pi_c|\leq |U|$ in the end.

Finally, it suffices to bound the number of edges in $E\left(\bigcup_{\ball(c, r)\in \bset\text{ is large}} \Pi_c\right)$. Using \Cref{consist}, we have: 
$$E(\Pi_c)\leq O\left(|\ball(c, 4r)| + \sqrt{\ball(c, 4r)}\cdot |\Pi_c|\right)\leq O\left(|\ball(c, 4r)|\right).$$
Hence, $$E\left(\bigcup_{\ball(c, r)\in \bset\text{ is large}} \Pi_c\right)\leq O\left(\sum_{\ball(c, r)\in \bset\text{ is large}}|\ball(c, 4r)|\right) = O\brac{2^{O(1/\epsilon)}\cdot n\log n}.$$
\end{proof}

\begin{claim}[runtime]
The runtime of the algorithm is $O\big(m\big(|U| + 2^{O(1/\epsilon)}\big)\big)$.
\end{claim}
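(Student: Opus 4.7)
The plan is to account for the runtime of each algorithmic phase and sum them to reach the target $O(m(|U|+2^{O(1/\epsilon)}))$.

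\textbf{Clustering and BFS trees.} The edge perturbation costs $O(m)$, and the call to \Cref{clustering} with parameter $10/(\epsilon\log_2 n)$ costs $O(2^{10/\epsilon}m)$ as stated in the algorithm description. Building each BFS tree $T_c$ for $\ball(c,r)\in\balls$ costs $O(\vol(\ball(c,4r)))$; by the volume bound in \Cref{clustering} the sum is $O(2^{O(1/\epsilon)}m)$.

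\textbf{Small balls.} For each small ball, invoking \Cref{bottleneck} takes $O(\vol(\ball(c,4r)))$ time, and invoking \Cref{cor: BFS consistent} on $G[\ball(c,4r)]$ with $S=\ball^=(c,d)\cup\ball^=(c,d+1)$ takes $O(\vol(\ball(c,4r))\cdot|S|)$. The multiplicative volume bound in \Cref{clustering} gives $|\ball(c,4r)|\le 2^{O(1/\epsilon)}|U|^2$, so $|S|\le 2|\ball(c,4r)|/R=O(2^{O(1/\epsilon)}\sqrt{|U|})$. Summing over small balls yields $O(2^{O(1/\epsilon)}m\sqrt{|U|})\le O(2^{O(1/\epsilon)}m|U|)$ total.

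\textbf{Large balls.} A single call to \Cref{cor: BFS consistent} with $S=U$ produces the collection $\Pi$ in $O(m|U|)$ time. I then do $O(2^{O(1/\epsilon)}n)$-time preprocessing: for every vertex $v$ store a pointer to one containing ball $\ball(c_v,r_v)$ (by scanning each ball once via its already-computed BFS output) plus a bit indicating membership in $\bigcup_{\ball(c,d)\in\balls_1}\ball(c,d)$. Each $U_c$ is stored as an $|U|$-length bit array so that membership tests run in $O(1)$. I then process the paths source by source, reusing the BFS tree $T_s$ already built by \Cref{cor: BFS consistent}. For each $s\in U$, a single DFS over the Steiner tree of $T_s$ spanning $U$ (of size $O(n)$) handles all pairs $(s,t)$ with $t\in U$: as the DFS descends and ascends, I maintain a stack of ``active'' large balls on the root-to-node path and, via counters updated on push/pop, I track how many active large balls also contain $s$ in their $U_c$. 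When the DFS reaches a leaf $t\in U$, the abort condition reduces to an $O(1)$-amortized query over the active set intersected with $\{c:t\in U_c\}$. In the non-abort case the edge-copying into $H$ and the updates to the $U_{c_x}$ bit arrays are paid for by $|E(H)|=2^{O(1/\epsilon)}n$ and by the total $U_c$-update count, respectively.

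Summing the four phases gives $O(m|U|+2^{O(1/\epsilon)}m)=O(m(|U|+2^{O(1/\epsilon)}))$.

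\textbf{Main obstacle.} The principal difficulty is bounding the path-processing cost by $O(m|U|)$ rather than the naive $\sum_{s,t\in U}|\pi_{s,t}|$, which can be $\Omega(n|U|^2)$ in the worst case. The key is to replace per-pair path scans by a single DFS per source on the Steiner tree of $T_s$ restricted to $U$, together with the coverage-induced estimate $\sum_c|U_c|^2=O(2^{O(1/\epsilon)}n)$ (which follows because each large ball has $|\ball(c,r)|>|U|^2$ and the coverage bound of \Cref{clustering} limits the number of large balls to $O(2^{O(1/\epsilon)}n/|U|^2)$); together these keep both the abort checks and the path-expansion work inside the budget.
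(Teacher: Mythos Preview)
Your treatment of the clustering, the per-ball BFS trees, and the small-ball phase matches the paper's proof essentially line for line: both arrive at $O(2^{O(1/\epsilon)}m\sqrt{|U|})$ for the small-ball subset-distance-preservers. Where you diverge is the large-ball phase. The paper simply asserts that ``scanning all shortest paths $\pi_{s,t}$ in $\Pi$ \ldots\ takes time at most $O(n|U|)$'' and stops there; you correctly observe that the naive per-pair scan costs $\sum_{s,t}|\pi_{s,t}|$, which can be $\Theta(n|U|^2)$, and you try to fill the gap by batching paths per source and exploiting that $\{\pi_{s,t}:t\in U\}$ lies in the single BFS tree $T_s$. That is a genuine issue the paper glosses over, and your high-level remedy is the natural one.

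Two problems remain in your write-up. First, the ``$O(1)$-amortized query'' for the abort condition is not substantiated. You track how many active centres $c$ satisfy $s\in U_c$, but the abort test asks for an active $c$ with \emph{both} $s\in U_c$ and $t\in U_c$; here $t$ changes at every leaf, and the sets $U_c$ themselves change whenever a non-aborted path is committed mid-DFS. Your appeal to $\sum_c|U_c|^2=O(2^{O(1/\epsilon)}n)$ suggests enumerating $\{c:s,t\in U_c\}$ per pair, but you do not explain how to enumerate that set or how to keep it synchronized with the active stack under updates. As written, the abort check is a sketch, not an algorithm. Second, your final summation is inconsistent: you weaken the small-ball cost to $O(2^{O(1/\epsilon)}m|U|)$ and then claim the total is $O(m|U|+2^{O(1/\epsilon)}m)$, but the product $2^{O(1/\epsilon)}\cdot|U|$ is not bounded by the sum $|U|+2^{O(1/\epsilon)}$. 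The correct route is to keep the tighter bound $O(2^{O(1/\epsilon)}m\sqrt{|U|})$ and invoke $2^{c/\epsilon}\sqrt{|U|}\le |U|+2^{2c/\epsilon}$; the paper implicitly relies on the same inequality.
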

\begin{proof}
From \Cref{clustering}, computing all the balls in $\clusters$ takes time at most $O(2^{10/\epsilon}m/\eps)$. From \Cref{cor: BFS consistent}, constructing graphs $L_c$ in all small balls takes time at most $O(|U|^{1/2}\sum_c \vol_{G}(\ball(c, 4r)))=O(|U|^{1/2}m \cdot 2^{10/\eps}/\eps)$, and computing the paths in $\Pi$ takes time $O(m|U|)$. As for the part with large balls, the runtime is dominated by scanning all shortest paths $\pi_{s, t}$ in $\Pi$, which takes time at most $O(n|U|)$. Overall, the runtime is $O\big(m\big(|U| + 2^{O(1/\epsilon)}\big)\big)$.
\end{proof}

\section{Linear-Size Additive Spanners in Sub-quadratic Time}
\label{sec: subq}

In this section we provide the proof of \Cref{subquad}. 
We will first show in \Cref{sec: subquad for 3/7} a simple subquadratic time algorithm for computing an $\tilde O(n^{3/7+\eps})$ additive spanner.
Then we will slightly modify it to achieve better additive error bound in \Cref{sec: proof of subquad}.

\subsection{A subquadratic time algorithm for $\tilde O(n^{3/7+\eps})$ additive spanner}
\label{sec: subquad for 3/7}

In this subsection, we prove the following theorem, which shows that an additive spanner with the current best error (as in \cite{bodwin2021better}) can be computed in subquadratic time.

\begin{theorem}\label{subquad for 3/7}
There is an algorithm, that, given any undirected unweighted graph on $n$ vertices and $m$ edges, and any parameter $\eps > 0$, computes a spanner on $O\brac{2^{O(1/\eps)}\cdot n\log n}$ edges with $\tilde O(n^{3/7+\eps})$ additive stretch, in time $\tilde{O}\big(m + 2^{O(1/\eps)}\cdot n^{13/7}\big)$.
\end{theorem}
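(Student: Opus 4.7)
The plan is to mirror the BVW $+\tilde O(n^{3/7+\eps})$ additive spanner construction of \cite{bodwin2021better} while swapping out its slow ingredients. Two components in \cite{bodwin2021better} prevent a subquadratic running time: (a) their clustering runs in high polynomial time, and (b) inside each large ball they plant a $+O(|\ball|^{1/2})$ additive spanner of \cite{bodwin2015very}, whose construction needs all-pairs shortest paths of $G[\ball(c,4r)]$; in addition, their path-buying step enumerates all $V\times V$ pairs. I will replace (a) by \Cref{clustering}, (b) by \Cref{subset}, and restrict the path-buying step to pairs inside a random hitting set $S\subseteq V$ of size $\tilde O(n^{3/7})$ that intersects every large ball.

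Concretely, I would first apply \Cref{clustering} with $R = n^{3/7}$ and $\eps' = \Theta(\eps/\log n)$ so that each ball has radius at most $R\cdot n^{\eps}$ and $\sum_{\ball(c,r)\in\balls}|\ball(c,4r)| = \tilde O(2^{O(1/\eps)} n)$, in time $\tilde O(2^{O(1/\eps)} m)$. A ball $\ball(c,r)$ is \emph{small} if $|\ball(c,r)|\le R^{4/3}$ and \emph{large} otherwise. The spanner $H$ will contain (i) a BFS tree of $\ball(c,4r)$ for every ball; (ii) for every small ball, the consistent boundary distance preserver built from \Cref{bottleneck} plus \Cref{cor: BFS consistent}, verbatim as in \Cref{sec: subset}; (iii) for every large ball, the output of \Cref{subset} applied to $G[\ball(c,4r)]$ with $U$ taken to be the boundary of $\ball(c,r)$, contributing $2^{O(1/\eps)}|\ball(c,4r)|$ edges and yielding $\tilde O((|\ball|/R)^{3/2}\cdot n^{\eps})$ stretch for boundary-to-boundary distances inside the ball; and (iv) a distance preserver for each set of demand pairs $P_\ball$ generated by the following path-buying step. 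To run path-buying in subquadratic time, I sample $S\subseteq V$ of size $\tilde O(n^{3/7})$, compute BFS from every $s\in S$ to build a consistent family $\Pi$ of shortest $S\times S$ paths in time $O(|S|m)$, and enumerate the pairs in $S\times S$: for each $\pi_{s,t}$ I apply \pathpart, and for each large ball $\ball$ hosting a subpath I add the corresponding endpoint pair to $P_\ball$ and register $s,t$ into a witness set $U_\ball$, unless some large ball along $\pi_{s,t}$ already has both $s,t$ in its witness set; the preservers for $P_\ball$ are then installed via \Cref{cor: BFS consistent}.

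For the stretch, an arbitrary pair $(u,v)\in V\times V$ is handled by decomposing a shortest $u$-$v$ path via \pathpart; if only small balls are visited then the boundary preservers give exact preservation, and otherwise the hitting-set property of $S$ produces witnesses $s,t\in S$ inside the first and last large balls on the path, through which we reroute $u\to s\to t\to v$ using BFS trees and boundary subset spanners inside those two large balls plus the $S$-to-$S$ path produced by path-buying; the total additive error telescopes to $\tilde O(n^{3/7+\eps})$. For the size, the BVW counting argument (each new demand pair in $\ball$ settles a previously unused $S$-vertex with respect to $\ball$ thanks to the witness sets $U_\ball$) gives $|P_\ball|\le \tilde O(n^{2/7})$, so by \Cref{consist} the preservers contribute $\tilde O(2^{O(1/\eps)}n)$ edges in total. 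The running time is $\tilde O(2^{O(1/\eps)} m)$ for clustering, BFS trees, small-ball preservers, and per-ball invocations of \Cref{subset}, plus $O(|S|m) + O(|S|^{2}\cdot n) = \tilde O(n^{3/7}m + n^{13/7})$ for BFS-from-$S$ and scanning the $|S|^2$ candidate paths during path-buying. The main obstacle is making the BVW counting and stretch arguments go through in this restricted setting: one must verify that the internal $\tilde O((|\ball|/R)^{3/2}n^{\eps})$ subset-spanner error can replace BVW's $+O(|\ball|^{1/2})$ additive spanner in their "each bought path settles many vertex pairs" counting, and that restricting path-buying to $S\times S$ does not lose stretch for pairs outside $S\times S$ beyond the hitting-set reroute.
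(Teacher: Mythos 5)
Your proposal is not the paper's proof: the paper's algorithm for this theorem has \emph{no} top-level path-buying step at all. After sparsifying $G$ to $G'$ with $O(n^{11/7})$ edges (via \Cref{preproc} with $d=n^{4/7}$ --- a preprocessing step your proposal omits but needs, see below), it clusters with $R=\ceil{n^{3/7}}$, declares a ball small iff $|\ball(c,r)|\le R^{5/3}$, plants a \emph{subset spanner} (\Cref{subset}) on a thin boundary shell of each small ball, and adds a single global subset spanner $\hat H$ on a random set $S$ of size only $\tilde O(R^{2/3})=\tilde O(n^{2/7})$. The point is that $S$ need not hit every large ball individually: in the stretch argument (\Cref{clm: additive error}) it only has to hit a prefix/suffix of the ball decomposition whose \emph{total} size exceeds $n/R^{2/3}$, and the large balls are simply absorbed into the telescoping error $\sum_i |\ball(c_i,r_i)|/R^{2/3}$ because $|\ball|\ge R^{5/3}$ makes the trivial $O(Rn^\eps)$ per-ball error already $O(|\ball|/R^{2/3}\cdot n^\eps)$. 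All path-buying is encapsulated inside \Cref{subset}.

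This structural difference is where your version breaks. First, the runtime: you take $S$ of size $\tilde O(n^{3/7})$ (forced, since you need $S$ to hit every ball of size $\ge R^{4/3}=n^{4/7}$) and run BFS from every vertex of $S$, costing $\tilde O(|S|\cdot m)$. Even after sparsifying to $m'=O(n^{11/7})$, this is $\tilde O(n^{3/7}\cdot n^{11/7})=\tilde O(n^2)$, which is not $\tilde O(m+n^{13/7})$; your own runtime tally writes "$\tilde O(n^{3/7}m+n^{13/7})$" without noticing that the first term blows the budget. The paper's choice $|S|=\tilde O(n^{2/7})$ is exactly what makes the dominant cost $\tilde O(n^{11/7}\cdot n^{2/7})=\tilde O(n^{13/7})$. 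Second, the size bound: your witness-set argument gives only $|P_\ball|\le|S|=\tilde O(n^{3/7})$, not the $\tilde O(n^{2/7})$ you claim, and for a large ball with $|\ball|\approx R^{4/3}=n^{4/7}$ the preserver cost $\sqrt{|\ball|}\cdot|P_\ball|\approx n^{2/7}\cdot n^{3/7}=n^{5/7}\gg|\ball|$ is superlinear in the ball, so the total is not $O(n)$. Recovering the BVW bound requires the stronger count that each bought path settles $R^{5/3}$ new \emph{vertices} per large ball, which in turn needs a genuine $+O(|\ball|^{1/2})$ all-pairs additive spanner inside each large ball --- precisely the expensive object you removed and replaced by a boundary subset spanner. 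You correctly flag this as the "main obstacle," but it is not a verification detail; it is the step that fails, and the paper's proof succeeds precisely by restructuring the argument so that this counting is never needed. Finally, a smaller but real point: the per-ball calls to \Cref{subset} cost $O(\vol(\ball(c,4r))\cdot|U_c|)$ with $|U_c|$ up to $\approx R^{2/3}n^{O(\eps)}$, so they are not covered by your claimed $\tilde O(2^{O(1/\eps)}m)$ term; both this and the global subset spanner require the initial sparsification to $O(n^{11/7})$ edges to land at $\tilde O(n^{13/7})$.
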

\begin{remark}
	If we do not care about the runtime constraint, then the spanner size can be made $2^{O(1/\epsilon)}n$ by replacing \Cref{clustering} with the original Lemma 13 from \cite{bodwin2021better}.
\end{remark}

We start with the following lemma for a preliminary sparsification of the input graph.
\begin{lemma}\label{preproc}
There is an algorithm, that given any graph $G$ and parameter $1<d<n$, computes in $\tilde{O}(m)$ time an $+\tilde O(n/d)$ spanner $G'$ of $G$ with $|E(G')|\le O(nd)$ edges.
\end{lemma}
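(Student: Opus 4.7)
The plan is to combine degree-thresholded edge inclusion with a sampled dominator-set construction. First, I mark each vertex as \emph{low-degree} if $\deg_G(v)\le d$ and \emph{high-degree} otherwise, and place every edge incident to a low-degree vertex into the spanner $G'$; this contributes at most $O(nd)$ edges since each low-degree vertex brings in at most $d$ edges, and it can be identified in a single $O(m)$ adjacency-list scan. Note that any subpath of a shortest path in $G$ consisting only of low-degree vertices is already preserved exactly by these edges.

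Next, I sample $D\subseteq V$ by including each vertex independently with probability $\Theta((\log n)/d)$, which yields $|D|=\tilde O(n/d)$ and, by Chernoff and union bounds, guarantees that with high probability every high-degree vertex has at least one neighbor in $D$. For each high-degree $v$ I add a single edge from $v$ to some $d(v)\in N_G(v)\cap D$, costing $O(n)$ edges and found by one pass over the adjacency lists. To preserve approximate distances between the dominators themselves, I run a multi-source BFS from $D$ in $O(m)$ time, obtaining a partition of $V$ into clusters $\{V_s\}_{s\in D}$, include the BFS-forest edges in $G'$, and then add one representative ``bridge'' edge between each pair of adjacent clusters; the forest costs $O(n)$ edges and the bridges cost at most $O(|D|^2)=\tilde O(n^2/d^2)$ edges.

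For the stretch analysis, I take any pair $s,t$ and walk along a shortest $G$-path $\pi_{s,t}$, decomposing it into maximal low-degree subpaths (preserved exactly by step one) and maximal high-degree subpaths. Each high-degree subpath is rerouted through the dominator skeleton: hop from the first high-degree vertex to $d(v)\in D$ using the single edge added in step two, traverse through BFS-forest and bridge edges to a dominator near the other endpoint, then hop back. The hitting-set property applied to balls of growing size controls how far one has to travel inside $D$ in terms of the relevant cluster sizes, and a potential/charging argument very much in the spirit of the one in \Cref{sec: pairwise} shows the cumulative additive error is at most $\tilde O(n/d)$. The total runtime is $\tilde O(m)$ because each step (degree classification, sampling, neighbor scan, multi-source BFS, bridge enumeration) is $O(m\log n)$ or better.

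The main obstacle I anticipate is twofold. First, the bridge-edge count $\tilde O(n^2/d^2)$ only sits inside the $O(nd)$ budget when $d\ge n^{1/3}$; for smaller $d$ one has to recursively apply the whole construction on the induced dominator graph (whose effective vertex count is $|D|=\tilde O(n/d)$), and prove that $O(\log n)$ levels of recursion keep the total edge count $O(nd)$ and the total runtime $\tilde O(m)$ via geometric summation. Second, and more delicate, is ensuring that the stretch telescopes across recursion levels: each level introduces additive error that looks like a constant times $n/d$, and one must show that the sum across $O(\log n)$ levels is still $\tilde O(n/d)$ rather than growing with the recursion depth. This careful bookkeeping is the technically tightest part of the argument.
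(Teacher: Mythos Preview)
Your first step---including all edges incident to vertices of degree at most $d$---matches the paper exactly. But from there you are working far too hard, and the missing idea is simple: you do not need any additive guarantee among the high-degree vertices.

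The paper's second step is a one-liner: in addition to the low-degree edges, include an $O(\log n)$-multiplicative spanner of $G$ on $O(n)$ edges (Baswana--Sen, computable in $\tilde O(m)$ time). The stretch argument is then: on any shortest path $\pi$, vertices at distance at least $3$ along $\pi$ have disjoint neighborhoods in $G$, so at most $O(n/d)$ vertices of $\pi$ have degree exceeding $d$; hence at most $O(n/d)$ edges of $\pi$ have \emph{both} endpoints high-degree. Every other edge of $\pi$ is already in $G'$. Replace each of the $O(n/d)$ missing edges by its $O(\log n)$-length path in the multiplicative spanner, for total additive error $\tilde O(n/d)$. No dominating sets, no clusters, no bridges, no recursion.

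Your construction, by contrast, has two genuine gaps. First, the bridge-edge count $\tilde O(n^2/d^2)$ exceeds the $O(nd)$ budget whenever $d<n^{1/3}$, and the proposed fix by recursion is only a sketch: you would have to define the recursive instance precisely (which graph on $D$, with what edges and what weights?) and prove that both size and stretch telescope correctly over $O(\log n)$ levels. Second, and more immediately, even at a single level your stretch argument is hand-waving: ``the hitting-set property applied to balls of growing size'' and a ``potential/charging argument in the spirit of \Cref{sec: pairwise}'' are not a proof. The cluster-plus-bridge skeleton gives at best constant multiplicative stretch on the cluster graph; turning that into an additive $\tilde O(n/d)$ bound on $G$ requires controlling cluster diameters, which you never do.
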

\begin{proof}
Graph $G^\prime$ is simply the union of (i) for each vertex $v\in V(G)$ with $\deg_G(v) \leq \ceil{d}$, all incident edges of $v$; and (ii) an $O(\log n)$-multiplicative spanner of $G$ with $O(n)$ edges; such a multiplicative spanner can be computed in $\tilde{O}(m)$ time as shown in \cite{baswana2007simple}. Clearly, graph $G'$ can be computed in $\tilde O(m)$ time.
	
Consider now any pair $s, t\in V(G)$ and let $\pi$ be an $s$-$t$ shortest path in $G$. 
It is easy to observe that at most $O(n/d)$ vertices in $\pi$ have degree more than $d$, so the number of edges in $E(\pi)$ that is incident to any degree $\le d$ vertex is at most $O(n/d)$. As we have included in $G'$ an $O(\log n)$-multiplicative spanner of $G$, the distance in $G'$ between the pair of endpoints of every such edge is at most $O(\log n)$.
Therefore, $\dist_{G'}(s,t)\le \dist_G(s,t)+\tilde{O}(n/d)$.
\end{proof}

We now proceed to describe our algorithm for \Cref{lem: reduction}. Similar to \Cref{sec: subset}, we assume that the (unit) edge weights are slightly perturbed so that for every pair $s,t$ there is a unique shortest path connecting them in $G$ (or any subgraph that contains both $s$ and $t$).

As a pre-processing step, if $|E(G)|\ge 10\cdot n^{2-3/7}$, then we first apply the algorithm from \Cref{preproc} to $G$ with $d=n^{1-3/7}$,
and get graph $G'$, so $|E(G')|=O(n^{2-3/7})$.
If $|E(G)|\le 10\cdot n^{2-3/7}$, then we simply set $G'=G$.

We then apply the algorithm from \Cref{clustering} to $G'$ with parameter $R$ (to be determined later) and $\frac{10}{\epsilon\log_2n}$, and let $\bset$ be the set of balls we obtain.
%
We say that a ball $\ball(c, r)\in\bset$ is \emph{small} if $|\ball(c, r)| \leq R^{5/3}$; otherwise we say it is \emph{large}.
For each ball $\ball(c,r)\in \balls$, we compute a BFS tree $T_c$ that is rooted at $c$ and spans all vertices in $\ball(c, 4r)$, in time $O(\vol(\ball(c, 4r)))$. From \Cref{clustering}, $\sum_{c} |E(T_c)|= \sum_{c} |\ball(c, 4r)| = O(2^{10/\epsilon}n\log n)$.


\paragraph{Handling small balls.} Consider a small ball $\ball(c, r)\in \bset$. From Lemma \ref{bottleneck}, we can compute in time $O(\vol(\ball(c, 4r)))$ an integer $d\in [r, 2r]$ such that $|\ball^=(c, d)\cup\ball^=(c, d+1)|\leq 2|\ball(c, 4r)| / r \leq 2|\ball(c, 4r)| / R$. We denote  
$\bset_1 = \{\ball(c, d)\mid \ball(c, r)\in\bset\text{ is small} \}$.
Then, for each small ball $\ball(c, r)$, apply Lemma \ref{subset} to compute a subset spanner $L_c$ of $G'[\ball(c, 4r)]$ on the set $\ball^=(c, d)\cup\ball^=(c, d+1)$.



\paragraph{Handling large balls.}
Choose a random subset $S\subseteq V(G)$ of size $\ceil{10R^{2/3}\log n}$.
We then apply Lemma \ref{subset} to compute a subset spanner $\hat H$ of $G'$ on $S$.

\paragraph{The spanner construction.}
The output graph $H$ is simply defined to be the union of 
\begin{itemize}
	\item for each ball $\ball(c,r)\in \balls$, the tree $T_c$;
	\item for each small ball $\ball(c,r)\in \balls$, graph $L_c$;
	\item graph $\hat H$.
\end{itemize}

\subsubsection{Stretch analysis}

According to \Cref{preproc}, in order to show that $H$ is a $+\tilde O(n^{3/7+\eps})$ spanner of $G$, it suffices to show that $H$ is a $+\tilde O(n^{3/7+\eps})$ spanner of $G'$. For notational convenience, in this subsection we rename $G'$ by $G$.
The following statement, which is a generalization of \Cref{exact}, is the key to the stretch analysis.
\begin{claim}\label{exact2}
Let $s,t$ be a pair of vertices in $G$ and let $\pi$ be a shortest path connecting them. Then there exists (i) a path $\phi$ in $H$ connecting $s$ to $t$; (ii) a sequence of balls $\ball(c_1, r_{1}), \ldots, \ball(c_l, r_{l})$; (iii) two sequences of vertices $(u_0 = s, u_1, \ldots, u_l = t)$ and $(v_0 = s, v_1, \ldots, v_{l-1})$, and two sequences of paths  $\alpha_1, \alpha_2, \ldots, \alpha_l$, $\beta_1, \beta_2, \ldots, \beta_l$, with the following properties.
	\begin{enumerate}[(a),leftmargin=*]
		\item Vertices $u_1, u_2, \ldots, u_l$ appear on path $\pi$ in this order in the direction from $s$ to $t$.
		\item $s\in \ball(c_1, d_{1}), t\in\ball(c_l, d_{l}+1)$, and $v_i\in \ball^=(c_{i+1}, d_{{i+1}}), u_i\in \ball^=(c_i, d_{i}+1), \forall 1\leq i\leq l-1$.
		\item For each $1\leq i\leq l$, $\alpha_i$ is a shortest path in $H$ connecting $v_{i-1}$ to $u_i$ that is entirely contained in $G[\ball(c_i, 4r_{i})]$, and moreover, vertex $v_i$ lies on $\alpha_i$.
		\item For each $1\leq i\leq l-1$, $\beta_i = \alpha_i[v_{i-1}, v_i]$; and $\beta_l = \alpha_l$.
		\item $\phi=\beta_1\circ \cdots \circ \beta_l$; and $|\phi|\leq \dist_{G}(s, t) + 15\cdot Rn^\epsilon + \tilde{O}\left(2^{15/\epsilon}\cdot n^\epsilon\cdot\sum_{i=2}^{l-1}\frac{|\ball(c_i, r_{i})|}{R^{2/3}}\right)$.
	\end{enumerate}
\end{claim}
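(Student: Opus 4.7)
The plan is to generalize the iterative ball-selection construction from the proof of Claim \ref{exact} so that large balls may also appear in the decomposition. First, extend the bottleneck radius by setting $d_i := r_i$ whenever $\ball(c_i, r_i) \in \bset$ is large. Starting from $v_0 = u_0 = s$, at each step $i$ with $u_i \ne t$, I would select the ball $\ball(c_{i+1}, r_{i+1}) \in \bset$ whose annulus $\ball(c_{i+1}, d_{i+1} + 1)$ meets $\pi[s, u_i]$ and, subject to this, contains the vertex of $\pi$ closest to $t$; set $u_{i+1}$ to that farthest vertex, and set $v_i$ to a vertex in $\pi[u_{i-1}, u_i] \cap \ball^=(c_{i+1}, d_{i+1})$. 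The existence of $v_i$ and properties (a) and (b) then follow verbatim from the arguments already given in Claim \ref{exact}.

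The core construction of the paths $\alpha_i$ proceeds differently for small versus large middle balls. For a small $\ball(c_i, r_i)$ with $2 \le i \le l-1$, both $v_{i-1}$ and $u_i$ lie in the boundary set $\ball^=(c_i, d_i) \cup \ball^=(c_i, d_i+1)$ on which the subset spanner $L_{c_i}$ is built, so Lemma \ref{subset} supplies a $v_{i-1}$-to-$u_i$ path inside $L_{c_i} \subseteq H \cap G[\ball(c_i, 4r_i)]$ of length at most $\dist_G(v_{i-1}, u_i) + O\bigl(|\ball^=(c_i, d_i) \cup \ball^=(c_i, d_i+1)|^{3/2}\cdot n^\epsilon\bigr)$. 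Using $|\ball^=(c_i, d_i) \cup \ball^=(c_i, d_i+1)| \le 2|\ball(c_i, 4r_i)|/R$, the clustering doubling bound $|\ball(c_i, 4r_i)| \le 2^{10/\epsilon} |\ball(c_i, r_i)|$, and the small-ball threshold $|\ball(c_i, r_i)| \le R^{5/3}$, this additive term collapses to $\tilde O\bigl(2^{15/\epsilon} n^\epsilon \cdot |\ball(c_i, r_i)|/R^{2/3}\bigr)$, matching the per-ball budget in (e). For a large middle $\ball(c_i, r_i)$, the BFS tree $T_{c_i}$ alone produces a $v_{i-1}$-to-$u_i$ path through $c_i$ of length at most $4Rn^\epsilon$, which the same budget absorbs since $|\ball(c_i, r_i)| > R^{5/3}$ forces $|\ball(c_i, r_i)|/R^{2/3} > R$. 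I would take $\alpha_i$ to be this path, routed through $v_i$ by a BFS-tree or subset-spanner detour whose cost is absorbed into the per-ball error, which yields (c) and (d).

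Setting $\beta_i := \alpha_i[v_{i-1}, v_i]$ for $i < l$ and $\beta_l := \alpha_l$, the concatenation $\phi := \beta_1 \circ \cdots \circ \beta_l$ is a legitimate path in $H$ because consecutive pieces share the endpoint $v_i$. Since $\sum_{i=1}^{l-1}\dist_G(v_{i-1}, v_i) + \dist_G(v_{l-1}, u_l) = \dist_G(s,t)$, summing $|\beta_i| \le |\alpha_i|$ and telescoping gives
\[
|\phi| \;\le\; \dist_G(s, t) + O(R n^\epsilon) + \sum_{i=2}^{l-1} \tilde O\bigl(2^{15/\epsilon} n^\epsilon \cdot |\ball(c_i, r_i)|/R^{2/3}\bigr),
\]
where the $O(Rn^\epsilon)$ slack absorbs the boundary balls $\ball(c_1, r_1)$ and $\ball(c_l, r_l)$: for these, $s$ (respectively $t$) may lie in the interior rather than on the bottleneck annulus, so the BFS trees $T_{c_1}, T_{c_l}$ suffice at cost $\le 4Rn^\epsilon$ each, with the explicit constant $15$ accounting for a few triangle-inequality steps needed to link $s$ to $v_1$ and $v_{l-1}$ to $t$.

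The principal obstacle is ensuring the requirement in (c) that $v_i$ lies on $\alpha_i$, since the unique shortest $v_{i-1}$-to-$u_i$ path in $H \cap G[\ball(c_i, 4r_i)]$ need not pass through a chosen $v_i \in \ball^=(c_{i+1}, d_{i+1})$. The fix is to take $\alpha_i$ as a near-shortest path explicitly routed through $v_i$, obtained by concatenating a $v_{i-1}$-to-$v_i$ piece and a $v_i$-to-$u_i$ piece in $H \cap G[\ball(c_i, 4r_i)]$; one must then verify that this detour only inflates the error by a constant factor so that the per-ball budget still holds. A second, purely algebraic subtlety is converting the natural subset-spanner error $(|\ball|/R)^{3/2} n^\epsilon$ into the claimed bound $|\ball|/R^{2/3} \cdot n^\epsilon$, which hinges precisely on the small-ball threshold $|\ball| \le R^{5/3}$.
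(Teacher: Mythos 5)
The obstacle you flag in your last paragraph is the crux of the proof, and the fix you sketch does not work. You choose $v_i$ on $\pi[u_{i-1},u_i]\cap\ball^=(c_{i+1},d_{i+1})$ and then propose to force $\alpha_i$ through $v_i$ by concatenating a $v_{i-1}$-to-$v_i$ piece and a $v_i$-to-$u_i$ piece. But the subset spanner $L_{c_i}$ only controls distances between vertices of $\ball^=(c_i,d_i)\cup\ball^=(c_i,d_i+1)$, and your $v_i$ lies on the boundary of the \emph{next} ball, which is generically an arbitrary interior vertex of $\ball(c_i, d_i+1)$ and not in that subset. The only way $H$ lets you route through such a $v_i$ is via the BFS tree $T_{c_i}$, at cost $\Theta(Rn^\epsilon)$; the per-ball budget in (e) for a small ball is $\tilde O(n^\epsilon|\ball(c_i,r_i)|/R^{2/3})$, which can be as small as roughly $R^{1/3}n^{\epsilon}$, so the detour is \emph{not} absorbed. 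Summed over up to $\dist_G(s,t)/R$ middle balls this gives a multiplicative rather than additive overhead. (Your appeal to the argument of Claim \ref{exact} hides the same issue: there, distances between boundary vertices are preserved \emph{exactly} and shortest paths are unique, so the spanner path literally coincides with a subpath of $\pi$ and $v_i\in\pi$ is automatically on it; neither holds here.)

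The paper's construction differs exactly at this point, and in a way you cannot recover by a local patch. It lets $\alpha_{i+1}$ be a genuine shortest $v_i$-to-$u_{i+1}$ path in $H$ with no routing constraint; it selects the next ball among the balls that intersect the spanner path $\phi$ built so far (not $\pi[s,u_i]$), while still measuring progress by which vertex of $\pi$ the ball reaches; and it then \emph{defines} $v_{i}$ as a crossing point of $\alpha_{i}$ with $\ball^=(c_{i+1},d_{i+1})$, so that property (c) holds by construction. The error is controlled not by your identity $\sum_i\dist_G(v_{i-1},v_i)+\dist_G(v_{l-1},t)=\dist_G(s,t)$ (which fails once the $v_i$ leave $\pi$), but by the telescoping bound $|\beta_i|\le|\pi[u_{i-1},u_i]|+|\alpha_{i-1}[v_{i-1},u_{i-1}]|-|\alpha_i[v_i,u_i]|+(\text{per-ball error})$, which uses $\dist_H(v_{i-1},u_i)=|\beta_i|+|\alpha_i[v_i,u_i]|$ and the triangle inequality $\dist_G(v_{i-1},u_i)\le|\alpha_{i-1}[v_{i-1},u_{i-1}]|+|\pi[u_{i-1},u_i]|$. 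Your algebra converting $(|\ball|/R)^{3/2}n^\epsilon$ into $|\ball|/R^{2/3}\cdot n^\epsilon$ via the threshold $|\ball|\le R^{5/3}$, and your treatment of large middle balls and of the two end balls, do match the paper; the gap is confined to, but fatal at, the construction of $v_i$ and $\alpha_i$.
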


\begin{figure}[h]
	\begin{center}
		\begin{tikzpicture}[thick,scale=0.7]
	\draw [red] (0, 0.5) ellipse (2.3 and 2);
	\draw [red] (3, 0.5) ellipse (2.3 and 2);
	\draw [red] (6, 0.5) ellipse (2.3 and 2);
	\draw [red] (18, 0.5) ellipse (2.3 and 2);
	
	\draw (0, 0) node[circle, draw, fill=black!50, inner sep=0pt, minimum width=6pt, label = 180 : $s$] {};
	\draw (18, 0) node[circle, draw, fill=black!50, inner sep=0pt, minimum width=6pt,label = 0 : $t$] {};
	
	\draw (2, 0) node[circle, draw, fill=black!50, inner sep=0pt, minimum width=6pt,label = $u_1$] {};
	\draw (1.2, 1.3) node[circle, draw, fill=black!50, inner sep=0pt, minimum width=6pt,label = $v_1$] {};
	\draw [line width = 0.5mm] (0.2, 0) -- (1.8, 0);
	\draw [line width = 0.5mm, color=orange] (0.1, 0.2) to[out=60, in=210] (1, 1.3);
	\draw [line width = 0.5mm, color=orange] (1.2, 1.1) to[out=-90, in=150] (1.8, 0.1);
	
	\draw (-1, 2.4) node[red, label={$\ball(c_1, d_{1}+1)$}]{};
	
	\draw (5, 0) node[circle, draw, fill=black!50, inner sep=0pt, minimum width=6pt,label = $u_2$] {};
	\draw [line width = 0.5mm] (2.3, 0) -- (4.8, 0);
	
	\draw (2.7, 2.4) node[red, label={$\ball(c_2, d_{2}+1)$}]{};
	\draw [line width = 0.5mm, color=orange] (1.4, 1.3) -- (4, 1.3);
	\draw [line width = 0.5mm, color=orange] (4.2, 1.1) to[out=-90, in=150] (4.8, 0.1);
	
	\draw (4.2, 1.3) node[circle, draw, fill=black!50, inner sep=0pt, minimum width=6pt,label = $v_2$] {};
	\draw (8, 0) node[circle, draw, fill=black!50, inner sep=0pt, minimum width=6pt,label = $u_3$] {};
	\draw [line width = 0.5mm] (5.3, 0) -- (7.8, 0);
	
	\draw (6.5, 2.4) node[red, label={$\ball(c_3, d_{3}+1)$}]{};
	\draw [line width = 0.5mm, color=orange] (4.4, 1.3) to[out=0, in=120] (7.9, 0.2);
	
	\draw[dotted] (3.7, 1.2) -- (0, -4);
	\draw[dotted] (8.3, 1.2) -- (9.5, -6);
	\draw (4.5, -4) node[circle, draw, fill=black!50, inner sep=0pt, minimum width=6pt,label = $v_2$] {};
	\draw (9, -7) node[circle, draw, fill=black!50, inner sep=0pt, minimum width=6pt,label = $u_3$] {};
	\draw (6, -6) node[circle, draw, fill=black!50, inner sep=0pt, minimum width=6pt,label = $c_3$] {};
	\draw [red] (6, -6) ellipse (3.4 and 3.4);
	\draw [dashed] (9.2, -7) -- (11, -7);
	\draw [line width = 0.5mm, color=orange] (4.7, -4) to[out=0, in=120] (8.9, -6.8);
	\draw (7.5, -5) node[red, label={$\alpha_3$}]{};
	\draw [line width = 0.5mm, color=orange] (1.4, -4) to (4.3, -4);
	\draw (1.2, -4) node[circle, draw, fill=black!50, inner sep=0pt, minimum width=6pt,label = $v_1$] {};
	\draw (2.5, -4.2) node[red, label={$\beta_2$}]{};
	
	\draw [decorate, decoration = {brace}] (5.7,-6) -- (4.5,-4.3);
	\draw [dashed] (4.7, -4.2) -- (5.8, -5.8);
	\draw (4.3, -6) node[label={[rotate=-40]$d_3$}]{};
	\draw [decorate, decoration = {brace}] (8.7, -7.3) -- (6, -6.3);
	\draw [dashed] (6.2, -6.1) -- (8.8, -7);
	\draw (7, -8) node[label={[rotate=-20]$d_3+1$}]{};
	
	\draw (16.2, 1.3) node[circle, draw, fill=black!50, inner sep=0pt, minimum width=6pt,label = $v_{l-1}$] {};
	\draw (17, 0) node[circle, draw, fill=black!50, inner sep=0pt, minimum width=6pt,label = $u_{l-1}$] {};
	\draw [line width = 0.5mm] (17.8, 0) -- (17.2, 0);
	
	\draw [line width = 0.5mm, color=orange] (16.4, 1.3) to[out=0, in=110] (17.9, 0.2);
	
	\draw (18.5, 2.4) node[red, label={$\ball(c_l, d_{l}+1)$}]{};
	
	\draw (13.2, 1.3) node[circle, draw, fill=black!50, inner sep=0pt, minimum width=6pt,label = $v_{l-2}$] {};
	\draw (14, 0) node[circle, draw, fill=black!50, inner sep=0pt, minimum width=6pt,label = $u_{l-2}$] {};
	\draw [line width = 0.5mm] (16.8, 0) -- (14.2, 0);
	\draw [red] (15, 0.5) ellipse (2.3 and 2);
	\draw (14.5, 2.4) node[red, label={$\ball(c_{l-1}, d_{l-1}+1)$}]{};
	\draw [line width = 0.5mm, color=orange] (13.2, 1.1) to[out=-90, in=150] (13.8, 0.1);
	\draw [line width = 0.5mm, color=orange] (13.4, 1.3) -- (16, 1.3);
	\draw [line width = 0.5mm, color=orange] (16.2, 1.1) to[out=-90, in=150] (16.8, 0.1);
	
	\draw [dashed, color=orange] (6.8, 1.3) -- (13.1, 1.3);
	\draw [dashed] (8.2, 0) -- (13.8, 0);
\end{tikzpicture}
		\caption{The construction of vertex sequences $u_1, u_2, \ldots, u_l$ and $v_1, v_2, \ldots, v_{l-1}$; the orange paths belong to the spanner $H$.}\label{add-err}
	\end{center}
\end{figure}
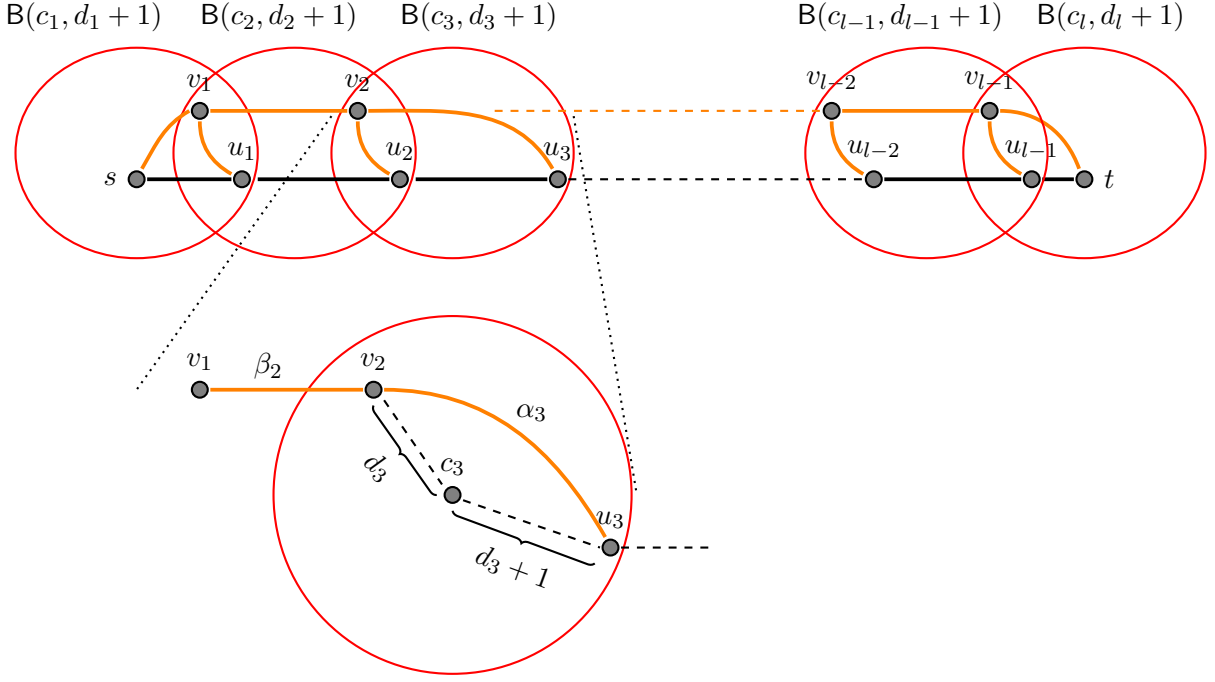

\begin{proof}
Start with $i=0$. Before iteration $i$, suppose we have already computed vertices $u_1, u_2, \ldots, u_i$, and $v_1, v_2, \ldots, v_{i-1}$, and paths $\alpha_1, \alpha_2, \ldots, \alpha_i$, and paths $\beta_1, \beta_2, \ldots, \beta_{i-1}$. During the algorithm, keep $\phi = \beta_1\circ \beta_2\cdots\circ \beta_{i-1}\circ\alpha_i$. So $\phi$ is a path in $H$ from $s$ to $u_i$.

\begin{enumerate}[(1),leftmargin=*]
	\item If $u_i = t$ then we terminate the process and let $l=i$.
	
	Otherwise, find the ball  in $\bset_1$ that, among all balls in $\bset_1$ that intersects with $\phi$, the one that contains a vertex on $\pi$ that is closest to $t$. Note that this ball always exists; for example we can pick an arbitrary one that contains $u_i$.
	
	We denote this ball by $\ball(c_{i+1}, d_{i+1})$ and set $u_{i+1}$ as the last vertex of $\pi$ that belongs to  $\ball(c_{i+1}, d_{i+1}+1)$.
	
	\item Next, if $i = 0$, then let $\alpha_1$ be the shortest path in $H$ from $s=v_0$ to $u_1$.
	
	If $i\ge 1$, we then let $v_i$ be any vertex that belongs to both $\alpha_{i}$ and $\ball^=(c_{i+1}, d_{i+1})$, and let $\alpha_{i+1}$ be the shortest path from $v_i$ to $u_{i+1}$ in $H$.  Then, define $\alpha_{i+1}$ to be the shortest path from $v_i$ to $u_{i+1}$, and $\beta_i = \alpha_i[v_{i-1}, v_i]$. We will prove shortly the existence of $v_i$. 
	
	Finally, increase $i\leftarrow i+1$ and go to Step (1).
\end{enumerate}


We now argue the existence of $v_i$.
\begin{claim}
	\label{clm: existence}
For each $i\ge 1$, $\phi$ intersects with $\ball^=(c_{i+1}, d_{i+1})$, and the intersection point $v_i$ must belong to $\alpha_i$. Also, if $u_{i+1}\neq t$, then $u_{i+1}\in \ball^=(c_{i+1}, d_{i+1}+1)$.
\end{claim}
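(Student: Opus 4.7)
The plan is to follow the template of the analogous existence argument from the proof of \Cref{exact}, adapted to the fact that $v_i$ must be located on the most recent segment $\alpha_i$ of $\phi$, not merely somewhere on $\phi$. I will establish the three assertions in turn, leaning on the greedy rule by which successive balls $\ball(c_j, d_j)$ and vertices $u_j$ are selected, together with a monotone-advance property that the index of $u_j$ along $\pi$ grows with $j$.

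First, to show that $\phi$ meets $\ball^=(c_{i+1}, d_{i+1})$, recall that by construction some vertex of $\phi$ lies in $\ball(c_{i+1}, d_{i+1})$. In an unweighted graph, distances from $c_{i+1}$ along consecutive vertices of a path change by at most one, so unless $\phi$ lies entirely inside $\ball(c_{i+1}, d_{i+1})$ it must visit a vertex of $\ball^=(c_{i+1}, d_{i+1})$. The edge case in which $\phi$ is entirely inside $\ball(c_{i+1}, d_{i+1})$ forces $s \in \ball(c_{i+1}, d_{i+1})$, so this ball was a candidate at iteration $0$; the greedy choice of $\ball(c_1, d_1)$ would then force $u_1$ to have index along $\pi$ at least that of $u_{i+1}$, which together with the monotone advance of the $u_j$ in subsequent iterations contradicts the existence of a fresh $u_{i+1}$ strictly past $u_i$.

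Second, to show that some crossing can be placed on $\alpha_i$, suppose for contradiction that every intersection of $\phi$ with $\ball^=(c_{i+1}, d_{i+1})$ lies on a segment $\beta_j$ for some $j < i$. Since $\beta_j \subseteq \alpha_j$, the ball $\ball(c_{i+1}, d_{i+1})$ already met $\phi$ at iteration $j$ and was therefore a candidate there. The greedy rule at iteration $j$ would have forced the index of $u_{j+1}$ along $\pi$ to be at least that of $u_{i+1}$, and the monotone progression of the $u_j$ then implies that the index of $u_i$ is already at least that of $u_{i+1}$ by the start of iteration $i$, contradicting the definition of $u_{i+1}$ as a vertex strictly farther along $\pi$ than $u_i$. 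Hence some crossing lies on $\alpha_i$, and we designate any such crossing as $v_i$.

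Third, the statement $u_{i+1} \in \ball^=(c_{i+1}, d_{i+1}+1)$ whenever $u_{i+1} \ne t$ is the direct analogue of the corresponding step in \Cref{exact}: if $u_{i+1}$ were strictly interior to $\ball(c_{i+1}, d_{i+1}+1)$, then its $\pi$-successor would also lie in $\ball(c_{i+1}, d_{i+1}+1)$, contradicting the maximality in the definition of $u_{i+1}$. The main technical obstacle I expect is pinning down the monotone-advance property of the $u_j$ that the first two parts invoke; I plan to establish it by induction on $j$ from the greedy selection rule, mirroring the corresponding reasoning in the proof of \Cref{exact}, before feeding it into the three parts above.
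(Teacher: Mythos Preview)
Your proposal is correct and follows essentially the same three-step approach as the paper's proof: (i) $\phi$ cannot lie entirely inside $\ball(c_{i+1}, d_{i+1})$ or else $s$ is in this ball, contradicting the greedy choice at iteration~$0$; (ii) if every crossing lay on some $\beta_j$ with $j<i$, then $\ball(c_{i+1}, d_{i+1})$ was already a candidate at iteration~$j$, so the greedy rule would have pushed $u_{j+1}$ at least as far as $u_{i+1}$, contradicting $u_{i+1}\in\pi(u_{j+1},t]$; (iii) if $u_{i+1}\neq t$ were strictly interior to $\ball(c_{i+1}, d_{i+1}+1)$, its $\pi$-successor would still be in the ball, contradicting maximality. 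Your explicit flagging of the monotone-advance property of the $u_j$'s is something the paper invokes implicitly via the assertion $u_{i+1}\in\pi(u_{j+1},t]$, but the underlying reasoning is the same.
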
 
\begin{proof}[Proof of \Cref{clm: existence}]
First, if $\phi$ does not intersect $\ball^=(c_{i+1}, d_{i+1})$, then as $\phi$ already intersects with $\ball(c_{i+1}, d_{i+1})$, the path should lie entirely within $\ball(c_{i+1}, d_{i+1})$, a contradiction with the choice of $\ball(c_1, d_{1})$ and $u_1$. Therefore, there exists a vertex $v_i$ which is an intersection of $\phi$ and $\ball^=(c_{i+1}, d_{i+1})$. 

Second, if $v_i$ does not belong to $\alpha_i$ but instead belongs to sub-path $\beta_j$ for some index $j<i$, then in an earlier iteration when we were determining $u_{j+1}$, $u_{j+1}$ should be at least as close to $t$ as $u_{i+1}$, which contradicts the fact that $u_{i+1}\in \pi(u_{j+1}, t]$. 

Third, if $u_{i+1}\neq t$, while $u_{i+1}\in \ball(c_{i+1}, d_{i+1})$, then the next vertex of $\pi$ should also belong to $\ball(c_{i+1}, d_{i+1}+1)$, a contradiction to the fact that $u_{i+1}$ is the closest vertex to $t$.
\end{proof}

It is easy to verify that properties $(a)$-$(d)$ hold from the above iterative process. It remains to prove property $(e)$, which we do in the next claim.
	\begin{claim}
	\label{clm: beta_i}
For each $2\leq i\leq l-1$, $|\beta_i|\leq 5Rn^\epsilon$, and
		$$|\beta_i|\leq |\pi[u_{i-1}, u_i]| + |\alpha_{i-1}[v_{i-1}, u_{i-1}]| - |\alpha_{i}[v_{i}, u_{i}]| +\tilde{O}\left(2^{15/\epsilon}\cdot n^\epsilon\cdot\frac{|\ball(c_i, r_{i})|}{R^{2/3}}\right).$$
	\end{claim}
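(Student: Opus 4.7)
The plan is to prove the two bounds by bounding $|\alpha_i|$, since $|\beta_i|=|\alpha_i|-|\alpha_i[v_i,u_i]|$ by definition. For the first bound $|\beta_i|\le 5Rn^\epsilon$, I will invoke the BFS tree $T_{c_i}\subseteq H$ rooted at $c_i$, which supplies a $v_{i-1}\to c_i\to u_i$ walk of length $d_i+(d_i+1)\le 4r_i+1$ lying entirely within $\ball(c_i,4r_i)$. Since $r_i\le Rn^\epsilon$ by the radius guarantee of \Cref{clustering}, this immediately yields $|\alpha_i|\le 4r_i+1\le 5Rn^\epsilon$, and hence $|\beta_i|\le 5Rn^\epsilon$.

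For the second bound, after rearranging it suffices to show that
\[|\alpha_i|\le |\pi[u_{i-1},u_i]|+|\alpha_{i-1}[v_{i-1},u_{i-1}]|+E_i\]
with $E_i=\tilde O(2^{15/\epsilon}n^\epsilon\cdot|\ball(c_i,r_i)|/R^{2/3})$. The plan is to invoke the subset spanner $L_{c_i}\subseteq H$ built by \Cref{subset} on $U_i=\ball^=(c_i,d_i)\cup\ball^=(c_i,d_i+1)$. Since both $v_{i-1}$ and $u_i$ lie in $U_i$, \Cref{subset} gives
\[|\alpha_i|\le\dist_{L_{c_i}}(v_{i-1},u_i)\le\dist_{G[\ball(c_i,4r_i)]}(v_{i-1},u_i)+O(|U_i|^{3/2}\cdot n^\epsilon).\]
By the choice of $d_i$ via \Cref{bottleneck}, $|U_i|\le 2|\ball(c_i,4r_i)|/R$; combining with the sparsity bound $|\ball(c_i,4r_i)|\le 2^{10/\epsilon}|\ball(c_i,r_i)|$ from \Cref{clustering} and the small-ball hypothesis $|\ball(c_i,r_i)|\le R^{5/3}$ yields $|U_i|^{3/2}\le O(2^{15/\epsilon}\cdot|\ball(c_i,r_i)|/R^{2/3})$, which matches $E_i$ up to polylog factors.

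It remains to bound $\dist_{G[\ball(c_i,4r_i)]}(v_{i-1},u_i)$ by $|\pi[u_{i-1},u_i]|+|\alpha_{i-1}[v_{i-1},u_{i-1}]|$. The triangle inequality through $u_{i-1}$ gives $\dist_G(v_{i-1},u_i)\le\dist_G(v_{i-1},u_{i-1})+\dist_G(u_{i-1},u_i)\le|\alpha_{i-1}[v_{i-1},u_{i-1}]|+|\pi[u_{i-1},u_i]|$, using that $\alpha_{i-1}$ is a shortest path in $H\subseteq G$ (so $|\alpha_{i-1}[v_{i-1},u_{i-1}]|\ge\dist_G(v_{i-1},u_{i-1})$) and that $\pi$ is a shortest path in $G$. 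A midpoint argument then shows that every vertex on a shortest $v_{i-1}$-to-$u_i$ path in $G$ lies within distance $4r_i+O(1)$ of $c_i$; the resulting $O(1)$ slack in the ball radius is absorbed either by applying \Cref{subset} to $G'[\ball(c_i,4r_i+O(1))]$ at the cost of a constant factor, or by the polylog factor hidden in $E_i$. Combining the inequalities yields $|\alpha_i|\le|\pi[u_{i-1},u_i]|+|\alpha_{i-1}[v_{i-1},u_{i-1}]|+E_i$, and subtracting $|\alpha_i[v_i,u_i]|$ from both sides recovers the bound on $|\beta_i|$.

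The main obstacle will be this last geometric step: the natural concatenation path $v_{i-1}\to u_{i-1}\to u_i$ through $u_{i-1}$ need not stay inside $\ball(c_i,4r_i)$, since $u_{i-1}$ may lie outside this ball (indeed $u_{i-1}\in\ball^=(c_{i-1},d_{i-1}+1)$, which is a different ball). Thus the bound on the in-ball distance must instead go through a shortest $v_{i-1}$-to-$u_i$ path in $G$, whose containment in $\ball(c_i,4r_i+O(1))$ requires the careful midpoint analysis sketched above.
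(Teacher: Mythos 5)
Your small-ball argument matches the paper's proof: bound $|\alpha_i|=\dist_H(v_{i-1},u_i)$ by the radius of $\ball(c_i,4r_i)$ for the first inequality, and for the second invoke the subset spanner $L_{c_i}$ on $\ball^=(c_i,d_i)\cup\ball^=(c_i,d_i+1)$, the bound $|U_i|\le 2|\ball(c_i,4r_i)|/R$ from \Cref{bottleneck}, the growth bound $|\ball(c_i,4r_i)|\le 2^{10/\eps}|\ball(c_i,r_i)|$ from \Cref{clustering}, the triangle inequality through $u_{i-1}$, and the identity $|\beta_i|=\dist_H(v_{i-1},u_i)-|\alpha_i[v_i,u_i]|$. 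You are in fact more careful than the paper about the distinction between $\dist_G(v_{i-1},u_i)$ and $\dist_{G[\ball(c_i,4r_i)]}(v_{i-1},u_i)$, which the paper silently conflates; your observation that the natural detour through $u_{i-1}$ may leave the ball, and that one must instead argue about a shortest $v_{i-1}$--$u_i$ path staying within radius $4r_i+O(1)$, is a legitimate point.

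The gap is that you assume a ``small-ball hypothesis'' $|\ball(c_i,r_i)|\le R^{5/3}$ that is not part of the claim. The claim quantifies over all $2\le i\le l-1$, and the balls produced by the process of \Cref{exact2} can be large; indeed the error term $|\ball(c_i,r_i)|/R^{2/3}$, which exceeds $R$ precisely when the ball is large, is shaped to accommodate that case, and the paper's proof has an explicit large-ball branch. For a large ball the algorithm never constructs the boundary subset spanner $L_{c_i}$ (it only builds the BFS tree $T_{c_i}$ and the global spanner $\hat H$ on the hitting set $S$), so your appeal to \Cref{subset} has nothing to point to, and the chain $|U_i|^{3/2}\lesssim |\ball(c_i,r_i)|/R^{2/3}$ that you derive from $|\ball(c_i,r_i)|\le R^{5/3}$ is unavailable. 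The missing case is short but must be stated: if $|\ball(c_i,r_i)|>R^{5/3}$ then $|\ball(c_i,r_i)|/R^{2/3}>R$, so the trivial radius bound $\dist_H(v_{i-1},u_i)\le 5Rn^\eps$ is already $\tilde O\left(n^\eps\cdot |\ball(c_i,r_i)|/R^{2/3}\right)$, and since $|\alpha_i[v_i,u_i]|\le \dist_H(v_{i-1},u_i)$ is absorbed by the same term, the second inequality follows with no spanner inside the ball at all.
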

	\begin{proof}[Proof of \Cref{clm: beta_i}]
		As the radius of $G[\ball(c_i, r_{i})]$ is at most $Rn^\epsilon$, $\dist_H(v_{i-1}, u_i) \leq 5Rn^\epsilon$.
		
	From the iterative process, $v_{i-1}, u_i\in \ball^=(c_i, d_{i})\cup\ball^=(c_i, d_{i}+1)$. If the ball $\ball(c_i, r_{i})$ is small, then by the construction of subset spanners, $$\begin{aligned}
			\dist_H(v_{i-1}, u_i) &\leq \dist_{G}(v_{i-1}, u_i) + \tilde{O}\left(\left(\frac{|\ball(c_i, 4r_{i})|}{R}\right)^{3/2}\cdot n^\epsilon\right)\\
			&\leq \dist_{G}(v_{i-1}, u_i) + \tilde{O}\left( \frac{|\ball(c_i, r_{i})|}{R^{2/3}}\cdot 2^{15/\epsilon}\cdot n^\epsilon\right).
		\end{aligned}$$
		If $\ball(c_i, r_{i})$ is large, then by definition, $|\ball(c_i, r_{i})|\geq R^{5/3}$, and therefore,
		$$\begin{aligned}
			\dist_H(v_{i-1}, u_i) &\leq 5Rn^\epsilon\leq \tilde{O}\left(\frac{|\ball(c_i, r_{i})|}{R^{2/3}}\cdot n^\epsilon\right).
		\end{aligned}$$
	
		Finally, as $\dist_H(v_{i-1}, u_i) = |\beta_i| + |\alpha_i[v_i, u_i]|$, $\dist_{G}(v_{i-1}, u_i)\leq |\pi[u_{i-1}, u_i]| + |\alpha_{i-1}[v_{i-1}, u_{i-1}]|$. The claim now follows by rearranging the terms to yield the inequality.
	\end{proof}

	Summing all $2\leq i\leq l-1$ for the above claim, and note that $|\beta_0|, |\beta_l| \leq 5Rn^\epsilon, |\alpha_1|\leq 5\cdot 5Rn^\epsilon$. Property $(e)$ now follows.
\end{proof}

Eventually, we are ready to analyze the additive error of any pairs of vertices in $V$.
\begin{claim}[additive error]
\label{clm: additive error}
For every $s, t\in V$, $\dist_H(s, t)\leq \dist_{G}(s, t) + \tilde{O}(Rn^\epsilon) + \tilde{O}(n^{1+\epsilon} / R^{4/3})$.
\end{claim}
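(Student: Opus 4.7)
The plan is to apply \Cref{exact2} to the unique shortest $s$-$t$ path $\pi$ in $G'$, handling large balls via the hitting-set subset spanner $\hat H$ and small balls via their local subset spanners $L_c$. The preprocessing step (\Cref{preproc}) contributes the $\tilde O(n^{1+\eps}/R^{4/3})$ additive term by picking the parameter $d=\Theta(R^{4/3})$, yielding $\dist_{G'}(s,t)\le \dist_G(s,t)+\tilde O(n/d)=\dist_G(s,t)+\tilde O(n^{1+\eps}/R^{4/3})$. Hence it suffices to bound $\dist_H(s,t)-\dist_{G'}(s,t)$ by $\tilde O(Rn^\eps)+\tilde O(n^{1+\eps}/R^{4/3})$.

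I would then case-split on whether $\pi$ traverses any large ball of $\bset$. In Case A, when $\pi$ contains no large ball, I apply \Cref{exact2} directly to $\pi$: all balls $\ball(c_i,r_i)$ in the resulting sequence are small, and a distinctness argument in the spirit of \Cref{different-balls} shows they are pairwise different (if $\ball(c_{i+1},d_{i+1})$ repeated an earlier $\ball(c_j,d_j)$, the ``closest-to-$t$'' choice of $u_{i+1}$ would coincide with $u_{j+1}$, contradicting the monotone ordering along $\pi$). By property (e) together with \Cref{clustering}'s bound $\sum_{\ball(c,r)\in\bset}|\ball(c,4r)|=O(n^{1+\eps}/\eps)$, the sum $n^\eps\sum_i|\ball(c_i,r_i)|/R^{2/3}$ is controlled; combined with the fact that a path avoiding all large balls cannot be too long (its $l=O(|\pi|/R)$ balls all have size $\le R^{5/3}$), this fits into the target $\tilde O(n^{1+\eps}/R^{4/3})$ error.

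In Case B, let $a,b$ be the first and last indices in the \Cref{exact2} sequence with $\ball(c_a,r_a)$ and $\ball(c_b,r_b)$ large. Since each large ball has $>R^{5/3}$ vertices and $|S|=\Theta(R^{2/3}\log n)$ hits every such ball with high probability (using $R\ge n^{3/7}$, so $n/R^{2/3}\le R^{5/3}$), we can pick $p_a\in S\cap\ball(c_a,r_a)$ and $p_b\in S\cap\ball(c_b,r_b)$. I construct a path in $H$ as the concatenation of (i) an \Cref{exact2}-path from $s$ to $p_a$, whose ball sequence contains only small balls and is bounded as in Case A; (ii) the shortest $p_a$-to-$p_b$ path in $\hat H$, which by \Cref{subset} has length at most $\dist_{G'}(p_a,p_b)+\tilde O(|S|^{3/2}n^\eps)=\dist_{G'}(p_a,p_b)+\tilde O(Rn^\eps)$; and (iii) a symmetric \Cref{exact2}-path from $p_b$ to $t$. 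A triangle-inequality argument mirroring \Cref{C0-ineq} and \Cref{hitset}, using that the radii of $\ball(c_a,r_a)$ and $\ball(c_b,r_b)$ are at most $Rn^\eps$, bounds $\dist_{G'}(s,p_a)+\dist_{G'}(p_a,p_b)+\dist_{G'}(p_b,t)\le \dist_{G'}(s,t)+O(Rn^\eps)$, completing the bound.

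The main obstacle I anticipate is the tight accounting of the small-ball contribution in Case A, and equivalently in the prefix/suffix of Case B: the naive bound $\sum_i|\ball(c_i,r_i)|/R^{2/3}\le \tilde O(n^{1+\eps}/R^{2/3})$ is weaker than the claimed $\tilde O(n^{1+\eps}/R^{4/3})$, and closing this gap requires simultaneously exploiting the preprocessing sparsification $|E(G')|\le O(nR^{4/3})$, the distinctness of balls in the \Cref{exact2} sequence, and the observation that any sufficiently long shortest path in $G'$ must necessarily traverse a large ball (so that Case A governs only paths short enough to be absorbed into $\tilde O(n^{1+\eps}/R^{4/3})$ directly).
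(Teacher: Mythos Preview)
Your case split on whether $\pi$ meets a large ball is the wrong pivot, and the obstacle you flag at the end is real and not resolved by the considerations you list. In Case~A, even if all balls $\ball(c_i,r_i)$ in the \Cref{exact2} sequence are small and pairwise distinct, the only general bound on $\sum_i |\ball(c_i,r_i)|$ is $\tilde O(n)$ (from the disjointness property of the clustering), giving error $\tilde O(n^{1+\eps}/R^{2/3})$, not $\tilde O(n^{1+\eps}/R^{4/3})$. Your proposed rescue---that a path avoiding all large balls must be short---is false: nothing prevents a shortest path from threading through $\Theta(n/R)$ distinct small balls whose total size is $\Theta(n)$. The same failure recurs in the prefix $s\to u_a$ and suffix $u_b\to t$ in Case~B: before the first large ball there may already be many small balls with total size $\Theta(n)$.

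The paper's fix is to case not on the \emph{type} of balls but on the \emph{cumulative} ball size $\sum_{i=1}^l |\ball(c_i,r_i)|$. If this sum is at most $n/R^{2/3}$, property~(e) of \Cref{exact2} directly yields error $\tilde O(Rn^\eps)+\tilde O(n^{1+\eps}/R^{4/3})$. Otherwise, let $a$ be the smallest index with $\sum_{i\le a}|\ball(c_i,r_i)|>n/R^{2/3}$ and $b$ the largest with $\sum_{i\ge b}|\ball(c_i,r_i)|>n/R^{2/3}$. The hitting property you need is not that $S$ hits each large ball, but that a uniform random set of $\Theta(R^{2/3}\log n)$ vertices hits \emph{any} set of $\ge n/R^{2/3}$ vertices w.h.p.; applied to $\bigcup_{i\le a}\ball(c_i,r_i)$ and $\bigcup_{i\ge b}\ball(c_i,r_i)$ this yields $x\le a$ and $y\ge b$ with $S$ meeting $\ball(c_x,r_x)$ and $\ball(c_y,r_y)$. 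Now the prefix and suffix sums are at most $n/R^{2/3}$ by the minimality/maximality of $a,b$, so \Cref{exact2} bounds $\dist_H(s,u_x)$ and $\dist_H(u_y,t)$ with the right error, and $\hat H$ handles the middle. Your attribution of the $\tilde O(n^{1+\eps}/R^{4/3})$ term solely to the preprocessing \Cref{preproc} is also misleading: that term arises here, from $n^\eps\cdot(n/R^{2/3})/R^{2/3}$, independently of the sparsification.
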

\begin{proof}
	Let $\pi$ be a shortest path between $s, t$ in $G$. We apply the algorithm from \Cref{exact2} to $\pi$, and obtain a path $\phi$ as well as all the other auxiliary sequences, such that:
	$$|\phi|\leq \dist_{G}(s, t) + 15\cdot Rn^\epsilon + \tilde{O}\left(2^{15/\epsilon}\cdot n^\epsilon\cdot\sum_{i=2}^{l-1}\frac{|\ball(c_i, r_{i})|}{R^{2/3}}\right).$$
	If $\sum_{i=1}^{l}|\ball(c_i, r_{i})|\leq n/R^{2/3}$, then  we are done. Otherwise, let $a$ be the smallest index such that $\sum_{i=1}^a |\ball(c_i, r_{i})| > n/R^{2/3}$, and let $b$ be the largest index such that $\sum_{i=b}^l |\ball(c_i, r_{i})| > n/R^{2/3}$. Then, by construction of $S$, with high probability, there exist indices $1\leq x\leq a, b\leq y\leq l$ such that $\ball(c_x, r_{x})\cap S\neq \emptyset, \ball(c_y, r_{y})\cap S\neq \emptyset$; we can assume $x\leq y$ by selecting the smallest choice of $x$ and the largest choice of $y$. Take two vertices $s_1\in \ball(c_x, r_{x})\cap S$ and $s_2\in \ball(c_y, r_{y})\cap S$. Since $H$ contains a subset spanner $\hat H$ on $S$,
	$$\begin{aligned}
		\dist_H(s_1, s_2)&\leq \dist_G(s_1, s_2) + \tilde{O}(|S|^{3/2}n^\epsilon)\leq \dist_G(s_1, s_2) + \tilde{O}(Rn^\epsilon)\\
		&\leq \dist_G(u_x, u_y) + \dist_G(u_x, s_1) + \dist_G(u_y, s_2) + \tilde{O}(Rn^\epsilon)\\
		&\leq \dist_G(u_x, u_y) + \tilde{O}(Rn^\epsilon).
	\end{aligned}$$
	
	Using similar arguments in the proof of \Cref{exact2}, we can show that:
	$$\dist_H(s, u_x)\leq \dist_G(s, u_x) + 15\cdot Rn^\epsilon + \left(2^{15/\epsilon}\cdot n^\epsilon\cdot\sum_{i=2}^{x-1}\frac{|\ball(c_i, r_{i})|}{R^{2/3}}\right),$$
	$$\dist_H(u_y, t)\leq \dist_G(u_y, t) + 15\cdot Rn^\epsilon + \left(2^{15/\epsilon}\cdot n^\epsilon\cdot\sum_{i=y+1}^{l-1}\frac{|\ball(c_i, r_{i})|}{R^{2/3}}\right).$$
	Summing over all three inequalities completes the proof.
\end{proof}

Setting $R=\ceil{n^{3/7}}$, the above claim implies that the additive error of $H$ is $\tilde O(n^{3/7+\eps})$.

\subsubsection{Size and runtime analysis}

\begin{claim}[spanner size]
\label{clm: spanner size}
$|E(H)|=2^{O(1/\epsilon)}\cdot n\log n$.
\end{claim}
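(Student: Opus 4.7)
The plan is to decompose $|E(H)|$ according to the three components comprising $H$ at the end of \Cref{sec: subquad for 3/7}, and bound each piece via the disjointness guarantees of \Cref{clustering} and the size guarantee of \Cref{subset}. Concretely, I would write
\[
|E(H)|\;\le\;\sum_{\ball(c,r)\in\balls}|E(T_c)|\;+\sum_{\substack{\ball(c,r)\in\balls\\ \text{small}}}|E(L_c)|\;+\;|E(\hat H)|,
\]
and handle the three sums independently.

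For the BFS trees, $|E(T_c)|\le|\ball(c,4r)|$, so the first sum is at most $\sum_c|\ball(c,4r)|$. The clustering was obtained by invoking \Cref{clustering} with parameter $\frac{10}{\epsilon\log_2 n}$, for which $n^{\epsilon'}=2^{10/\epsilon}$ and $1/\epsilon'=\Theta(\epsilon\log n)$; this gives $\sum_c|\ball(c,4r)|=O(n\cdot 2^{10/\epsilon}\cdot \epsilon\log_2 n/10)=2^{O(1/\epsilon)}\cdot n$, where the polylogarithmic factor in $n$ is absorbed into $2^{O(1/\epsilon)}$ (this is already the convention used in the expression $\tilde O(2^{10/\epsilon}n)$ highlighted in the paragraph defining $T_c$). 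The same bound also controls $\sum_c|\ball(c,4r)|$ for the remaining components.

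For each small ball $\ball(c,r)$, the subgraph $L_c$ is a subset spanner of $G'[\ball(c,4r)]$ on $\ball^=(c,d)\cup\ball^=(c,d+1)$ produced by \Cref{subset} (applied with parameter $\epsilon$), so $|E(L_c)|=2^{O(1/\epsilon)}\cdot|\ball(c,4r)|$. Summing over small balls and plugging in the disjointness bound above yields $\sum_c|E(L_c)|\le 2^{O(1/\epsilon)}\sum_c|\ball(c,4r)|=2^{O(1/\epsilon)}\cdot n$. Similarly, $\hat H$ is a subset spanner of $G'$ on $S$ produced by \Cref{subset}, so directly $|E(\hat H)|=2^{O(1/\epsilon)}\cdot n$. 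Combining the three bounds gives $|E(H)|=2^{O(1/\epsilon)}\cdot n$.

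There is no real obstacle here; the argument is pure bookkeeping built on top of the lemmas already established, and the only minor point to be careful about is that the polylogarithmic factors that come from invoking \Cref{clustering} with parameter $\frac{10}{\epsilon\log_2 n}$ are treated as being swallowed by $2^{O(1/\epsilon)}$, as is consistent with the way the subset-spanner size bound is stated in \Cref{subset}.
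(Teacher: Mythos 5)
Your proposal is correct and matches the paper's own argument: both decompose $|E(H)|$ into the BFS trees $T_c$, the small-ball subset spanners $L_c$, and the global subset spanner $\hat H$, bound each via \Cref{clustering} and \Cref{subset}, and absorb the lower-order factors into $2^{O(1/\epsilon)}$. Your explicit accounting of the $\frac{10}{\epsilon\log_2 n}$ parameter substitution is if anything slightly more careful than the paper's.
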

\begin{proof}
From \Cref{subset}, each subset spanner $L_c$ within $G[\ball(c, 4r)]$ contains $2^{O(1/\epsilon)}|\ball(c, 4r)|$ edges, and so $\sum_{c}|E(L_c)|=2^{O(1/\epsilon)}\cdot 2^{O(1/\epsilon)}n\log n=2^{O(1/\epsilon)}n\log n$. Similarly, the global subset spanner $\hat H$ satisfies that $|E(\hat H)|=2^{O(1/\epsilon)}n\log n$. 
Additionally, $\sum_{c}|E(T_c)|=\sum_{c}|\ball(c, 4r)|=2^{O(1/\epsilon)}n\log n$ edges. Altogether, we get that $|E(H)|=2^{O(1/\epsilon)}\cdot n\log n$.
\end{proof}

\begin{claim}[runtime]
\label{clm: runtime}
The runtime of the algorithm is $\tilde O(m+|E(G')|\cdot 2^{O(1/\epsilon)}\cdot R^{2/3})$.
\end{claim}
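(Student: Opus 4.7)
The plan is to account for the runtime of each step of the algorithm and verify that the dominant term is $\tilde O(|E(G')|\cdot 2^{O(1/\epsilon)}\cdot R^{2/3})$. First, the preprocessing via \Cref{preproc} takes $\tilde O(m)$ time and produces $G'$; this contributes the additive $\tilde O(m)$ term.

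Next, I apply the clustering algorithm from \Cref{clustering} to $G'$ with parameters $R$ and $\epsilon'=\frac{10}{\epsilon \log_2 n}$, so that $n^{\epsilon'}=2^{10/\epsilon}$. By \Cref{clustering}, the running time of clustering is $O(|E(G')|\cdot n^{\epsilon'}/\epsilon')=\tilde O(|E(G')|\cdot 2^{10/\epsilon})$. Constructing the BFS trees $T_c$ and finding the bottleneck radii via \Cref{bottleneck} in each ball costs $O(\vol_{G'}(\ball(c,4r)))$ each, so by the coverage bound $\sum_{\ball(c,r)\in\bset}\vol_{G'}(\ball(c,4r))=\tilde O(|E(G')|\cdot 2^{10/\epsilon})$ from \Cref{clustering}, the total cost of these two steps is also $\tilde O(|E(G')|\cdot 2^{10/\epsilon})$.

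The main step to analyze is the construction of the subset spanners. For each small ball $\ball(c,r)$, I invoke \Cref{subset} on $G'[\ball(c,4r)]$ with subset $U_c=\ball^=(c,d)\cup\ball^=(c,d+1)$. By definition $|\ball(c,r)|\leq R^{5/3}$, and by \Cref{clustering} we have $|\ball(c,4r)|\leq n^{\epsilon'}|\ball(c,r/2)|\leq 2^{10/\epsilon}R^{5/3}$, so $|U_c|\leq 2|\ball(c,4r)|/R\leq 2\cdot 2^{10/\epsilon}\cdot R^{2/3}$. Plugging this into the runtime bound of \Cref{subset} gives a per-ball cost of
\[
O\bigl(\vol_{G'}(\ball(c,4r))\cdot(|U_c|+2^{O(1/\epsilon)})\bigr)=O\bigl(\vol_{G'}(\ball(c,4r))\cdot 2^{O(1/\epsilon)}\cdot R^{2/3}\bigr).
\]
Summing over all small balls and again using $\sum_c\vol_{G'}(\ball(c,4r))=\tilde O(|E(G')|\cdot 2^{10/\epsilon})$ yields $\tilde O(|E(G')|\cdot 2^{O(1/\epsilon)}\cdot R^{2/3})$ total time for handling the small balls. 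This is the step I expect to carry the whole bound; the main point is that the $|U|$-dependent term in \Cref{subset} is the bottleneck and must be controlled by the ball-size bound together with \Cref{bottleneck}.

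Finally, for the large-ball step, $|S|=\ceil{10R^{2/3}\log n}$ and a single call to \Cref{subset} on $G'$ with subset $S$ costs $O(|E(G')|(|S|+2^{O(1/\epsilon)}))=\tilde O(|E(G')|\cdot(R^{2/3}+2^{O(1/\epsilon)}))$, which is within the claimed bound. Taking the union of $H$ is linear in the sizes of the pieces, which is already absorbed by \Cref{clm: spanner size}. Summing all contributions gives the desired runtime $\tilde O(m+|E(G')|\cdot 2^{O(1/\epsilon)}\cdot R^{2/3})$.
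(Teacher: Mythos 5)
Your proof is correct and follows essentially the same route as the paper's: charge the preprocessing to $\tilde O(m)$, the clustering and BFS trees to $2^{O(1/\eps)}\cdot|E(G')|$, and bound the small-ball subset-spanner calls by combining the runtime of \Cref{subset} with the bound $|U_c|\le 2|\ball(c,4r)|/R\le 2^{O(1/\eps)}R^{2/3}$ and the volume sum from \Cref{clustering}. You are in fact slightly more complete than the paper, which leaves the (easy) accounting for the global subset spanner $\hat H$ on $S$ implicit.
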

\begin{proof}
The runtime for computing $G'$ is $\tilde O(m)$.
From \Cref{clustering}, the set $\balls$ of balls can be computed in time $2^{O(1/\epsilon)}\cdot |E(G')|$.
The runtime for computing BFS trees within balls in $\balls$ is $2^{O(1/\epsilon)}\cdot |E(G')|$.
From \Cref{subset}, the runtime for computing subset spanners within small balls is at most
\[
\begin{aligned}
\sum_{c}O\bigg(\vol_{G'}(\ball(c,4r))\cdot \bigg(2^{O(1/\eps)}+\frac{2|\ball(c,4r)|}{R}\bigg)\bigg)
& \le \sum_{c}O\bigg(\vol_{G'}(\ball(c,4r))\cdot 2^{O(1/\eps)}\cdot R^{2/3}\bigg)\\
& \le m \cdot 2^{O(1/\eps)}\cdot R^{2/3}.
\end{aligned}
\]
The claim now follows.
\end{proof}

Since we set $R=\ceil{n^{3/7}}$, the above claim implies that the runtime of the algorithm for computing $H$ is $\tilde O(m+2^{O(1/\eps)}\cdot n^{13/7})$, as $|E(G')| = O(n^{11/7})$.

\subsection{Completing the proof of \Cref{subquad}}
\label{sec: proof of subquad}

In this subsection, we complete the proof of \Cref{subquad} by slightly modifying the algorithm in \Cref{sec: subquad for 3/7} and applying the lemma and the algorithm recursively. Specifically, we first prove the following lemma.

\begin{lemma}
\label{lem: reduction}
Let $f(\rho)=\frac{2/3-\rho}{4-(19/6)\rho}$ be a function.
If there is an algorithm $\alg$, that given any graph $G$ on $n$ vertices and $m$ edges, in time $\tilde O(m+n^{\gamma})$ computes an $+O(n^{\rho})$ spanner of $G$ with at most $Cn$ edges, such that $\gamma\ge 1+\frac{(3/2)f(\rho)(1-\rho)}{3/2-\rho}$, then for any parameter $\epsilon>0$, there is an algorithm $\alg'$, that given any graph $G'$ on $n$ vertices and $m$ edges, in time $\tilde O(m+2^{O(\gamma/\eps)}n^{\gamma})$ computes an $+O(n^{\epsilon+f(\rho)})$ spanner of $G'$ with $\brac{2^{O(1/\eps)}\cdot Cn\cdot\log n}$ edges.
\end{lemma}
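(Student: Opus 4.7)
My plan is to adapt the algorithmic framework of Theorem \ref{subquad for 3/7} by using the given algorithm $\alg$ as a subroutine at two different places inside it: within each ball of the clustering (to improve the per-ball additive error) and at the global layer (to improve the hitting-set subset spanner). The net effect will be to replace the hard-coded $|S|^{3/2}$ subset-spanner bound, which gave the $n^{3/7}$ exponent, by a bound that interpolates with $\alg$'s error exponent $\rho$, so that parameter optimization produces total additive error $n^{f(\rho)+\epsilon}$.

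Concretely, I will first invoke Lemma \ref{preproc} with a well-chosen degree threshold so that the resulting sparsified graph has $O(n^{\gamma})$ edges and the extra additive error introduced is negligible against the target $n^{f(\rho)+\epsilon}$; this is what makes the $m$-independent part of the runtime achievable. Next, I will compute a ball decomposition via Lemma \ref{clustering} with radius $R$ (to be chosen at the end) and precision parameter $\epsilon/(10\gamma)$, and for each ball $\ball(c,r)$ install both a BFS tree rooted at $c$ and a boundary subset spanner via Lemma \ref{subset}, just as in the proof of Theorem \ref{subquad for 3/7}. The first new ingredient is that I will additionally call $\alg$ on $G[\ball(c,4r)]$ to obtain a $+|\ball|^\rho$ additive spanner with at most $C\cdot |\ball(c,4r)|$ edges. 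The second new ingredient is that I will replace the global hitting-set subset spanner of Lemma \ref{subset} by one built with the help of $\alg$, giving a subset spanner on a random hitting set $S$ of size $\tilde O(n/L)$ with additive error better than $|S|^{3/2}$ in the regime where $\alg$ helps.

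The stretch analysis will closely mimic Claim \ref{clm: additive error} of Theorem \ref{subquad for 3/7}. For any pair $s,t$, I will decompose an $s$-$t$ shortest path into ball-hosted subpaths by reusing the iterative scheme of Claim \ref{exact2}, and upper-bound the additive error on each subpath by the minimum of the boundary-subset-spanner term $(|\ball|/R)^{3/2}n^\epsilon$ and the new $\alg$ term $|\ball|^\rho$; summing over subpaths and applying the hitting-set argument to the ``long'' portion of the path, the total error will split into three terms whose three-way balance yields the specific exponent $f(\rho) = \tfrac{2/3-\rho}{4-(19/6)\rho}$ after an elementary computation.

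The size bound $2^{O(1/\epsilon)}Cn$ will follow from summing the contributions of BFS trees, boundary subset spanners, and the new $\alg$-calls, using the coverage property $\sum|\ball(c,4r)| = O(2^{O(1/\epsilon)}n)$ from Lemma \ref{clustering}. The runtime will be dominated by running $\alg$ inside each ball, which costs $\sum \tilde O(|\ball|^{\gamma})$ in total; by concavity of $x \mapsto x^\gamma$ together with $\sum|\ball(c,4r)| \le 2^{O(1/\epsilon)}n$, this sum is $\tilde O(2^{O(\gamma/\epsilon)}n^\gamma)$. The precondition $\gamma \ge 1+\tfrac{(3/2)f(\rho)(1-\rho)}{3/2-\rho}$ is precisely what is needed to absorb the cost of the additional subset-spanner calls from Lemma \ref{subset} into the $\tilde O(n^\gamma)$ budget. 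The main obstacle I anticipate is obtaining the per-ball stretch bound in exactly the right form so that the balance equation produces $f(\rho)$ on the nose, and coordinating the two uses of $\alg$ (inside balls and globally on $S$) so that their errors combine additively rather than multiplicatively.
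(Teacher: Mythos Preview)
Your core idea matches the paper: run the framework of Theorem~\ref{subquad for 3/7} with $R=\lceil n^{f(\rho)}\rceil$ and, within each ball, bound the per-segment additive error by the better of the boundary subset-spanner term $(|\ball|/R)^{3/2}n^\eps$ and the $\alg$ term $|\ball|^\rho$. The paper phrases this as a small/large split at threshold $n^{g(\rho)}$ with $g(\rho)=\tfrac{(3/2)f(\rho)}{3/2-\rho}$, invoking $\alg$ only on large balls and Lemma~\ref{subset} only on small ones; this is equivalent to your ``take the minimum.''

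Your \emph{second} new ingredient, however, is both unnecessary and a potential gap. The paper does \emph{not} use $\alg$ to improve the global subset spanner on the hitting set $S$; it keeps $|S|=\lceil 10R^{2/3}\log n\rceil$ exactly as in Section~\ref{sec: subquad for 3/7} and applies Lemma~\ref{subset} directly, which already gives error $|S|^{3/2}n^\eps=Rn^\eps=n^{f(\rho)+\eps}$, the target. With your proposed $|S|=\tilde O(n/L)$ and $L$ the large-ball threshold $n^{g(\rho)}$, Lemma~\ref{subset} alone yields error $n^{(3/2)(1-g(\rho))}$, which is strictly larger than $n^{f(\rho)}$; so you would genuinely need the ``$\alg$-improved subset spanner'' you allude to, but you give no construction, and there is no evident black-box way to turn a $+n^\rho$ all-pairs spanner into a better-than-$|U|^{3/2}$ subset spanner. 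Drop this ingredient entirely and use $|S|=R^{2/3}$; the balance between $Rn^\eps$ and the prefix/suffix sum $n^{1-(2/3)f(\rho)}/n^{(1-\rho)g(\rho)}$ then yields $f(\rho)$ with no further work.

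One minor correction: $x\mapsto x^\gamma$ is convex for $\gamma>1$, not concave. The bound $\sum_c |\ball(c,4r)|^\gamma \le 2^{O(\gamma/\eps)}n^\gamma$ follows instead from $\sum_c x_c^\gamma \le (\max_c x_c)^{\gamma-1}\sum_c x_c \le n^{\gamma-1}\cdot 2^{O(1/\eps)}n$.
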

\begin{remark}
	If we do not care about the runtime constraint, then the spanner size of $G'$ can be made $O\brac{2^{O(1/\epsilon)}Cn}$ (that is, without the logarithmic factor) by replacing \Cref{clustering} with the original Lemma 13 from \cite{bodwin2021better}.
\end{remark}

We now use \Cref{lem: reduction} to prove
\Cref{subquad}. 
We set $\eps>0$ as a small enough constant.
Note that \Cref{subquad for 3/7} in fact gives an algorithm with parameter $(\gamma=13/7,\rho=3/7+0.1, C=O(1))$ and we denote it by $\alg_0$. We then apply \Cref{lem: reduction} with $\alg=\alg_0$, and denote by $\alg_1$ the algorithm that it produces, so $\alg_1$ has produce an $+O(n^{\epsilon+f(\rho)})$ spanner.
Note that the invariant point $\rho^*$ of the mapping $f$ (i.e., the value of $0<\rho^*<1$ such that $f(\rho^*)=\rho^*$) is $\rho^*=\frac{15-\sqrt{54}}{19}=0.4027...$.
We then iteratively apply \Cref{lem: reduction} for $K$ times (where $K$ is a large enough constant such that $f(f(\cdots f(3/7+0.1+\eps)\cdots)+\eps)+\eps<0.403$), and get algorithms $\alg_2,\ldots,\alg_K$. It is not hard to verify that the property $\gamma\ge 1+\frac{(3/2)f(\rho)(1-\rho)}{3/2-\rho}$ always holds.
Eventually, $\alg_K$ is the algorithm that we return. Note that the additive error of the spanner it produces is $+O(n^{0.403})$, the running time is $\tilde O(m+n^{13/7}\cdot 2^{O((13/7)\cdot (K/\eps))})$, which is $\tilde O(m+n^{13/7})$ as $1/\eps$ and $K$ are both constants, and the size of the spanner it produces is $2^{O((13/7)\cdot (K/\eps))}\cdot Cn\log^{O(K)}n=\tilde{O}(n)$.

We now sketch the proof of \Cref{lem: reduction}, highlighting the difference between the algorithm here and the algorithm in \Cref{sec: subquad for 3/7}.

\begin{proof}[Proof Sketch of \Cref{lem: reduction}]
The algorithm for \Cref{lem: reduction} is very similar to the algorithm for \Cref{subquad for 3/7}, except for (i) an extra Step 4 below, which is a recursive call of an additive spanner algorithm; and (ii) more fine-grained tuning of parameters. We define the function $g(\rho)=\frac{(3/2)\cdot f(\rho)}{(3/2-\rho)}$.

\paragraph{Step 1.} Sparsify $G$ to get $G'\subseteq G$ using \Cref{preproc} with $d=O(n^{1-f(\rho)})$ and $|E(G')|=O(n^{2-f(\rho)})$. 

\paragraph{Step 2.} Compute a set $\balls$ of balls using \Cref{clustering} with parameters $R=\ceil{n^{f(\rho)}}$ and $\frac{\eps}{10\log n}$. We say that a ball is \emph{small} if $|\ball(c,r)|\le n^{g(\rho)}$, otherwise we say it is \emph{large}.

\paragraph{Step 3.} For each small ball $\ball(c,r)$, we apply \Cref{bottleneck} to compute an integer $d\in [r, 2r]$ such that $|\ball^=(c, d)\cup\ball^=(c, d+1)|\leq 2|\ball(c, 4r)| / r \leq 2|\ball(c, 4r)| / R$, and then apply \Cref{subset} to compute a subset spanner $L_c$ of $G'[\ball(c, 4r)]$ on the set $\ball^=(c, d)\cup\ball^=(c, d+1)$.

\paragraph{Step 4.} For each large ball $\ball(c,r)$, we apply the algorithm \alg to compute a spanner of $G'[\ball(c, 4r)]$ with error $+O(|\ball(c,r)|^{\rho})$, that we denote by $L_c$.

\paragraph{Step 5.} Sample a random subset $S\subseteq V(G)$ of $\ceil{10R^{2/3}\log n}$ vertices, and apply \Cref{subset} to compute a subset spanner $\hat H$ of $G'$ on $S$ with additive error $O(|S|^{3/2}\cdot n^{\eps})$.

$\ $

The output graph $H$ is simply defined to be the union of 
\begin{itemize}
	\item for each ball $\ball(c,r)\in \balls$, a BFS tree $T_c$ that is rooted at $c$ and spans all vertices in $\ball(c, 4r)$;
	\item for each small or large ball $\ball(c,r)\in \balls$, graph $L_c$;
	\item graph $\hat H$.
\end{itemize}

The analysis is almost identical to the analysis in \Cref{sec: subquad for 3/7}, with the following changes.

\paragraph{Stretch analysis.} In \Cref{clm: beta_i}, the analysis would be changed to
$$\begin{aligned}
\dist_H(v_{i-1}, u_i) &\leq \dist_{G}(v_{i-1}, u_i) + \tilde{O}\left(\max\set{\left(\frac{|\ball(c_i, 4r_{i})|}{n^{f(\rho)}}\right)^{3/2}\cdot n^\epsilon, |\ball(c_i, 4r_{i})|^{\rho}}\right)\\
&\leq \dist_{G}(v_{i-1}, u_i) + \tilde{O}\left( \frac{|\ball(c_i, r_{i})|}{n^{(1-\rho)\cdot g(\rho)}}\cdot 2^{O(1/\epsilon)}\cdot n^\epsilon\right).
\end{aligned}$$
Consequently, the analysis in \Cref{clm: additive error} would be changed to
$$
\begin{aligned}
\dist_H(s, u_x) & \leq \dist_G(s, u_x) + O(n^{\epsilon+f(\rho)}) + O\left(2^{O(1/\epsilon)}\cdot n^\epsilon\cdot\sum_{i=2}^{x-1}\frac{|\ball(c_i, r_{i})|}{n^{(1-\rho)\cdot g(\rho)}}\right)\\
& \le \dist_G(s, u_x) + O(n^{\epsilon+f(\rho)}) + O\left(2^{O(1/\epsilon)}\cdot n^\epsilon\cdot\frac{n^{1-\frac{2}{3}f(\rho)}}{n^{(1-\rho)\cdot g(\rho)}}\right)\\
& \le \dist_G(s, u_x) + 2^{O(1/\epsilon)}\cdot n^{\epsilon+f(\rho)}.
\end{aligned}
$$
and similarly we get that $\dist_H(u_y, t)\leq \dist_G(u_y, t) + 2^{O(1/\epsilon)}\cdot n^{\epsilon+f(\rho)}$. Therefore, the additive error of the algorithm is $2^{O(1/\epsilon)}\cdot n^{\epsilon+f(\rho)}$.

\paragraph{Size and runtime analysis.} Via similar arguments as in the proof of \Cref{clm: spanner size}, we can show that $|E(H)|=O\brac{2^{O(1/\eps)}\cdot Cn\log n}$.
We now analyze the runtime of the algorithm. Ignoring the new Step 4, we can show via similar arguments as in the proof of \Cref{clm: runtime} that the runtime is $\tilde O(m+2^{O(1/\eps)}\cdot n^{1+\frac{(3/2)f(\rho)(1-\rho)}{3/2-\rho}})$. The runtime of Step 4 is
$$\sum_{c: \text{ }\ball(c,r) \text{ large}}\bigg(\tilde O(\vol_{G}(\ball(c,4r)))+|\ball(c,4r)|^{\gamma}\bigg)=\tilde O(m+2^{O(\gamma/\eps)}\cdot n^{\gamma}).$$
As $\gamma\ge 1+\frac{(3/2)f(\rho)(1-\rho)}{3/2-\rho}$, the runtime of the whole algorithm is 
$\tilde O(m+2^{O(\gamma/\eps)}\cdot n^{\gamma})$.
\end{proof}

\section*{Acknowledgment}
We would like to thank anonymous reviewers for the detailed reading and the helpful comments for improving the presentation of this paper, and in particular, for explaining in detail the comparison between the previous work \cite{thorup2006spanners,huang2019thorup,abboud2018hierarchy} and the connection between our results and previous results on $(1+\epsilon,\beta)$-spanners.

\vspace{5mm}
\bibliographystyle{alpha}
\bibliography{ref}

\vspace{5mm}
\appendix
\section{Proof of \Cref{clustering}}
\label{apd: Proof of clustering}

Throughout, we use the parameter $\beta = n^\eps$. 

The algorithm iteratively builds the collection $\balls$. Initially, $\balls = \emptyset$. The algorithm continues to be executed as long as there exists a vertex that is not covered by the collection $\balls$.
We now describe an iteration. First, we pick an arbitrary vertex $c$ that is not covered by the current collection $\balls$,
and add a new ball $\ball(c,r)$ to $\bset$ centered at $c$. Its radius $r$ is determined by the following process: 
\begin{enumerate}[(1)]
	\item Start with $r=R$.
	\item Perform breadth-first search from $c$ in $G$ to compute the ball $\ball(c, 4r)$.
	\item If $|\ball(c, 4r)|\leq \beta\cdot |\ball(c, r/2)|$ and $\vol(\ball(c, 4r))\leq \beta\cdot \vol(\ball(c, r/2))$, then add the ball $\ball(c, r)$ to $\balls$, and terminate the iteration. Otherwise, update $r\leftarrow 4r$ and repeat Steps (2) and (3).
\end{enumerate}

We now proceed to analyze the algorithm. 
First, it is easy to see that the radius of every ball in $\bset$ at the end of the algorithm is at least $R$, as the process of determining the radius of each new ball start with $r=R$ and only increases $r$ afterwards.
Second, from the algorithm, when a new ball is added to $\bset$, its radius $r$ is determined by an iterative process, where in each round, $r$ is increased by a factor of $4$ whenever $|\ball(c, 4r)|> \beta\cdot |\ball(c, r/2)|$ or $\vol(\ball(c, 4r)) > \beta\cdot |\ball(c, r/2)|$. As $|\ball(c, 4r)|$ and $\vol(\ball(c, 4r))$ are bounded by $n$ and $m$ respectively, the number of times that the radius $r$ is increased is at most $\ceil{\log_\beta n} + \ceil{\log_\beta m} < 5/\eps$. Therefore, in the end, $r\le  R\cdot  4^{5/\eps}= R\cdot  2^{10/\eps}$.

We next prove the following observation.

\begin{observation}\label{grow}
At the end of the algorithm, for every vertex $v\in V$, there are at most $5/\eps$ balls $\ball(c, r)$ in $\bset$, such that $\dist_{G}(c,v)\le r/2$.
\end{observation}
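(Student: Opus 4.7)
The plan is to order the balls containing $v$ (in the sense of the observation) by the time they were added to $\balls$ and show that their radii must grow by a factor of at least $4$ each step, which bounds the count by the number of doublings available in the radius range $[R, 2^{10/\eps}R]$.

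First I would fix a vertex $v\in V$ and enumerate the balls $\ball(c_1,r_1),\ball(c_2,r_2),\ldots,\ball(c_k,r_k)$ in $\balls$ with $\dist_G(c_i,v)\le r_i/2$, in the order in which they were added to $\balls$. The crucial structural fact is that, whenever $\ball(c_j,r_j)$ is added to $\balls$, its center $c_j$ is not covered by any previously added ball; in particular, for $i<j$ we must have $c_j\notin \ball(c_i,r_i)$, i.e.\ $\dist_G(c_i,c_j)>r_i$. Combining this with the triangle inequality $\dist_G(c_i,c_j)\le \dist_G(c_i,v)+\dist_G(v,c_j)\le r_i/2+r_j/2$ gives $r_i<r_i/2+r_j/2$, hence $r_j>r_i$.

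Next I would use the fact that the radius of every ball in $\balls$ is of the form $R\cdot 4^s$ for some integer $s\ge 0$, since the radius-determination procedure starts at $R$ and, at each unsuccessful round, multiplies the current radius by $4$. Combined with the strict inequality $r_j>r_i$ just established, this forces $r_j\ge 4r_i$ for every $j>i$, and therefore $r_k\ge 4^{k-1}r_1\ge 4^{k-1}R$. Since we have already shown that $r_k\le 2^{10/\eps}\cdot R$, we obtain $4^{k-1}\le 2^{10/\eps}$, giving $k-1\le 5/\eps$ and hence $k\le 5/\eps+1$, which (after a negligible tightening, or absorbing the $+1$) yields the desired bound of $5/\eps$.

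I do not expect any real obstacle here: the argument is essentially a packing-by-doubling calculation, and the two ingredients it needs, namely that centers of later balls are outside earlier balls and that radii lie on the geometric grid $\{R\cdot 4^s\}$ with $s\le 5/\eps$, are both immediate from the algorithm description. The only mildly delicate point is to make sure the ordering of balls by time coincides with the ordering by radius in a way compatible with the inequality $r_j\ge 4r_i$; this is taken care of automatically once one observes that the uncovered-center property does not depend on relative radii, so any pair $(i,j)$ with $i<j$ in the time order gives $\dist_G(c_i,c_j)>r_i$ regardless of which radius is larger.
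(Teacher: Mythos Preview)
Your proof is correct and follows essentially the same approach as the paper: both arguments use the uncovered-center property together with the triangle inequality to show that radii of balls capturing $v$ strictly increase along the time order, then exploit that radii lie on the geometric grid $\{R\cdot 4^s\}$ to upgrade this to a factor-$4$ increase, and finally count against the radius range $[R,2^{10/\eps}R]$. The paper packages the first step as a separate sub-observation (that any earlier half-ball meeting the new half-ball has radius at most a quarter of the new one), but the content is identical to your inline argument; the off-by-one you flag is also present in the paper and is harmless.
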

\begin{proof}
We first prove the following observation.
\begin{observation}\label{cover}
	When a new ball $\ball(c, r)$ is added to the collection $\bset$, for any other ball $\ball(c', r')$ in $\balls$ with $\ball(c', r'/2)\cap \ball(c, r/2)\neq\emptyset$, $r' \leq r/4$ must hold.
\end{observation}
\begin{proof}
	Suppose otherwise that $r'> r/4$. As all radius are integral powers of $4$, $r' \geq r$. Therefore, $\dist_{G}(c, c')\leq r'/2 + r/2 \leq r'$, and so $c\in\ball(c', r')$, which means that $c$ was covered by the collection $\bset$ before the ball $\ball(c, r)$ is added, a contradiction.
\end{proof}

We say that a vertex $v$ is \emph{captured} by a ball $\ball(c, r)$ if $\dist_{G}(c,v)\le r/2$.
From \Cref{cover}, when $v$ is captured by a new ball $\ball(c, r)$, its radius $r$ is at least $4$ times the radius of any other ball in $\bset$ that captures $v$. As we have shown that the radius of every ball in $\bset$ is at least $R$ and at most $R\cdot 4^{5/\eps}$, the number of balls  in $\bset$ that captures $v$ is at most $5/\eps$.
\end{proof}

From \Cref{grow}, at the end of the algorithm, each vertex in $G$ is occupied by at most $O(1/\eps)$ balls in $\balls$. Therefore, $\sum|\ball(c, r/2)|\leq  O(n/\eps)$; and
$\sum|\ball(c, 4r)|\leq \beta\cdot\sum |\ball(c, r/2)|\leq O(n^{1+\eps}/\eps)$.
Similarly, $\sum\vol(\ball(c, 4r))\leq \beta\cdot\sum \vol(\ball(c, r/2))\leq \beta\cdot O(m/\eps) = O(m\cdot n^{\eps}/\eps)$.

Finally, note that when we add a ball $\ball(c, r)$ to $\bset$, the running time of the algorithm in that iteration is $O(\vol(\ball(c, 4r)))$. Therefore, algorithm terminates in time $O\big(\sum\vol(\ball(c, 4r))\big)\leq O(m\cdot n^{\eps}/\eps)$.

\end{document}